\DeclareMathOperator{\PPP}{P}
\DeclareMathOperator{\PNP}{NP}
\DeclareMathOperator{\PRP}{RP}
\def\CC{\mathbb{C}}
\def\F{\mathbb{F}}
\def\per{\textup{per}}
\DeclareMathOperator{\range}{Range}
\DeclareMathOperator{\SDP}{SDP}
\def\N{\mathcal{N}}
\title{On approximability of the \\Permanent of PSD matrices}
\author{Farzam Ebrahimnejad}
\affil{\small University of Washington, \textsf{febrahim@cs.washington.edu}}
\author{Ansh Nagda}
\affil{\small UC Berkeley, \textsf{anshnagda@gmail.com}}
\author{Shayan Oveis Gharan}
\affil{\small University of Washington, \textsf{shayan@cs.washington.edu}} 
\begin{document}
\maketitle 
\begin{abstract}
    We study the complexity of approximating the permanent of a positive semidefinite matrix $A\in \C^{n\times n}$.
    \begin{enumerate}
        \item We design a new approximation algorithm for $\per(A)$ with approximation ratio $e^{(0.9999 + \gamma)n}$, exponentially improving upon the current best bound of $e^{(1+\gamma-o(1))n}$ \cite{AGOS17,YP22}. Here, $\gamma \approx 0.577$ is Euler's constant.
        \item We prove that it is $\PNP$-hard to approximate $\per(A)$ within a factor $e^{(\gamma-\eps)n}$ for any $\eps>0$. This is the first exponential hardness of approximation for this problem. Along the way, we prove optimal hardness of approximation results for the $\|\cdot\|_{2\to q}$ ``norm'' problem of a matrix for all $-1 < q < 2$.
    \end{enumerate}
\end{abstract}
\thispagestyle{empty}

\newpage
\clearpage
\setcounter{page}{1}

\section{Introduction}
%Let $f:\R_{\geq 0}\to \R$ be continuous and injective. 
 Given a matrix $A\in \C^{n\times n}$, the permanent of $A$ is defined as
$$ \per(A) = \sum_{\sigma\in \mathbb{S}_n} \prod_{i=1}^n A_{i,\sigma_i},$$
where the sum is over all permutations over $n$ elements.
It is well-known that the permanent of a matrix with non-negative entries can be approximated up to a $1+\eps$-multiplicative factor using the MCMC method \cite{JSV04}. Recently there has been significant interest in studying permanent of Hermitian PSD matrices because of close connections to quantum optics and Boson sampling. 
A folklore algorithm is to simply take the product of the entries of the main diagonal to get an $n!$-approximation.

A few years ago, \cite{AGOS17} obtained the first (deterministic) simply exponential approximation algorithms with approximation factor $e^{(\gamma+1)n}$. 
The algorithm proposed in \cite{AGOS17} uses a basic SDP relaxation for the problem; many experts expected that perhaps by using higher-level SDP relaxations one can improve the approximation factor.
Later on, several groups attempted to improve the approximation factor (see e.g., \cite{Bar20}), but for the general case, only subexponential improvements to the approximation ratio were found \cite{YP21,YP22}.
Very recently,  Meiburg showed that contrarily to the permanent of non-negative matrices, it is $\PNP$-Hard to approximate the permanent of a PSD matrix within a factor of $e^{-n^{1-\eps}}$ for any $\eps>0$ \cite{Mei23}. So, the MCMC method falls short of providing a $1+\eps$-approximation for PSD permanents. 

It remained an open problem if, perhaps by using randomness or higher level SDP relaxations, one can obtain an $e^{-\eps n}$  approximation factor for $\eps$ arbitrarily small, or at the very least whether the $\gamma+1$ factor in the exponent can be improved to a smaller constant. We answer both these questions in our work.

Our first result is an exponential improvement on the $e^{-(\gamma + 1)n}$ approximation algorithm mentioned above.

\begin{theorem}[Main Algorithmic Result]\label{thm:upper}
    There is a deterministic polynomial time $e^{-(\gamma + 0.9999)n}$-approximation algorithm for the permanent of a Hermitian PSD matrix $A\in \C^{n\times n}$.
\end{theorem}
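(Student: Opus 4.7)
My approach would be to refine the SDP-based algorithm of \cite{AGOS17}, whose $e^{(1+\gamma)n}$ approximation factor arises from combining two estimates: a Van der Waerden-type capacity bound contributing the $e^n$ factor, and a harmonic-sum correction contributing the $e^{\gamma n}$ factor. Since the $\gamma$ in the latter appears structural, the natural target is to shave a constant off the ``$1$'' in the exponent, which corresponds to the $n!/n^n \sim e^{-n}$ loss in Gurvits's capacity inequality. This loss is tight only on a specific extremal family of matrices, and my plan is to exploit this to separate into two regimes.

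The first step is to normalize: for a PSD matrix $A = V^*V$ with Gram vectors $v_1,\ldots,v_n$, one uses the SDP of \cite{AGOS17} to rescale so that $A$ is doubly balanced (diagonal entries equal to $1$ with suitable row-sum conditions). For such normalized $A$, the AGOS17 analysis invokes Gurvits's capacity bound on the associated doubly stochastic polynomial, which loses exactly the factor $n!/n^n \sim e^{-n}$. Equality in Gurvits's bound is achieved only for the polynomial $(\sum_i x_i/n)^n$, corresponding to the extremal case where all Gram vectors $v_i$ collapse essentially to a common direction.

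The second step is to make this tightness analysis quantitative: I would show that if the normalized Gram configuration lies $\epsilon$-far (in an appropriate norm) from this rank-one extremal, then the capacity lower bound improves to $e^{-(1-\delta(\epsilon))n}$ for some $\delta(\epsilon)>0$. This uses a sharp stability version of Gurvits's inequality, exploiting the underlying Lorentzian/log-concavity structure of the permanent polynomial. The third step handles matrices that are $\epsilon$-close to the rank-one extremal directly: in this regime, $A$ is a small perturbation of a rank-one PSD matrix $uu^*$, whose permanent $n!\prod_i |u_i|^2$ is easy to compute, and a perturbation expansion yields a good estimator for $\per(A)$ in the near-extremal regime.

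The final algorithm runs both routines and returns the larger estimate; choosing $\epsilon$ to balance the two losses yields the claimed $e^{(0.9999+\gamma)n}$ approximation. The main obstacle, I expect, is the quantitative stability result for Gurvits's inequality in the second step: while stability for Van der Waerden-type bounds has been studied via log-concavity and Alexandrov--Fenchel-type methods, extracting a sharp enough constant to realize even a concrete $10^{-4}$ improvement in the exponent, uniformly over all PSD matrices and compatibly with the perturbative estimator used in the extremal regime, will require careful bookkeeping and possibly new quantitative tools beyond what existing stability results provide.
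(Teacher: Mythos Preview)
Your two-regime instinct matches the paper's, but the decomposition you choose and the inequalities you target are different, and there is a real gap.

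After the SDP normalization (so that $A\preceq I$ and $\SDP(V)=1$), the paper parametrizes by $\tr(A)/n$ rather than by distance to a rank-one matrix, and crucially improves \emph{both} sides of the sandwich $e^{-(1+\gamma)n}\le \per(A)\le 1$. When $\tr(A)\le(1-\eps)n$ it proves a new \emph{upper} bound $\per(A)\le(1-\eps^2/20)^n$ (via a Barvinok-type identity and the moment generating function of $|g|^2$ for a complex Gaussian); when $\tr(A)$ is large it improves the \emph{lower} bound by replacing pure Gaussian rounding of the SDP with an interpolation $x=\sqrt{1-\beta}\,g+\sqrt{\beta}\sum_i s_iv_i$ between Gaussian and random-sign rounding, analyzed through a sharp estimate on $\E_{g\sim\C\N(0,1)}\ln|g+c|^2$. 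The algorithm simply outputs the improved upper bound $(1-\eps^2/20)^n$, and it is the interplay of the two improvements across the trace spectrum that yields the $10^{-4}$ gain.

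Your plan, by contrast, attacks only the lower bound, via stability of the $n!/n^n$ loss away from its rank-one extremizer. The gap is that the combined bound $\per(A)\ge e^{-(1+\gamma)n}$ is tight not at rank one but on low-rank \emph{projection} matrices of rank $d$ with $1\ll d\ll n$ (this is the family the paper points to in \cite{AGOS17}). Such matrices are $\Omega(1)$-far from any rank-one matrix, so your perturbative Regime~B does not cover them; and in your Regime~A, even a genuine stability improvement of the $n!/n^n$ factor would leave both the Gaussian-rounding loss $r(V)\ge e^{-\gamma n}$ and the trivial upper bound $\per(A)\le 1$ at their worst, which is exactly where these examples sit. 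Thus your dichotomy misses the actual hard instances. The paper's resolution is to tighten the \emph{upper} bound in the low-trace regime rather than push further on the lower bound---an ingredient your outline does not contemplate and which a Gurvits-stability argument cannot supply.
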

% We emphasize that the constant $0.9999$ in the exponent is not optimized, and it is likely that a more careful analysis of our algorithm will yield a more significant improvement.

Our second result is the first exponential hardness of approximation for this problem. As a corollary of a general hardness of approximation result we prove (see \cref{thm:main-thm} below), we show the following:
\begin{theorem}[Main Hardness Result]\label{thm:per-thm}
For all $\eps>0$, it is $\PNP$-hard to approximate the permanent of a Hermitian PSD matrix $A\in \C^{n\times n}$ within a factor $e^{-(\gamma-\eps)n}$.
\end{theorem}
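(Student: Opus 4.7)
The plan is to deduce \cref{thm:per-thm} from the general hardness result \cref{thm:main-thm} on the $\|\cdot\|_{2\to q}$ norm, via a reduction that turns a hard $\|\cdot\|_{2\to q}$ instance into a hard PSD-permanent instance. The bridge is the classical integral representation of the PSD permanent: writing $A = B^*B$ with rows $b_1,\dots,b_n$ of $B$, one has
$$\per(A) \;=\; \mathbb{E}_{z\sim \mathcal{CN}(0,I_n)}\!\Big[\prod_{i=1}^n |\langle b_i,z\rangle|^2\Big],$$
which, after a polar decomposition $z = r\cdot u$ with $u$ uniform on $\Sccn$, becomes $\per(A) = c_n \cdot \mathbb{E}_{u\sim \Sccn}[\prod_i |\langle b_i,u\rangle|^2]$ for an explicit constant $c_n$. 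The quantity $\prod_i |\langle b_i,u\rangle|^2$ is exactly what $\|B\|_{2\to q}^q$ captures in the limit $q\to 0$, so a tight hardness at small $|q|$ should translate into a tight hardness for the PSD permanent.

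Given a hard instance $B$ for $\|\cdot\|_{2\to q}$ supplied by \cref{thm:main-thm}, the reduction simply outputs $A = B^*B$, and we must distinguish the YES case (there exists a unit $u^*$ with $\prod_i|\langle b_i,u^*\rangle|^2$ large) from the NO case (for all unit $u$ this product is small). In the YES case I would lower-bound the sphere integral by restricting $u$ to a small ball around $u^*$ and arguing that $\prod_i|\langle b_i,u\rangle|^2$ changes there by at most a polynomial factor, so that the permanent matches $\max_u \prod_i|\langle b_i,u\rangle|^2$ up to $\operatorname{poly}(n)$ loss. In the NO case we must upper-bound $\mathbb{E}_u[\prod_i|\langle b_i,u\rangle|^2]$ by roughly $e^{-\gamma n}\cdot \max_u \prod_i|\langle b_i,u\rangle|^2\cdot \operatorname{poly}(n)$, and this is precisely where the constant $\gamma$ enters: for a single standard complex Gaussian $g$, $\mathbb{E}[|g|^2] = 1$ while $\exp\mathbb{E}[\log|g|^2] = e^{-\gamma}$, so integrating against the sphere measure loses an $e^{-\gamma n}$ factor relative to the geometric mean of $\prod_i|\langle b_i,u\rangle|^2$ whenever the coordinates behave approximately like an independent Gaussian vector.

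The main obstacle is the NO-side upper bound. Without structural control on $B$ the sphere expectation could in principle be dominated by a small set of atypical directions $u$ on which a few $|\langle b_i,u\rangle|^2$ are huge and swamp the others. To preclude this, the hard $\|\cdot\|_{2\to q}$ instances from \cref{thm:main-thm} must have rows $b_i$ that are spread enough that a hypercontractive or concentration argument pins the typical value of $\sum_i \log|\langle b_i,u\rangle|^2$ near its mean $-\gamma n$. Converting the arithmetic-mean guarantee of $\|\cdot\|_{2\to q}$ into the desired geometric-mean bound, and then into a sharp upper bound on the sphere expectation, should require choosing $q = q(n) \to 0^-$ at the right rate; tuning this rate so that the final inapproximability gap for the permanent reaches exactly $e^{(\gamma-\eps)n}$ is the last delicate calibration.
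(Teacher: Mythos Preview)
Your proposal misidentifies where the $e^{-\gamma n}$ gap originates and consequently misplaces the difficulty. The gap is already built into the $2\to q$ hardness at $q=0$: \cref{thm:main-thm} (with $\F=\C$, $q=0$, $\gamma_{\C,0}=e^{-\gamma/2}$) gives a YES/NO gap of $e^{-\gamma n(1-o(1))}$ for $\|V\|_{2\to 0}^{2n}=\max_{\|u\|_2=1}\prod_i|\langle b_i,u\rangle|^2$ directly. The reduction to the permanent therefore only needs to be \emph{approximation-preserving}, i.e.\ to show $\per(VV^\dag)\approx c\cdot\max_u\prod_i|\langle b_i,u\rangle|^2$ up to a subexponential factor. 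In particular your NO-side analysis is inverted: once the NO-case certifies that $\max_u\prod_i|\langle b_i,u\rangle|^2$ is small, the upper bound on the sphere expectation is the trivial $\E\le\max$; no concentration or hypercontractivity is needed, and no $e^{-\gamma n}$ appears here. There is also no need to take $q=q(n)\to 0^-$; just use $q=0$.

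The genuine obstacle is the YES-side \emph{lower} bound, and your ``restrict to a small ball around $u^*$'' sketch does not by itself close it. The paper's clean version (\cref{lem:permanent-equivalence}) uses $u^*(u^*)^\dag\preceq d\cdot I$ and monotonicity of the permanent to get $\per(VV^\dag)\ge c_{n,d}^{\C}\binom{n+d-1}{d}^{-1}\|V\|_{2\to 0}^{2n}$; the loss $\binom{n+d-1}{d}$ is \emph{exponential} when $d$ is comparable to $n$ (e.g.\ $\sim 4^n$ for $d=n$), which would destroy the gap. The crucial step you are missing is the row-duplication amplification: replace $V\in\C^{n\times d}$ by $V^{(k)}\in\C^{nk\times d}$ (stack $k$ copies), which leaves $\|\cdot\|_{2\to 0}$ unchanged but makes the PSD matrix $V^{(k)}(V^{(k)})^\dag$ highly rank-deficient. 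Choosing $k=\Theta(d/(n\eps^2))$ forces $\binom{nk+d-1}{d}\le e^{\eps nk/2}$, so the reduction loss becomes subexponential in the new size $nk$ and the full $e^{-(\gamma-\eps)nk}$ gap survives.
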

In particular, the above theorem shows that assuming $\PNP\neq \PRP$ {\em even using randomness} the approximation factor of \cite{AGOS17,YP21} cannot be improved by more than a factor of $e^n$.

% To complement our hardness result we propose a new approach with the hope of exponentially improving upon the current $e^{-(\gamma+1)n}$ approximation factor. Our second result shows that assuming the following conjecture on permanents of PSD matrices there is a deterministic polynomial time $e^{-(\gamma+1-\eps)n}$-approximation algorithm for some $\eps>0$.

% \begin{conjecture}\label{conj:main}
%     Let $\alpha = 0.9999$. There exists a $\delta>0$ such that every $n\times n$ matrix $0\preceq A\preceq I$ with $\|A\|_F^2\leq \alpha n$ satisfies $\per(A)\leq e^{-\delta n}$.
% \end{conjecture}
% \begin{theorem}[Main Upper Bound]\label{thm:upper}
%     Assuming \cref{conj:main}, there is a polynomial time algorithm that given a Hermitian PSD matrix $A$, outputs an $e^{-(\gamma+1-\eps)n}$-approximation to $\per(A)$ for $\eps=\min\{\delta,0.002\}$, where $\delta>0$ is the constant in \cref{conj:main}.
% \end{theorem}
% We emphasize that the constant $0.002$ is not optimized, and can be improved through more careful analysis. 
% Let us give a few remarks about the above conjecture. Since $A\preceq I$, then it follows that $\per(A)\leq \per(I)=1$. So, the above conjecture asks to show that as the eigenvalues of $A$ decrease starting from $1$, $\log\per(A)$ decreases linearly.
% We give partial evidence towards this conjecture by showing that it is true when $\alpha = 0.05$:
% \begin{theorem}\label{thm:partial-conjecture}
%      For any every $n\times n$ Hermitian matrix $0\preceq A\preceq I$ with $\|A\|_F^2\leq 0.05 n$ we have 
%      $$\per(A)\leq e^{-0.01 n}.$$
% \end{theorem}

\paragraph{Maximizing Product of Linear Forms} Our hardness techniques also apply to an optimization problem that happens to be related to the permanent of PSD matrices, called the ``maximizing product of linear forms'' problem, studied by Yuan and Parrilo \cite{YP22,YP21}: Given a matrix $V\in \C^{n\times d}$ with rows $v_1,\dots,v_n\in \C^d$, define
\begin{equation}\label{eqn:max-linear-form}
   r(V):=\max_{x\in \C^n:\norm{x}_2=1} \prod_{i=1}^n |\langle x, v_i\rangle|^2.
\end{equation}
They design a polynomial time $O(e^{-\gamma n\cdot (1-o(1))})$-approximation algorithm for $r(V)$ using semidefinite programming, where $\gamma\approx 0.577$ is the Euler-Mascheroni constant. They also prove APX-hardness for this problem, and  raise an open problem of finding the true approximability of this problem. Recently, Meiberg studied an equivalent problem under the name Approximate Quantum Maximum Likelihood Estimation, and showed NP-hardness of approximating it to within any constant factor \cite{Mei23}. Our main technical hardness result, \cref{thm:main-thm}, immediately implies that the maximizing product of linear forms problem (and the Approximate Quantum Maximum Likelihood Estimation problem) does not admit a $e^{-\gamma n(1+\eps)}$-approximation for any constant $\eps>0$, answering the question of Yuan and Parrilo up to sub-exponential factors in $n$.

\subsection{Technical Contributions}\label{sec:technical-contributions}

\subsubsection{Algorithmic results}\label{sec:alg-technical}

In this part, we show the main ideas behind \cref{thm:upper}. We start by presenting algorithms used by previous work. Let $A=VV^\dagger$ be an $n\times n$ PSD matrix where $v_1,\dots,v_n\in \C^n$ are the rows of $V$. Previous work (\cite{AGOS17,YP22}) showed that the value of the following SDP gives a $e^{-(\gamma+1)n}$ approximation to $\per(A)$:
\[\SDP(V) := \max_{X \succeq 0, \tr(X) = n} \prod_{i\in [n]}  v_i^\dag X v_i.\]

$\SDP(V)$ might seem completely unrelated to the definition of $\per(A)$, but we remark that their relationship is a lot more clear when $\per(A)$ is rewritten using Wick's formula (\cref{lem:cnd}). Notice that the objective function of $\SDP(V)$ is log-concave, so it can be optimized in polynomial time. It turns out that upon solving $\SDP(V)$, we can reduce to the case that the maximizer $X^*$ of $\SDP(V)$ satisfies $v_i^\dag X^* v_i = 1$ for all $i\in [n]$ (see \cref{eqn:alg-assumption}). This property simplifies matters enough that we will assume it for the rest of this section. 

The above property implies that 
$A\preceq I$ (see \cref{eqn:alg-optimality}), so we immediately get $\per(A)\leq 1$. Conversely, \cite{AGOS17, YP22} prove that 
\begin{equation}\label{eqn:per-exp}
    \per(A)\geq \frac{n!}{n^n}\cdot r(V)\geq e^{-n}\cdot r(V).
\end{equation}
Noticing that $\SDP(V)$ is a semidefinite relaxation of $r(V)$, a simple Gaussian rounding argument (\cite[Lemma 4.3]{YP22}, \cref{lem:alg-tight}) can be used to show 
\begin{equation}\label{eqn:gaussian-rounding}
    r(V)\geq e^{-\gamma n}\SDP(V) = e^{-\gamma n}.
\end{equation}
Putting these together, one gets
\begin{equation}\label{eqn:per-approx}
    e^{-(\gamma + 1)n}\leq \per(A)\leq 1,
\end{equation}
giving a $e^{-(\gamma + 1)n}$ approximation factor.

We remark that both sides of the above inequality can be tight, in particular the upper bound is tight for the identity matrix and the lower bound is tight for a certain family of low rank projection matrices (see \cite{AGOS17}). So one may expect that no improvement is possible along this line.

% The basis for our approach is the observation that we can refine the opposite estimate for all known examples where one of the bounds in \cref{eqn:per-approx} is tight -- for matrices close to $I$ (that achieve the upper bound) the basic Gaussian rounding used to prove $\|V\|_{2\to 0}^{2n}\geq e^{-\gamma n}$ is far from optimal, and for the low rank projection matrices achieving the lower bound, we are able to improve the upper bound from $1$ to about $e^{-n}$ (see proof of \cref{cor:proj}).

In our approach we exploit the fact these inequalities are tight for matrices of very different rank -- the known tight examples for the upper/lower bounds have very high/low rank respectively. In order to make this intuition concrete, we will use $\tr(A)$ as a smooth analogue of rank. Our main technical results are improvements to both sides of \cref{eqn:per-approx}.

\begin{lemma}[Improved Upper Bound]\label{lem:alg-upper}
    Let $\eps \in [0,1]$. Any matrix $0\preceq A\preceq I$ with $\tr(A)\leq (1-\eps)n$ satisfies \[\per(A)\leq \left(1-\frac{\eps^2}{20}\right)^n.\]
    % \[\per(A)\leq \det\left((2I-A)^{-1}\right)\leq \left(\frac{1}{2} + \frac{\tr(A)+\|A\|_F^2}{4n}\right)^n.\]
\end{lemma}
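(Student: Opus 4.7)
My plan is to express $\per(A)$ as a Gaussian expectation via Wick's formula (\cref{lem:cnd}) and then bound the expectation using the spectral constraints. Writing $A = V V^\dagger$ with $V \in \CC^{n \times d}$, the hypotheses become $V^\dagger V \preceq I$ (the singular values of $V$ are at most $1$) and $\|V\|_F^2 \leq (1-\eps) n$, and Wick's formula gives
\[
\per(A) \;=\; \E_{z \sim \N_{\CC}(0, I_d)} \prod_{i=1}^n |\langle v_i, z\rangle|^2,
\]
where $v_i$ denotes the $i$-th row of $V$.

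The natural first step is to apply AM-GM to the integrand: $\prod_i |\langle v_i, z\rangle|^2 \leq \bigl(\tfrac{1}{n}\,z^\dagger V^\dagger V z\bigr)^n$, and hence $\per(A) \leq n^{-n}\, \E_z (z^\dagger V^\dagger V z)^n$. Letting $\mu_1, \dots, \mu_r \in [0, 1]$ denote the nonzero eigenvalues of $V^\dagger V$ (with $\sum_j \mu_j \leq (1-\eps) n$), we have $z^\dagger V^\dagger V z \stackrel{d}{=} \sum_j \mu_j Y_j$ for iid $Y_j \sim \mathrm{Exp}(1)$. The moment $\E\bigl(\sum_j \mu_j Y_j\bigr)^n$ is convex in $(\mu_j)$, so its maximum over the polytope $\{\mu \in [0,1]^r : \sum_j \mu_j \leq (1-\eps)n\}$ is attained at a $0$-$1$ extreme point with $k := \lfloor(1-\eps) n\rfloor$ ones; at such an extreme $\sum_j \mu_j Y_j \sim \mathrm{Gamma}(k, 1)$ and the $n$-th moment equals $\Gamma(n+k)/\Gamma(k)$. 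A Stirling estimate then yields $n^{-n}\Gamma(n+k)/\Gamma(k) \sim f(\eps)^n$ with $f(\eps) := (2-\eps)^{2-\eps}/\bigl((1-\eps)^{1-\eps}\, e\bigr)$, reducing the claim to the one-variable inequality $f(\eps) \leq 1 - \eps^2/20$.

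The main obstacle is that $f(0) = 4/e > 1$, so this AM-GM route is vacuous in the small-$\eps$ regime. To cover that regime I would complement it with a perturbative argument: writing $A = I - C$ with $C \succeq 0$, $C \preceq I$, and $\tr(C) = \eps n$, the identity $\per(I - C) = \sum_{S \subseteq [n]} (-1)^{|S|} \per(C_S)$, together with the PSD bound $|C_{ij}|^2 \leq C_{ii} C_{jj}$ and a Bonferroni-type truncation, should give $\per(A) \leq 1 - \Omega(\eps^2 n)$ whenever $\eps$ is below some threshold $\eps_0$. Gluing the two regimes at a carefully chosen $\eps_0$ and tracking constants is what would produce the final constant $1/20$ in the stated bound.
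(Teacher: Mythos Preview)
Your plan has a real gap in the ``small $\eps$'' half, and the two regimes do not glue.

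First, the Bonferroni idea is not justified for permanents. The identity $\per(I-C)=\sum_{S}(-1)^{|S|}\per(C_S)$ does have nonnegative terms $\per(C_S)$ when $C\succeq 0$, but the alternating partial sums need not bracket the full sum the way inclusion--exclusion of probabilities does; there is no ``Bonferroni for permanents'' that you can cite, and the PSD inequality $|C_{ij}|^2\le C_{ii}C_{jj}$ does not by itself control the level-$k$ contribution. Second, even granting it, the bound you write, $\per(A)\le 1-\Omega(\eps^2 n)$, is additive in $n$ and only nontrivial when $\eps^2 n=O(1)$, i.e.\ $\eps=O(n^{-1/2})$. But your AM--GM route only beats $1-\eps^2/20$ once $\eps$ exceeds an absolute constant $\eps_0$ (since $f(0)=4/e>1$). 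The interval $n^{-1/2}\ll \eps<\eps_0$ is uncovered, and that is exactly the regime where the target $(1-\eps^2/20)^n$ is exponentially small, so no additive perturbation argument can reach it.

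The paper's proof avoids splitting into regimes. The key step is the inequality
\[
\per(I+B)\ \le\ \det\bigl((I-B)^{-1}\bigr)\qquad\text{for }0\preceq B\prec I,
\]
proved by expanding $\per(I+B)=\sum_S \per(B_S)$, applying \cref{lem:cnd} to each $\per(B_S)$, using $1+x\le e^x$ inside the Gaussian integral, and then the MGF $\E e^{t|g|^2}=(1-t)^{-1}$. With this in hand one dominates $A\preceq t\,(I+B)$ where $B$ collects only the eigenvalues of $A$ exceeding a threshold $t\in(1/2,1]$; monotonicity of the permanent gives $\per(A)\le t^n/\det(I-B)$. Choosing $t=1-\eps/5$ and using the trace constraint to bound the number of large eigenvalues yields the stated $(1-\eps^2/20)^n$ uniformly in $\eps\in[0,1]$. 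The point is that the $\det((I-B)^{-1})$ bound already captures the exponential decay you need, so no separate small-$\eps$ argument is required.
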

\begin{lemma}[Improved Lower Bound]\label{thm:rounding}
    Let $VV^\dag = A\preceq I$, and assume that the maximizer $X^*$ of $\SDP(V)$ satisfies $v_i^\dag X^* v_i = 1$ for all $i$. For any $0\leq \beta\leq 1$,
    \[\per(A)\geq e^{-n}\cdot r(V)\geq e^{-(\gamma + 1)n}\cdot \exp\left(n\cdot \left(\ln(1-\beta) + \frac{\beta}{1-\beta}\cdot \frac{\tr(A)}{n} - \frac{0.273\beta^2}{(1-\beta)^2}\cdot \frac{n}{\tr(A)}\right)\right).\]
\end{lemma}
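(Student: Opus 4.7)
The plan has two parts. The first inequality $\per(A)\geq e^{-n}\,r(V)$ follows immediately from the known bound $\per(A)\geq (n!/n^n)\,r(V)$ (a Wick-formula argument, as in \cite{AGOS17,YP22}) combined with Stirling, $n!/n^n \geq e^{-n}$. The main work lies in the second inequality, which sharpens the classical Gaussian rounding bound $r(V)\geq e^{-\gamma n}\SDP(V)$ of Yuan--Parrilo.

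I would first extract a structural consequence of SDP optimality. Applying KKT stationarity to the SDP $\max\prod v_i^\dag X v_i$ under the normalization $v_i^\dag X^* v_i = 1$ yields $V^\dag V = I - M$ for some $M\succeq 0$ with $MX^*=0$; thus on the range of $X^*$ the operator $V^\dag V$ acts as the identity. Combined with the fact that $y \sim \mathcal{CN}(0, X^*)$ is supported in this range, we obtain the ``tight frame'' identity $\sum_i |v_i^\dag y|^2 = \|y\|_2^2$. Now sample such $y$, so each $E_i := |v_i^\dag y|^2$ is marginally $\mathrm{Exp}(1)$; setting $\hat y = y/\|y\|_2$, the standard Jensen-based reduction gives
\[\log r(V) \;\geq\; \sum_i \mathbb{E}[\log E_i] \;-\; n\,\mathbb{E}[\log \|y\|_2^2] \;=\; -\gamma n \;-\; n\,\mathbb{E}[\log \|y\|_2^2],\]
so the problem reduces to a $\beta$-dependent improvement on an upper bound for $\mathbb{E}[\log \|y\|_2^2]$ beyond Jensen's.

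To introduce $\beta$, the plan is to apply a tangent-plus-quadratic upper bound on $\log$ of the form $\log u \leq \log c + (u-c)/c - K\,(u-c)^2/c^2$, with anchor $c = (1-\beta)\cdot n$ shifted away from the mean. Taking the expectation against $u = \|y\|_2^2$: the tangent-line part produces the $\log(1-\beta)$ term and a $\tfrac{\beta}{1-\beta}\cdot\tfrac{\tr(A)}{n}$ correction (the $\tr(A)/n$ factor emerging after rewriting $\mathbb{E}[\|y\|_2^2]$ via the tight-frame identity and the SDP constraint $\tr(X^* V^\dag V) = n$). The quadratic remainder produces $-\tfrac{0.273\,\beta^2}{(1-\beta)^2}\cdot\tfrac{n}{\tr(A)}$, with the $n/\tr(A)$ factor coming from a variance bound on $\|y\|_2^2$ via $\mathrm{Var}(\|y\|_2^2) = \|X^*\|_F^2$ combined with a Cauchy--Schwarz-type spectral inequality controlling $\|X^*\|_F^2$ through the SDP constraints and the tight-frame structure.

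The main obstacle is that the tangent-plus-quadratic upper bound on $\log$ does not hold on all of $(0,\infty)$, so some truncation or hybrid argument is needed to handle the tail where $\|y\|_2^2$ is large. I would expect this to force a careful optimization which is precisely where the specific constant $0.273$ arises. Beyond this, the sharp bookkeeping required to match the exact $\tr(A)/n$ and $n/\tr(A)$ scalings will need threading the SDP constraints, the tight-frame identity, and Cauchy--Schwarz carefully at each step --- routine but somewhat delicate.
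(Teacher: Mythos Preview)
Your approach has a genuine gap: the pure-Gaussian rounding $y\sim\C\N(0,X^*)$ that you propose is simply not strong enough to yield the stated bound, no matter how cleverly you analyze $\E[\log\|y\|_2^2]$.

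Concretely, take $A=V=I$ so that $\tr(A)=n$, and $X^*=I$ is (a valid) SDP optimum. Then $\|y\|_{\ell_2}^2\sim\mathrm{Gamma}(n,1)$, hence $\E[\log\|y\|_2^2]=\psi(n)-\log n=-\tfrac{1}{2n}+O(n^{-2})$. Your master inequality
\[
\log r(V)\;\ge\;-\gamma n\;-\;n\,\E[\log\|y\|_2^2]
\]
therefore gives only $-\gamma n+O(1)$. But the target bound, for $\tr(A)=n$ and any fixed $\beta>0$, is $-\gamma n+\Theta(n)$ (indeed $\ln(1-\beta)+\tfrac{\beta}{1-\beta}-\tfrac{0.273\beta^2}{(1-\beta)^2}\approx 0.227\beta^2>0$ for small $\beta$). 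So even the \emph{true} value of $\E[\log\|y\|_2^2]$---let alone any upper bound you place on it---falls short by a factor of $n$. Two subsidiary claims in your sketch also fail: $\E[\|y\|_2^2]=\tr(X^*)/n=1$ regardless of $\tr(A)$, so that is not where $\tr(A)/n$ can enter; and the ``tangent-plus-quadratic'' inequality $\log u\le \log c+(u-c)/c-K(u-c)^2/c^2$ requires $K\le 1/2$ even locally (compare second derivatives at $u=c$), so it cannot produce the coefficient structure you describe.

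The paper's proof is structurally different: it changes the \emph{rounding distribution}, not just the analysis. One samples $x=\sqrt{1-\beta}\,g+\sqrt{\tfrac{\beta n}{\tr A}}\sum_i s_i v_i$ with $g\sim\C\N(0,X^*)$ and independent unit phases $s_i$. Then $\langle v_i,x\rangle=\sqrt{1-\beta}\,z_i+\sqrt{\beta}\,y_i$ with $z_i\sim\C\N(0,1)$, and the key analytic step is the noncentral-log estimate $\E_{z\sim\C\N(0,1)}[\ln|z+c|^2]\ge -\gamma+|c|^2-|c|^4/4$. Summing over $i$ brings in $\sum_i\E|y_i|^2=\tfrac{n}{\tr A}\|A\|_F^2\ge \tr A$ (this is where $\tr(A)/n$ appears) and $\sum_i\E|y_i|^4\le \tfrac{1.09\,n^2}{\tr(A)}$ (whence $0.273\approx 1.09/4$ and the $n/\tr(A)$ factor). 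The parameter $\beta$ is the interpolation weight between the two rounding schemes, not an anchor in a tangent bound.
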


Our proof of \cref{lem:alg-upper} is inspired by an identity for $\per(A)$ appearing in \cite{Bar20}. Our proof of \cref{thm:rounding} is based on an improved rounding procedure and is more technical, so we provide a proof overview in \cref{subsec:rounding-overview}. We can now state our algorithm.
\paragraph{Algorithm:} Given a PSD matrix $A=VV^\dagger$ where $v_1,\dots,v_n$ are rows of $V$, first reduce to the case that the maximizer $X^*$ of $\SDP(V)$ satisfies $v_i^\dagger X^*v_i=1$ for all $i$ (as described in \cref{eqn:alg-assumption}). Output $\left(1-\frac{\eps^2}{20}\right)^n$, where $\eps$ is defined by $\tr(A) = (1-\eps)n$.\\

We will use this algorithm in our proof of \cref{thm:upper}, which is straightforward to analyze when equipped with \cref{lem:alg-upper,thm:rounding}.
%can normalize $A$ to obtain $\tilde{A}$ such that the maximizer $X^*$ of $\SDP(\tilde{A})$ satisfies 
%\begin{equation}\label{eqn:alg-assumption}
%\end{equation}

\subsubsection{Hardness results}\label{sec:hardness-technical}
In this part we highlight the main technical contributions behind \cref{thm:per-thm}. Our proof broadly consists of two steps:

\begin{enumerate}
    \item Show that $r(V)$ does not admit a $e^{-\gamma n(1+\eps)}$-approximation algorithm.\label{item:h1}
    \item Give an approximation-preserving reduction from $r(V)$ to PSD permanents.\label{item:h2}
\end{enumerate}

We start by elaborating on \cref{item:h1}. In order to draw analogies to the existing hardness of approximation literature, we will first rephrase and generalize the optimization problem of $r(V)$. Let $\F\in \{\R, \C\}$ be a field. For a vector $x\in \F^n$ and $p\in \R-\{0\}$, define 
$$\norm{x}_p = \left(\E_i |x_i|^p\right)^{1/p}.$$ 
We will be particularly interested in the case that $p=0$, for which we define $\norm{x}_0 = \lim_{p\to 0}\norm{x}_p$. It is not too hard to see that for any vector $x$, $\norm{x}_0$ equals $\prod_{i\in [n]}|x_i|^{1/n}$, the geometric mean of the magnitude of the entries of $x$. Note that in the case that $p<1$, $\|\cdot\|_p$ is not a norm and not convex, but we will nevertheless refer to it as the $p$-norm.

% \textbf{Q: ok to still call them norms?} 

%    \[[x]_f^f := \E_{i\sim [n]}f(|x_i|),\]
 %   \[[x]_f := f^{-1}([x]_f^f) = f^{-1}(\E_{i\sim [n]}f(|x_i|).\]
   % We will refer to $[x]_f$ as the \emph{$f$-mean} of $x$. 
  %  More generally, for a random variable $X$ over $\F$ we define
  %  \[
 %   [X]_f^f = \E f(|X|),\quad [X]_f = f^{-1}(\E f(|X|)).
%    \]
%For a vector $x\in \R^n$ we write $$
% 
Given a matrix $A\in \F^{m\times n}$, the $p\to q$ ``norm'' of $A$ is defined as
$$ \|A\|_{p\to q} = \max_{x\in \F^n:\norm{x}_p=1} \norm{Ax}_q.$$

The connection of $\|A\|_{p\to q}$ to $r(V)$ is apparent: for any matrix $V\in \C^{n\times d}$, we have $$r(V) = \|V\|_{2\to 0}^{2n}.$$
Over the last decade there has been significant interest in designing approximation algorithms or proving hardness of approximation for matrix $p\to q$ norms for $p,q\geq 1$ \cite{BFHKSZ12,briet2015tight,bhattiprolu2018inapproximability}. Most notably, the $2\to 4$ norm has been shown to be closely related to the Unique games and the small set expansion conjectures \cite{BFHKSZ12}. To the best of our knowledge, the problem is not well-studied when $q<1$. We prove tight hardness of approximation (assuming $\PPP\neq \PNP$) for the $2\to q$ norm when $-1<q<2$. 

 For $q>-1$ let 
$$ \gamma_{\F,q} = \E_{g\sim \F\N(0,1)}[|g|^q]^{1/q}$$
be the  $q$-norm of a standard (real/complex) normal  random variable. 
Bhattiprolu, Ghosh,  Guruswami, Lee, Tulsiani \cite{bhattiprolu2018inapproximability} showed that for any $1\leq q<2$, and any $\eps>0$ it is $\PNP$-hard to approximate the $2\to q$ norm of a real $m\times n$ matrix better than $\gamma_{\R,p}+\eps$, matching known semidefinite relaxation-based approximation algorithms \cite{steinberg2005computation}.
In our main theorem, we build on their techniques and we extend their result to all $-1<q<2$.
\begin{theorem}[Main Technical Hardness Theorem]\label{thm:main-thm}
    Let $\F\in \{\R, \C\}$. For all  $-1 < q < 2$ and $\eps>0$, it is $\PNP$-hard to approximate $\|A\|_{2\to q}$ given a matrix $A\in \F^{m\times n}$ within a factor of $\gamma_{\F,q}+\eps$.
\end{theorem}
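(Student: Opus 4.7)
The plan is to extend the framework of Bhattiprolu, Ghosh, Guruswami, Lee, and Tulsiani from $1 \le q < 2$ to all $-1 < q < 2$. First I would reduce from a Smooth Label Cover instance, constructing a matrix $A$ whose columns are indexed by pairs $(v,\sigma)$, with $v$ a Label Cover vertex and $\sigma$ a long-code coordinate, and whose rows correspond to edge-noise gadgets of a Gaussian dictator test. The columns are scaled so that in the YES case, a satisfying assignment $\mathcal{A}$ yields a vector $x$ concentrated on the dictator coordinate $e_{\mathcal{A}(v)}$ in each vertex block, for which $Ax$ is essentially $\pm 1$-valued and thus $\|Ax\|_q \ge 1 - o(1)$.

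For the NO case, the goal is to show $\|Ax\|_q \le \gamma_{\F,q} + o(1)$ for every $x$ with $\|x\|_2 = 1$. Standard Label Cover soundness implies that any such $x$ in which no vertex block is concentrated on a low-influence coordinate can be decoded to a good assignment, a contradiction; hence all vertex blocks are ``smooth''. Then each coordinate $(Ax)_i$ is a low-degree multilinear polynomial in the bits indexing the long codes with no influential variable, so by the Mossel--O'Donnell--Oleszkiewicz (MOO) invariance principle it is distributionally close to a centered Gaussian of variance at most $\|x\|_2^2 = 1$. Pushing this into an estimate for $\E_i |(Ax)_i|^q$ would yield $\|Ax\|_q \le \gamma_{\F,q} + o(1)$ as desired, giving a soundness/completeness gap of exactly $\gamma_{\F,q}$.

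The principal technical obstacle is applying the invariance principle with test function $\phi(t) = |t|^q$ for $-1 < q < 1$, where $\phi$ is not Lipschitz and, for $q \le 0$, not even bounded near the origin. For $0 \le q < 1$ I would approximate $\phi$ by a smooth regularization such as $\phi_\delta(t) = (t^2 + \delta^2)^{q/2}$, apply MOO invariance to the now-smooth $\phi_\delta$, and bound the approximation error $\E\,|\phi_\delta((Ax)_i) - |(Ax)_i|^q|$ using anti-concentration of low-degree polynomials---Carbery--Wright under Gaussian substitution and its product-measure analogue for $\pm 1$ bits---which gives $\Pr[|(Ax)_i| \le \delta] \le O(\delta^{1/d})$ with $d$ the polynomial degree, letting $\delta \to 0$ at the end. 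For $-1 < q < 0$ I would additionally truncate from above, $\phi_{\delta,M} := \min(M, \phi_\delta)$, and use that $\E|g|^q < \infty$ precisely when $q > -1$ to let $M \to \infty$; this threshold is exactly where polynomial anti-concentration becomes integrable against $|t|^{q-1}\,dt$, which is why the hardness factor $\gamma_{\F,q}$ is sharp at $q = -1$. Adapting the entire reduction and analysis to $\F = \C$ via complex Gaussians and the corresponding constant $\gamma_{\C,q}$ is a routine modification once the real case is in hand.
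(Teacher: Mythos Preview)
Your plan takes a different route from the paper and, as written, has a real gap in the range $q\le 0$.

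The paper does not reduce directly from Smooth Label Cover and does not invoke the MOO invariance principle. It proceeds modularly: first it quotes, as a black box, the block-sparse-vector hardness of Bri\"et et al.\ (\cref{thm:block-l2-l4}), which already packages the Label Cover and influence-decoding work into the statement that one cannot distinguish subspaces containing a block-standard-basis vector from subspaces in which every vector has almost all blocks either small in $\ell_2$ or small in $\ell_\infty$. It then composes the resulting projection with the explicit gadget $E_k^{(\F)}$ whose rows enumerate $\frac{1}{\sqrt{k}}\{\pm 1\}^k$ (respectively $\frac{1}{\sqrt{k}}\{\pm 1,\pm i\}^k$). After this composition, for a fixed input $x$ each output coordinate is $\sum_j \sigma_j z_j$ for a fixed block $z$ with $\|z\|_\infty\le \delta\|z\|_{\ell_2}$ and a uniformly random sign vector $\sigma$. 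This is a \emph{degree-one} sum of independent bounded variables, so the paper can use a quantitative Berry--Esseen bound together with a layer-cake integration (\cref{lem:A1}, \cref{lem:A2}) to control $\E f_q\bigl(|\sum_j \sigma_j z_j|\bigr)$ directly, handling the singularity of $f_q$ at the origin without any polynomial anti-concentration or smoothing of the test function.

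The gap in your proposal is the mismatch between polynomial degree and the integrability threshold you claim. In a direct long-code reduction of the kind you describe, $(Ax)_i$ viewed as a function of the random long-code bits is a multilinear polynomial whose degree $d$ is governed by the label-set size, not equal to one. Carbery--Wright then only yields $\Pr[|(Ax)_i|\le t]\lesssim t^{1/d}$, and for $q<0$ the regularization error $\E\bigl|\,|(Ax)_i|^q-\phi_\delta((Ax)_i)\bigr|$ is controlled by an integral of the form $\int_0^\delta t^{q-1}\cdot t^{1/d}\,dt$, which converges only when $q>-1/d$. Your assertion that ``this threshold is exactly where polynomial anti-concentration becomes integrable against $|t|^{q-1}\,dt$'' and hence sits at $q=-1$ is correct only when $d=1$; for the long-code polynomials you invoke it cuts off at $q=-1/d$, and since $d$ must grow with the hardness parameters this blocks essentially the whole range $-1<q\le 0$. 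The paper's $E_k$ gadget is precisely the device that forces the relevant random variable to be a Rademacher (or $\{\pm 1,\pm i\}$) sum, so that degree-one anti-concentration $\Pr[|Z|\le t]\lesssim t$ is available and the threshold genuinely is $q=-1$. Your regularize-and-truncate scheme for $\phi$ would be fine once the random variable is linear in independent signs; what is missing is a reduction that guarantees this linearity.
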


For the sake of completeness, in \cref{appendix:alg} we write down a semidefinite relaxation of $\|A\|_{2\to q}$ for all $-1<q<2$ and prove that it gives a $\gamma_{\F, q}$-approximation to $\|A\|_{2\to q}$, matching the above hardness result. As $r(V) = \|V\|_{2\to 0}^{2n}$, we also get that it is NP-hard to approximate $r(V)$ within a factor of $(\gamma_{\C, 0} + \eps)^{2n} = e^{-\gamma n (1+\eps)}$.

Next, we elaborate on \cref{item:h2} -- an approximation preserving reduction from $r(V)$ to $\per(A)$. Our main observation is that the permanent of a highly rank-deficient $n\times n$ PSD matrix $A=VV^\dagger$ is essentially (up to subexponential error) the same as $r(V)$. This is a consequence of Wick's formula (\cref{lem:cnd}), which allows us to view the permanent of a PSD matrix as a squared absolute moment of a complex multivariate Gaussian. As a result, we are able to use \cref{thm:main-thm} to prove \cref{thm:per-thm}, which we do in \cref{sec:permanent-proofs}. %We remark that the same techniques can be used to show that a degree $2n$ squared moment of a \emph{real} multivariate Gaussian cannot be approximated much better than $\gamma_{\R, 0}^{2n} = e^{-\gamma n}/2^n$.

\subsection{Overview of the proof of \cref{thm:rounding}}\label{subsec:rounding-overview}

Let us start by explaining the proof of \cref{eqn:gaussian-rounding}, which \cref{thm:rounding} improves upon. For any distribution $\mathcal{D}$ over $\C^n$ with $\E[\|x\|_2^2]=1$, we have the bound
\[r(V) = \max_{\|x\|_2 = 1}\prod_{i\in [n]}|\langle v_i, x\rangle|^2 \geq \left(\frac{\E_{x\sim \mathcal{D}}\prod_{i\in [n]}|\langle v_i, x\rangle|^{2/n}}{\E_{x\sim \mathcal{D}}\|x\|_2^2}\right)^{n} = \E_{x\sim \mathcal{D}}\left[\prod_{i\in [n]}|\langle v_i, x\rangle|^{2/n}\right]^n.\]

Using Jensen's inequality on the RHS, we get
\begin{equation}\label{eqn:gaussian-round}
    r(V)\geq \exp\left(\sum_{i\in [n]}\E_{x\sim \mathcal{D}}\ln|\langle v_i, x\rangle|^2\right).
\end{equation}

The basic Gaussian rounding scheme picks $x\sim \mathcal{D} = \C\N(0, X^*)$ (see \cref{def:gaussian} for a definition of the complex Gaussian distribution). Notice that for each $i$, $\langle v_i, x\rangle\sim \C\N(0, v_i^\dag X^*v_i) = \C\N(0, 1)$ by assumption of \cref{thm:rounding}. Since $\E_{y\sim \C\N(0, 1)}\ln|y|^2 = -\gamma$, we immediately get $r(V)\geq \exp(-n\gamma)$.

One can see that the analysis (in particular, the application of Jensen's inequality) is tight if $A = V = X^* = I$ for example. So to improve on \cref{eqn:gaussian-round}, we must use a different rounding algorithm. Our first observation is that in the special case that $A = V = I$ where $v_i = e_i$, we can get the optimal lower bound by sampling independent Rademachers $s_1,\ldots, s_n\sim \{\pm 1\}$, and setting $x =\sum_{i\in [n]} s_i v_i$. With this choice, $\E_{x}\ln|\langle v_i, x\rangle|^2 = \E_x \ln 1 = 0$, implying $r(V) \geq 1$.

One could try to use a similar rounding scheme in the more general case of \cref{thm:rounding}, i.e., when $A$ is close to $I$ in the sense that $\tr(A)\geq (1-\eps)n$. Unfortunately this strategy ends up failing, as when the $v_i$'s are not exactly orthogonal, there could be a nonzero probability that $\langle v_i, x\rangle = 0$, which would imply $\E_{x}\ln|\langle v_i, x\rangle|^2 = -\infty$. Note that if $A$ is close to $I$, this singularity is a very small probability event for most of the vectors $v_i$, so it is natural to try to avoid it by adding some noise to $x$. We do this by interpolating between the two rounding schemes. We pick a parameter $0<\beta<1$, and set $x = \sqrt{1-\beta} g + \sqrt{\beta}\sum_{i\in [n]}s_i v_i$, up to some normalization, where $g\sim \C\N(0, X^*)$.

On the technical side, this interpolation helps us analyze $\E_{x}\ln|\langle v_i, x\rangle|^2$ in terms of tractable quantities. We use a sharp bound on the expected log of the magnitude squared of a noncentral complex Gaussian (see \cref{lem:better-ln-estimate}): for any $c\in \C$,
\[\E_{g\sim \C\N(0, 1)}\ln|g + c|^2 \geq -\gamma + |c|^2 - \frac{|c|^4}{4}.\]
As a result of this inequality, when $\beta$ is bounded away from $1$, we can effectively bound $\E_{x}\ln|\langle v_i, x\rangle|^2$ using only the second and fourth moments of the random variable $\sum_{i\in [n]}s_i v_i$, which are tractable.

\subsection{Overview of the proof of \cref{thm:main-thm}}
As alluded to in \cref{sec:hardness-technical}, prior to our work, optimal hardness results for the $2\to q$ norm are already established for $q\geq 1$. Our first observation is that these results \cite{GRSW16,briet2015tight,bhattiprolu2018inapproximability} can be extended to all $-1 < q < 2$ (see \cref{thm:reduction}), or even more generally, to $2$-concave $f$-means (see \cref{def:f-mean}).

In particular, one can deduce the following theorem.

\begin{theorem}[Informal version of \cref{thm:reduction}]\label{thm:reduction-informal}
    Let $q< 2$. Assume that there is a family $\{E_k\}$ of $k\times d_k$ gadget matrices such that $\norm{E_k}_{2\to q} = 1$, but for all ``smooth'' unit vectors $x$ ($\|x\|_\infty\ll 1$), $\|E_kx\|_q \leq \gamma$. Then for all $\eps>0$, it is NP-Hard to distinguish between the following two cases given a matrix $A:\C^m\to \C^n$ with $\norm{A}_{2\to 2}\leq 1$.
    \begin{enumerate}
        \item Completeness: $\norm{A}_{2\to q}= 1$, or
        \item Soundness: $\norm{A}_{2\to q}\leq \gamma+\eps$.
    \end{enumerate}
\end{theorem}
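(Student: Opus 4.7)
The plan is to adapt the dictator-vs.-smooth reduction template of Bhattiprolu--Ghosh--Guruswami--Lee--Tulsiani (and its predecessors Khot--Naor and Briet--Regev--Saket), which is known to give sharp hardness for $\|\cdot\|_{2\to q}$ with $q\geq 1$, and to extend it to all $-1<q<2$ by plugging in the hypothesized gadget family $\{E_k\}$ in place of the noise/Gaussian operators used there. I would reduce from Smooth Label Cover on a bipartite graph with projection constraints $\pi_{uv}:[R]\to[L]$, which is NP-hard to approximate within any constant factor and comes with the guarantee that no label at $u$ has too many preimages from its neighbors. The matrix $A$ is built by placing one copy of $E_k$ per vertex $v$ (with $k$ commensurate to $R$) and stitching the copies together by \emph{folding} along the projections: the $v$-block of the input has $R$ coordinates, and the $u$-block of the output aggregates the gadget outputs of the neighbors of $u$ with coordinates identified across $\pi_{uv}$. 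This is a standard Khot-style long-code construction in which $E_k$ is the ``test'' that distinguishes dictators from smooth vectors.

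For completeness, given a satisfying labeling $\ell$, place on each $v$-block the unit-norm maximizer of $\|E_k\cdot\|_q$, realized as a dictator at coordinate $\ell(v)$. Folding identifies $\ell(v)$ with $\ell(u)$ across each constraint, so contributions to every $u$-block add coherently and $\|Ax\|_q = \|x\|_2 = 1$. For soundness, suppose $\|Ax\|_q \geq \gamma+\eps$ for some unit $x$. Call $v$ \emph{smooth} if $\|x_v\|_\infty < \tau$ for a small parameter $\tau$. The gadget hypothesis then says that smooth vertices contribute to $\|Ax\|_q$ only through an ``effective $2\to q$ operator of norm at most $\gamma$'', so after block-wise aggregation a constant fraction of the mass of $Ax$ must come from unsmooth $v$'s. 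From each unsmooth $v$ one extracts a randomized label by sampling one of the $O(1/\tau^2)$ top coordinates of $x_v$ with probability proportional to $|x_{v,i}|^2$; the smoothness property of Label Cover ensures top coordinates of distinct neighbors do not collide under $\pi_{uv}$, and a standard decoding then yields a randomized labeling satisfying a constant fraction of constraints, contradicting the Label Cover soundness.

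The main obstacle, and the place where the argument departs from prior work, is the regime $q \leq 0$: the map $y\mapsto |y|^q$ is neither convex nor bounded (it blows up at $y=0$), so block-aggregation via Jensen or Minkowski-type inequalities fails and $\|Ax\|_q$ is not even continuous in $x$ in general. I would handle this by working directly with the sum $\sum_j |(Ax)_j|^q$ block by block (this sum remains additive over disjoint output blocks for every $q$), and by invoking the gadget hypothesis as a black-box test at the level of each gadget copy individually, never doing norm manipulations above a single $E_k$. A minor separate technicality for $q<0$ is that the dictator witness used in completeness must make no output coordinate vanish; an arbitrarily small additive perturbation of $A$ achieves this while costing only an $o(1)$ factor in the final gap.
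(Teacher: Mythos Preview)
Your route and the paper's diverge at the very first step. The paper does \emph{not} reduce from Smooth Label Cover directly, and there is no folding, no edge-by-edge test, and no label decoding in its argument. Instead it invokes, as a black box, the block-sparse-vector hardness of Bri\"et--Regev--Saket (stated here as \cref{thm:block-l2-l4}): given a projection $P$ onto a subspace $W\subseteq\F^{n\times k}$, it is hard to tell whether $W$ contains a vector whose every block is a standard basis vector, or whether every unit vector in $W$ has at most an $\eps'$-fraction of blocks that are simultaneously ``not too heavy'' and ``not smooth''. The whole reduction is then the single line $A=(I_n\otimes E_k)P$: apply the gadget block-by-block after projecting. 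Completeness is immediate because a column of $E_k$ has all entries of magnitude $1/\sqrt{k}$ (this is the formal condition~\ref{cond:boolean} you are missing, and it is what makes your perturbation step unnecessary). Soundness partitions the blocks into \emph{four} sets according to both $\|x_i\|_{\ell_2}$ and $\|x_i\|_\infty$, not two: the gadget hypothesis controls only the blocks that are smooth \emph{and} of moderate mass, and separate arguments using $2$-concavity of $f_q$ and the tail condition $f(x)/x^2\to 0$ handle the heavy-$\ell_2$ blocks, while the Bri\"et--Regev--Saket soundness itself bounds the number of light-but-spiky blocks. All of this is done at the level of the additive quantity $[z]_f^f$, then combined by a single Jensen step on $f(\sqrt{\cdot})$.

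Your Label-Cover sketch is in the right spirit but, as written, has a real gap precisely at the point you flag. Once you ``aggregate the gadget outputs of the neighbors of $u$'', the object on which you'd like to invoke the gadget hypothesis is a \emph{sum} of gadget outputs, and the hypothesis gives you nothing there; for $q\le 0$ this is fatal because coherent sums can produce exact zeros in a block even when each summand is bounded away from zero. If instead you intended the output to be indexed by edges (so each gadget copy is its own output block), then the folding disappears as a difficulty, but you now need to control blocks with atypical $\ell_2$ mass --- the smooth/unsmooth dichotomy alone is not enough, since an unsmooth $x_v$ with tiny $\|x_v\|_{\ell_2}$ contributes negligibly yet would still be counted as ``unsmooth'' in your decoding. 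The paper's four-way split (and its reliance on the $2$-concavity of $f_q$) is exactly what patches this; your write-up would need an analogue. In short: your plan is a plausible unrolling of the paper's black box, but the paper's modular route is both shorter and avoids the aggregation pitfall for $q\le 0$ entirely.
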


It remains to construct an appropriate family of gadget matrices $\{E_k\}$. We will use the following family, which was suggested in \cite{briet2015tight}. For $k \geq 1$, define $E^{(\C)}_k \in \C^{4^k \times k}$ as the matrix whose rows consist of the members of $\frac{1}{\sqrt{k}}\cdot \{-1,+1, -i, +i\}^k$ ordered arbitrarily.

It remains to show that the matrices $E^{(\C)}_k$ satisfy the requirements of \cref{thm:reduction-informal} with $\gamma\approx \gamma_{\C, q}$. By construction, $\norm{E^{(\C)}_k}_{2\to q} = 1$.

To prove this, in \cref{lem:A1} we prove a Berry-Esseen type result for test functions of the form $|x|^q$ for $q\neq 0$ and $\log|x|$ otherwise, applied to a sum of independent random variables. In particular, the special case of interest to us (for \cref{thm:per-thm}) is $q=0$. In that case, the test function is $\log|x|$ which has a singularity at $x=0$, but we are nevertheless one can bound the right hand side of the lemma below by an arbitrarily small quantity as $\delta\to 0$.

\begin{lemma}[Informal version of \cref{lem:A2}]
% \label{lem:A2}
Let $-1 < q < 2$ and $0<\delta < 1$. For all ``smooth'' unit vectors $x$ with $\|x\|_\infty\leq \delta$,
    \[
    f(\|E_kx\|_q)  -\gamma_{\C, q}\lesssim 
    -\int_0^{\delta}  \min(0, f(u)) + \delta\cdot \left(\max(0, f(2\sqrt{\log(1/\delta)})) +2 f'(1)\right),
    \]

    where $f(x) = |x|^q$ for $q\neq 0$, and $f(x) = \log|x|$ for $q=0$.
\end{lemma}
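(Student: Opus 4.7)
The plan is to interpret the lemma as a quantitative central limit theorem with a possibly non-smooth test function, applied to the sum $Y := \frac{1}{\sqrt{k}}\sum_{j=1}^k \epsilon_j x_j$ where the $\epsilon_j$ are i.i.d.\ uniform on $\{\pm 1, \pm i\}$. By construction of $E_k^{(\C)}$, this $Y$ is distributed as a random row of $E_k x$, so $\E f(|Y|)$ matches $f(\|E_k x\|_q)$ after raising the $q$-norm to the appropriate power (or taking logarithms when $q = 0$), and with $g \sim \C\N(0,1)$, $\E f(|g|)$ is the target constant that $\gamma_{\C,q}$ stands for on the RHS. The smoothness hypothesis $\|x\|_\infty \leq \delta$ serves as the Lindeberg-type smallness condition driving the approximation of $Y$ by $g$.

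First I would partition $[0,\infty)$ into three zones via thresholds $\delta$ and $R := 2\sqrt{\log(1/\delta)}$, using a smooth partition of unity $\chi_{\mathrm{in}} + \chi_{\mathrm{mid}} + \chi_{\mathrm{out}} = 1$, and bound the contribution of each piece to $\E f(|Y|) - \E f(|g|)$ separately. In the near-zero zone I would control $|\E [f\chi_{\mathrm{in}}(|Y|)]|$ and $|\E [f\chi_{\mathrm{in}}(|g|)]|$ directly, since this is where $f$ can blow up. For $g$ this is immediate from the density $\pi^{-1} e^{-|z|^2}$, giving $\int_0^\delta |f(r)|\, r\, dr \lesssim -\int_0^\delta \min(0, f(u))\, du$. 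For $Y$ I would establish a small-ball / anticoncentration estimate, showing that the $2$-dimensional density of $Y$ is bounded by an absolute constant near the origin; this implies that the radial density of $|Y|$ is $O(r)$ near zero and yields the same bound.

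Next, in the middle zone $\delta \leq |y| \leq R$ the function $f$ is smooth, and I would compare the distributions of $|Y|$ and $|g|$ through the identity $\E F(|Y|) - \E F(|g|) = -\int_0^\infty F'(u)\,(\Pr[|Y| \leq u] - \Pr[|g| \leq u])\, du$ applied to $F = f \chi_{\mathrm{mid}}$. A two-dimensional Berry--Esseen bound applied to the real and imaginary parts of $Y$ gives $\sup_u |\Pr[|Y| \leq u] - \Pr[|g| \leq u]| \lesssim \delta$, so the middle contribution is at most $\delta \int_{\delta/2}^{2R} |f'(u)|\, du$ up to absolute constants; for the functions of interest this reduces to $\delta \cdot f'(1) \cdot O(1)$ from the body of the integral plus $\delta \cdot \max(0, f(R))$ from the upper endpoint. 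In the tail zone $|y| \geq R$ both $Y$ and $g$ are sub-Gaussian, so $\Pr[|Y| \geq R], \Pr[|g| \geq R] \lesssim \delta$ by the choice of $R$, and a tail integration against the sub-Gaussian decay contributes at most $\delta \cdot \max(0, f(R))$ up to absolute constants.

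The hard part will be the small-ball estimate in the near-zero zone. Since $\epsilon_j$ is discrete, $Y$ does not automatically have a continuous density, and a uniform density bound near the origin makes essential use of $\|x\|_\infty \leq \delta$. One clean route is an Esseen-type Fourier-analytic argument that uses that the characteristic function of each summand has modulus bounded away from $1$ on compact sets, combined with $\|x\|_\infty$ being small to pass this through the product of characteristic functions; an equivalent route is to mollify $Y$ by convolving with a tiny Gaussian and absorb the mollification error using $\delta$. The remaining pieces are standard CLT plus sub-Gaussian concentration, but the three estimates must be calibrated carefully so that the logarithmic factors from $R = 2\sqrt{\log(1/\delta)}$ absorb cleanly into the stated RHS.
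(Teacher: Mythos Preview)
Your three-zone decomposition matches the paper's overall architecture: Berry--Esseen in the bulk and Hoeffding-type sub-Gaussian tails at infinity. The middle and tail zones are handled essentially the same way in both arguments (the paper writes $\E f(|Y|)$ via the layer-cake identity against $\Pr[|Y|\le u]$ and $\Pr[|Y|\ge u]$, then applies Berry--Esseen on a bounded interval and Hoeffding beyond $c\approx 2\sqrt{\log(1/\delta)}$).

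The near-zero zone is where your plan has a genuine gap. You propose to control $\E[f\chi_{\mathrm{in}}(|Y|)]$ by showing that the two-dimensional density of $Y$ is bounded near the origin. But $Y$ is discrete, so it has no density; and even weakened to an anticoncentration statement like $\Pr[|Y|\le u]\lesssim u$, this is not something Berry--Esseen provides (it only gives $\Pr[|Y|\le u]\le \Pr[|g|\le u]+C\delta$, which is useless once $u\ll\delta$). Your suggested fixes do not obviously close the argument either: mollifying by a Gaussian of width $\sigma$ gives a density bound $O(\sigma^{-2})$, not $O(1)$, and the mollification error in $\E f(|Y|)$ is precisely what is hard to control when $f$ is singular at $0$ (the cases $q\le 0$). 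An Esseen-type characteristic-function bound could perhaps be pushed through, but it would be substantial extra work for a Littlewood--Offord-type statement that the paper never needs.

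The paper sidesteps this entirely with a one-sidedness trick. After writing
\[
\E f(|Y|)\;\le\; -\int_0^a f'(u)\,\Pr[|Y|\le u]\,du \;+\; \int_a^\infty f'(u)\,\Pr[|Y|\ge u]\,du
\]
with $f(a)=0$, it notes that since $f'\ge 0$ and $\Pr[|Y|\le u]\ge 0$, simply \emph{discarding} the interval $[0,b]$ (with $b\approx C\delta$) from the first integral can only increase the resulting upper bound. Berry--Esseen is then applied only on $[b,a]$, and the sole small-ball input required is the trivial Gaussian bound $\Pr[|g|\le u]\le eu$, which converts the discarded Gaussian piece into the term $-\int_0^{C\delta}\min(0,f(u))\,du$. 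No anticoncentration for $Y$ is ever invoked. That monotonicity observation is the idea your plan is missing.
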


\subsection{Future Directions}
The most exciting open problem is to determine the correct approximability for PSD permanents. Improving the hardness result seems to be out of reach of current techniques, but our ideas provide a clear path to improving the algorithmic result. Any significant improvement to the algorithm along the lines of our ideas would require significantly better versions of \cref{lem:alg-upper,thm:rounding}. In particular, \cref{lem:alg-upper} is currently the bottleneck to a better approximation ratio, specifically the $O(\eps^2)$ dependence. We conjecture that it can be improved to $O(\eps)$, which would yield better approximation ratios as a corollary.

Although we don't have concrete new applications of hardness of approximation of $\|A\|_{2\to q}$ for $q\neq 0$, we expect to find further applications of the machinery developed here in addressing counting and optimization of linear algebraic problems, e.g., in estimating mixed discriminant, sub-determinant maximization, Nash-welfare maximization, etc.

\subsection{Paper Organization}
In \cref{sec:prelims}, we present preliminary definitions and results that we will use. In \cref{sec:alg}, we prove \cref{thm:upper}. In \cref{sec:hardness}, we prove \cref{thm:per-thm,thm:main-thm} (with some components of the proof appearing in \cref{sec:appendix-A2,sec:appendix-berry-esseen}).

\section{Preliminaries}\label{sec:prelims}

\subsection{Generalized Means and Norms}

Although we mostly use $p$-norms that are defined using an expectation, it will be convenient to also define the counting version of $2$-norm, which we denote as $\ell_2$.
\begin{definition}[$\ell_2$-norm, Frobenius norm]
    Let $\F\in \{\R,\C\}$. For a vector $x\in \F^n$, define $\|x\|_{\ell_2} = \sqrt{\sum_{i\in [n]}|x_i|^2}$. For a matrix $A\in \F^{m\times n}$, define $\|A\|_F = \sqrt{\sum_{i,j\in [n]}|A_{i,j}|^2}$.
\end{definition}

We work with a generalization of $\|\cdot\|_p$ using the framework of ``$f$-means''.

\begin{definition}\label{def:f-mean}
    Let $f:\R_{\geq 0}\to \R$ be continuous and injective. Let $\F\in \{\R, \C\}$ be a field. For a vector $x\in \F^n$, define 
    \[[x]_f^f := \E_{i\sim [n]}f(|x_i|),\]
    \[[x]_f := f^{-1}([x]_f^f) = f^{-1}(\E_{i\sim [n]}f(|x_i|).\]
    We will refer to $[x]_f$ as the \emph{$f$-mean} of $x$. 
    More generally, for a random variable $X$ over $\F$ define
    \[
    [X]_f^f = \E f(|X|),\quad [X]_f = f^{-1}(\E f(|X|)).
    \]

    For a matrix $A\in \F^{n\times d}$, define 
    \[[A]_{2\to f} := \max_{x\in \F^n}\frac{[Ax]_f}{\|x\|_2}.\]
\end{definition}

$f$-means provide a convenient and unified way to talk about $\|\cdot\|_p$, even for the case of $p=0$.

\begin{observation}[Power Means]\label{obs:f-mean}
    For all $p\in \R$, define   

    \[f_p(x) = \begin{cases}
        x^p & \text{if }p>0,\\
        \log x& \text{if }p=0,\\
        -x^p& \text{if }p<0.
    \end{cases}\]

    Then, we have $[x]_{f_p} = \|x\|_p$ for any $p\in \R$.
\end{observation}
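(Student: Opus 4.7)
The statement splits naturally into three cases depending on the sign of $p$, and each reduces to a routine computation using the definition of the $f$-mean, so my plan is essentially a case analysis.

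\textbf{Case $p>0$.} Here $f_p(x)=x^p$ is a continuous increasing bijection on $\R_{\geq 0}$ with inverse $f_p^{-1}(y)=y^{1/p}$. Directly unwinding the definitions gives
\[
[x]_{f_p} \;=\; f_p^{-1}\!\bigl(\E_{i\sim[n]} f_p(|x_i|)\bigr) \;=\; \bigl(\E_{i\sim[n]} |x_i|^p\bigr)^{1/p} \;=\; \|x\|_p,
\]
so nothing more is needed.

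\textbf{Case $p<0$.} The only subtlety is that $x\mapsto x^p$ is decreasing on $\R_{>0}$, which is why the definition uses $f_p(x)=-x^p$; this is continuous and increasing on $\R_{>0}$, hence injective as required by \cref{def:f-mean}, with inverse $f_p^{-1}(y)=(-y)^{1/p}$. Plugging in,
\[
[x]_{f_p} \;=\; \bigl(-\E_{i\sim[n]}(-|x_i|^p)\bigr)^{1/p} \;=\; \bigl(\E_{i\sim[n]}|x_i|^p\bigr)^{1/p} \;=\; \|x\|_p,
\]
assuming no $x_i$ is zero (otherwise both sides are $0$ by convention for negative powers).

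\textbf{Case $p=0$.} This is the only case with any real content, since $\|x\|_0$ is defined as a limit $\lim_{p\to 0}\|x\|_p$ rather than directly. I would expand
\[
\log\bigl(\E_{i\sim[n]}|x_i|^p\bigr) \;=\; \log\bigl(\E_{i\sim[n]}(1+p\log|x_i|+O(p^2))\bigr) \;=\; p\cdot \E_{i\sim[n]}\log|x_i| + O(p^2),
\]
divide by $p$ and take $p\to 0$, and then exponentiate to get
\[
\lim_{p\to 0}\|x\|_p \;=\; \exp\!\bigl(\E_{i\sim[n]} \log|x_i|\bigr) \;=\; f_0^{-1}\!\bigl(\E_{i\sim[n]} f_0(|x_i|)\bigr) \;=\; [x]_{f_0}.
\]
One should again quickly dispatch the degenerate case where some $|x_i|=0$: both $\|x\|_0$ (as the limit of a geometric-mean-like quantity) and $\exp(\E_i \log|x_i|)$ evaluate to $0$ under the usual convention $\log 0 = -\infty$.

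The only step requiring any care is the Taylor expansion in the $p\to 0$ limit, and even that is completely standard (it is the classical fact that power means converge to the geometric mean as $p\to 0$). No real obstacle is expected; the observation is essentially a compatibility check between \cref{def:f-mean} and the standard definition of $\|\cdot\|_p$.
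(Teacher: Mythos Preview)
Your case analysis is correct. The paper states this as an observation without proof, so there is nothing to compare against; your verification of the three cases (including the standard power-mean-to-geometric-mean limit for $p=0$) is exactly the routine check the paper leaves implicit.
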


We note that the observation would still hold if we used $x^p$ instead of $-x^p$ in the third case, but it will be convenient for $f_p$ to always be an increasing function.

We will mostly be concerned with $f$-means that are dominated by the $2$-norm. This happens exactly when $f$ is $2$-concave:

\begin{definition}
        A function $f:\R_{> 0}\to \R$ is $2$-concave if $x\to f(\sqrt{x})$ is concave.
\end{definition}

\begin{example}
    For any $p\leq 2$, $f_p$ is 2-concave.
\end{example}
We make some useful observations about $2$-concave functions.
\begin{claim}\label{claim:2-concave}
    Let $f:\R_{> 0}\to \R$ be a $2$-concave increasing function. Then,
    \begin{enumerate}
        \item $[x]_f\leq \|x\|_2$ for any vector $x$. Equivalently, $[x]_f^f\leq f(\|x\|_2)$.
        \item $f'(x)\leq \frac{f'(y)}{y}\cdot x$ for $0<y\leq x$.
        \item $f(x)-f(y)\leq \frac{f'(y)}{2y}\cdot x^2$ for $0<y\leq x$.
    \end{enumerate}
\end{claim}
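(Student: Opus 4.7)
The plan is to treat all three parts via the substitution $g(t) := f(\sqrt{t})$, which is concave on $\R_{>0}$ by the $2$-concavity hypothesis and nondecreasing since $f$ is increasing. Each of the three claims then reduces to a one-step consequence of concavity of $g$.

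For part (1), the key observation is that $f(|x_i|) = g(|x_i|^2)$, so Jensen's inequality applied to the concave function $g$ gives
\[
[x]_f^f = \E_{i\sim[n]} g(|x_i|^2) \leq g\bigl(\E_{i\sim[n]}|x_i|^2\bigr) = g(\|x\|_2^2) = f(\|x\|_2),
\]
which is the second form of (1). Since $f$ is increasing, $f^{-1}$ is as well, so applying $f^{-1}$ to both sides recovers $[x]_f \leq \|x\|_2$.

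For part (2), I would use that $g'(t) = \frac{f'(\sqrt{t})}{2\sqrt{t}}$ is nonincreasing because $g$ is concave. Evaluating at $t = y^2$ and $t = x^2$ with $0 < y \leq x$ yields $\frac{f'(x)}{2x} \leq \frac{f'(y)}{2y}$, which rearranges to $f'(x) \leq \frac{f'(y)}{y}\cdot x$. Part (3) is then obtained by integrating (2) over $[y, x]$ (applying (2) with $t$ in place of $x$ for each $t \geq y$):
\[
f(x) - f(y) = \int_y^x f'(t)\, dt \leq \frac{f'(y)}{y}\int_y^x t\, dt = \frac{f'(y)}{y}\cdot \frac{x^2 - y^2}{2} \leq \frac{f'(y)}{2y}\cdot x^2,
\]
dropping the nonnegative $y^2$ term in the final step.

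There is no genuine obstacle — all three parts are immediate once one passes to $g = f \circ \sqrt{\cdot}\,$. The only minor technicality is implicit differentiability of $f$ in parts (2) and (3); since a concave function is differentiable except on a countable set, the derivative inequalities should be read as holding wherever $f'$ is defined (or in terms of one-sided derivatives of $g$), which is harmless in all intended applications where $f = f_p$ is smooth.
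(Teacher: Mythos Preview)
Your proposal is correct and follows essentially the same approach as the paper: Jensen's inequality on $g(t)=f(\sqrt{t})$ for part (1), monotonicity of $g'(t)=\tfrac{f'(\sqrt{t})}{2\sqrt{t}}$ for part (2), and integration of (2) for part (3). Your write-up is in fact slightly cleaner than the paper's (which contains a small typo in the statement of the derivative fact for part (2)), and your remark on differentiability is a reasonable caveat that the paper leaves implicit.
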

\begin{proof}
    \begin{enumerate}
        \item Using Jensen's inequality on the concave function $x\to f(\sqrt{x})$,
        \[[x]_f^f = \E f(|x_i|) = \E f\left(\sqrt{|x_i|^2}\right)\leq f(\sqrt{\E |x_i|^2}) = f(\|x\|_2).\]
        \item Follows from the fact that $f'(\sqrt{x}) = \frac{f(\sqrt{x})}{2\sqrt{x}}$ is increasing in $x$.
        \item Integrating the above from $y$ to $x$,
        \[f(x) - f(y)\leq \frac{f'(y)}{y}\cdot \frac{(x^2 - y^2)}{2}\leq \frac{f'(y)}{2y}\cdot x^2.\]
    \end{enumerate}
\end{proof}

One could ask when $[x]_f$ is a homogeneous function of $x$. It turns out that this is exactly when $[x]_f$ is a $p$-norm.

\begin{lemma}[\cite{hardy1952inequalities}]\label{lem:homogeneous}
    $[\cdot]_f$ is $1$-homogeneous (that is, $[ax]_f = a[x]_f$ for all scalars $a$ and vectors $x$) if and only if it equals $[\cdot]_{f_p} = \|\cdot\|_p$ for some $p\in \R$.
\end{lemma}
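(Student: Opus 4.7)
The ``if'' direction is a direct computation: for each $p \in \R$ one has $\|ax\|_p = |a|\cdot\|x\|_p$ (using $\|x\|_0 = \prod_i |x_i|^{1/n}$ when $p=0$).

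For the ``only if'' direction I plan to extract a functional equation for $f$ from $1$-homogeneity and then classify its solutions, following the classical Hardy-Littlewood-P\'olya / Nagumo-Kolmogorov strategy. Fix $a > 0$ and set $g_a(t) := f(at)$; since $g_a^{-1}(y) = f^{-1}(y)/a$, a direct calculation gives
\[[x]_{g_a} \;=\; \tfrac{1}{a}\,f^{-1}\bigl(\E_i f(a|x_i|)\bigr) \;=\; \tfrac{1}{a}\,[ax]_f \;=\; [x]_f,\]
with the last step using $1$-homogeneity. Hence the two generators $f$ and $g_a$ induce the same $f$-mean on every nonnegative vector.

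Next I would invoke the standard uniqueness result for quasi-arithmetic means: two continuous injective generators yield the same mean iff they differ by an affine transformation. The short proof sets $h := g_a \circ f^{-1}$, observes that equality of $2$-variable means becomes the Jensen identity $h(\tfrac{u+v}{2}) = \tfrac{h(u)+h(v)}{2}$ on the range of $f$, and uses continuity of $h$ to conclude that $h$ is affine. Applied to $(f, g_a)$ for each $a > 0$, this yields continuous functions $\alpha(a)\ne 0$ and $\beta(a)$ with
\[f(ax) \;=\; \alpha(a)\,f(x) + \beta(a), \qquad a,x > 0. \quad (\ast)\]

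Finally I would solve $(\ast)$. Computing $f((ab)x)$ in two orders and matching (legitimate since $f$ is nonconstant) gives the multiplicative relation $\alpha(ab) = \alpha(a)\alpha(b)$; continuity forces $\alpha(a) = a^p$ for some $p \in \R$. Setting $h(x) := f(x) - f(1)$, equation $(\ast)$ becomes $h(ax) = a^p h(x) + h(a)$, and swapping $a$ with $x$ yields $h(a)(x^p - 1) = h(x)(a^p - 1)$. For $p \ne 0$ this forces $h(x) = K(x^p - 1)$, so $f$ is affine in $x^p$; for $p = 0$ the relation degenerates to Cauchy's logarithmic equation $h(ax) = h(a) + h(x)$, whose continuous solutions are $h(x) = K\log x$. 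In either case $f = A f_p + B$ with $A \ne 0$, and since $[\cdot]_f$ is invariant under affine reparameterizations of the generator, $[\cdot]_f = [\cdot]_{f_p} = \|\cdot\|_p$. The main obstacle is this final classification of $(\ast)$: carefully decoupling the two unknown continuous functions $\alpha, \beta$ and handling both the $p\ne 0$ and $p=0$ branches uniformly. The continuity and monotonicity inherent in the definition of $f$-means supply exactly the regularity needed to pin $\alpha(a) = a^p$ down and to rule out pathological multiplicative Cauchy solutions.
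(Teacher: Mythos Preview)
The paper does not give its own proof of this lemma; it is stated with a bare citation to \cite{hardy1952inequalities} and used as a black box. Your proposal is a correct and complete reconstruction of the classical Nagumo--Kolmogorov argument from that reference: reduce $1$-homogeneity to the assertion that $f$ and $g_a(t)=f(at)$ generate the same quasi-arithmetic mean, invoke the affine-equivalence characterization of generators to obtain $f(ax)=\alpha(a)f(x)+\beta(a)$, and solve this functional equation via the multiplicativity of $\alpha$ and the ensuing case split $p\neq 0$ versus $p=0$. One small point worth making explicit is the continuity of $\alpha$, which you need for $\alpha(a)=a^p$; it follows immediately from $\alpha(a)=\bigl(f(ax_1)-f(ax_2)\bigr)/\bigl(f(x_1)-f(x_2)\bigr)$ for any fixed $x_1\neq x_2$.
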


We will require another simple fact about $f$-means.

\begin{fact}\label{claim:tensor}
    Let $V$ be a matrix $V\in \F^{n\times d}$ and integer $k>0$. Then, 
    \[[V^{(k)}]_{2\to f} \coloneqq \left[\begin{bmatrix}
        V\\\vdots\\V
    \end{bmatrix}\right]_{2\to f} = [V]_{2\to f} ,\]
    where $V$ is copied $k$ times in the right hand side.
\end{fact}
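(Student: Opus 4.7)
The plan is to unwind the definitions directly; the fact is essentially a consequence of the averaging normalization built into the $f$-mean. Observe that $V^{(k)}x$, for any vector $x$ of appropriate dimension, consists of $k$ identical concatenated copies of $Vx$, since the block-matrix $V^{(k)}$ is obtained by stacking $k$ copies of $V$. Therefore the multiset of entry magnitudes of $V^{(k)}x$ is just $k$ copies of that of $Vx$.

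The main step is to translate this into an equality of $f$-means. Writing out the definition,
\[
[V^{(k)}x]_f^f \;=\; \frac{1}{kn}\sum_{j=1}^{kn} f\bigl(|(V^{(k)}x)_j|\bigr) \;=\; \frac{1}{kn}\cdot k\sum_{i=1}^n f\bigl(|(Vx)_i|\bigr) \;=\; \frac{1}{n}\sum_{i=1}^n f\bigl(|(Vx)_i|\bigr) \;=\; [Vx]_f^f.
\]
Applying $f^{-1}$ to both sides (which is well-defined since $f$ is continuous and injective by \cref{def:f-mean}) gives $[V^{(k)}x]_f = [Vx]_f$ for every $x$.

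Finally, dividing by $\|x\|_2$ and taking the supremum over nonzero $x$ yields
\[
[V^{(k)}]_{2\to f} \;=\; \sup_{x\neq 0}\frac{[V^{(k)}x]_f}{\|x\|_2} \;=\; \sup_{x\neq 0}\frac{[Vx]_f}{\|x\|_2} \;=\; [V]_{2\to f},
\]
which is the claim. There is no real obstacle here: the point of the fact is to record that, unlike the unnormalized $\ell_p$ quantity, the $f$-mean based $2\to f$ operator norm is genuinely insensitive to row-replication of the matrix, because the factor of $k$ from stacking is exactly cancelled by the averaging denominator $kn$ in the definition of $[\cdot]_f^f$.
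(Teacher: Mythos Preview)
Your proof is correct and follows essentially the same approach as the paper: both observe that $V^{(k)}x$ consists of $k$ identical copies of $Vx$, so the $f$-mean (being an average) is unchanged, and then conclude by taking the optimum over $x$. The only difference is cosmetic---you write out the averaging sums explicitly, while the paper phrases the same observation as ``a uniformly random entry of $z$ has the same distribution as a random entry of $z^{(k)}$.''
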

\begin{proof}    
    For any vector $x\in \F^d$, consider the two vectors $z = Vx$ and $z^{(k)} = V^{(k)}x$. Note that a uniformly random entry of $z$ has the same distribution as a random entry of $z^{(k)}$, so $[z]_f = [z^{(k)}]_f$. The claim follows from the definition of $[\cdot]_{2\to f}$.
\end{proof}

% \begin{fact}
%     \end{fact}
% Equivalently, we have 
% \[
% f^{-1}(\E_{i \in [n]} f(c x_i))
% \]

\subsection{Gaussians}

We will consider both real and complex Gaussians.

\begin{definition}\label{def:gaussian}
    Let $\F\in \{\R, \C\}$. For the $n\times n$ identity matrix $I_n$, $\F\N(0, I_n)$ is defined to be the distribution over vectors $x\in \F^n$ given by the density function 
    \[p_\F(x) = \begin{cases}
        (2\pi)^{-n/2}\cdot \exp\left(-\|x\|_{\ell_2}^2/2\right) &\text{ if }\F = \R,\\
        \pi^{-n}\exp(-\|x\|_{\ell_2}^2)&\text{ if }\F = \C.
    \end{cases}\]

    More generally, given a positive semidefinite covariance matrix $\Sigma= A A^\dag$ for $A\in \F^{n\times d}$, define $\F\N(0, \Sigma)$ to be distributed as $Ax$, where $x\sim \F\N(0, I_d)$. We will sometimes use $\N$ to denote $\R\N$.
\end{definition}

More concretely, a complex Gaussian $g\sim \C\N(0, 1)$ can be sampled by sampling its real and imaginary parts independently from $\N(0, 1/2)$. There are formulas for the moments of univariate real and complex Gaussians in terms of the Gamma function.

\begin{definition}
    For any $p\in \R$ and $\F\in \{\R, \C\}$, define $\gamma_{\F, p} = [g]_{f_p}$, where $g$ is a random variable distributed as $\F\N(0, 1)$.
\end{definition}

\begin{fact}\label{fact:gamma-values}
    For any $p\in (-1,\infty)- \{0\}$,
    \[\gamma_{\R, p}^p = \E_{g\sim \N(0, 1)}[|g|^p] = \frac{2^{p/2}\cdot \Gamma\left(\frac{p+1}{2}\right)}{\sqrt{\pi}},\]
    and for any $p\in (-2,\infty)- \{0\}$,
    \[\gamma_{\C, p}^p  = \E_{g\sim \\CN(0, 1)}[|g|^p] = \Gamma\left(\frac{p}{2}+1\right).\]

    In particular, this implies
    \[\gamma_{\R, 1} = \sqrt{\frac{2}{\pi}},\quad \gamma_{\C, 1} = \sqrt{\frac{\pi}{2}}, \quad \gamma_{\R, 0} = \lim_{p\to 0}\gamma_{\R, p} = \sqrt{\frac{e^{-\gamma}}{2}},\quad \gamma_{\C, 0} = \lim_{p\to 0}\gamma_{\C, p} = \sqrt{e^{-\gamma}}.\]
\end{fact}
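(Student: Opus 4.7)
The fact consists of four ingredients: closed-form expressions for $\E|g|^p$ in the real and complex cases, plus their evaluations at $p=1$ and the limits as $p\to 0$. The plan is to establish the two Gamma-function identities by standard substitutions, then treat the $p\to 0$ limit using the digamma function.

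First I would derive the real formula. Since $\N(0,1)$ has a symmetric density, $\gamma_{\R,p}^p = \frac{2}{\sqrt{2\pi}}\int_0^\infty x^p e^{-x^2/2}\,dx$. The substitution $u=x^2/2$ converts this to a Gamma integral; after collecting powers of $2$ one obtains $\frac{2^{p/2}}{\sqrt{\pi}}\Gamma((p+1)/2)$, valid whenever $p>-1$ so the integrand is integrable near $x=0$. For the complex case I would use that under the paper's normalization, $g=a+bi$ with $a,b\sim \N(0,1/2)$ independent, so $|g|^2\sim \mathrm{Exp}(1)$. Hence $\gamma_{\C,p}^p = \E|g|^p = \int_0^\infty r^{p/2}e^{-r}\,dr = \Gamma(p/2+1)$, valid for $p>-2$. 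The $p=1$ evaluations are then immediate from $\Gamma(1)=1$ and $\Gamma(3/2)=\sqrt{\pi}/2$.

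The main step is the $p\to 0$ limit, which presents as an indeterminate form of type $1^\infty$: both $\gamma_{\R,p}^p$ and $\gamma_{\C,p}^p$ tend to $1$ as $p\to 0$, but we exponentiate by $1/p\to\pm\infty$. I would take logarithms and apply L'H\^opital's rule (equivalently, differentiate under the integral sign to compute $\E\log|g|$ directly). Writing $\log \gamma_{\C,p} = \frac{1}{p}\log\Gamma(p/2+1)$, the limit is $\tfrac{1}{2}\psi(1)=-\gamma/2$, giving $\gamma_{\C,0}=\sqrt{e^{-\gamma}}$. For the real case $\log\gamma_{\R,p}=\tfrac{\log 2}{2}+\tfrac{1}{p}\bigl[\log\Gamma((p+1)/2)-\log\Gamma(1/2)\bigr]$ tends to $\tfrac{\log 2}{2}+\tfrac{1}{2}\psi(1/2)=-\tfrac{\gamma+\log 2}{2}$, giving $\gamma_{\R,0}=\sqrt{e^{-\gamma}/2}$. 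The only nontrivial inputs are the digamma values $\psi(1)=-\gamma$ and $\psi(1/2)=-\gamma-2\log 2$, both classical consequences of the recursion $\psi(x+1)=\psi(x)+1/x$ together with the duplication formula.
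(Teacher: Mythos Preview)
The paper states this as a \emph{Fact} without proof, so there is nothing to compare against; your derivation is correct and complete. The substitutions for the two Gamma identities are standard, and your L'H\^opital/digamma computation of the $p\to 0$ limits is the right way to handle the indeterminate form, with the values $\psi(1)=-\gamma$ and $\psi(1/2)=-\gamma-2\log 2$ doing the work.

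One small remark: your computation actually gives $\gamma_{\C,1}=\Gamma(3/2)=\sqrt{\pi}/2$, whereas the statement as printed has $\gamma_{\C,1}=\sqrt{\pi/2}$. This is a typo in the paper (under its normalization $|g|^2\sim\mathrm{Exp}(1)$, so $\E|g|=\Gamma(3/2)=\sqrt{\pi}/2$); your value is the correct one.
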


\begin{fact}[Moment Generating Function of $|g|^2$]\label{fact:mgf}
    Let $g\sim \C\N(0, 1)$. For any $t<1$, $\E[e^{t|g|^2}] = (1-t)^{-1}$.
\end{fact}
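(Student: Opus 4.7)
The plan is to compute the integral directly using the definition of the complex Gaussian density. By \cref{def:gaussian}, the density of $g \sim \C\N(0,1)$ is $p_\C(z) = \pi^{-1} e^{-|z|^2}$, so
\[
\E[e^{t|g|^2}] = \frac{1}{\pi}\int_\C e^{t|z|^2} e^{-|z|^2}\, dz = \frac{1}{\pi}\int_\C e^{-(1-t)|z|^2}\, dz.
\]
Switching to polar coordinates $z = re^{i\theta}$ and using $t < 1$ to guarantee absolute convergence, this equals $2\int_0^\infty e^{-(1-t)r^2}\, r\, dr$, and the substitution $u = (1-t)r^2$ reduces this to $(1-t)^{-1}\int_0^\infty e^{-u}\, du = (1-t)^{-1}$.

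An alternative, more conceptual route is to note that by definition $g = X + iY$ with $X, Y \sim \N(0, 1/2)$ independent, so $|g|^2 = X^2 + Y^2$. Since $\sqrt{2}X, \sqrt{2}Y \sim \N(0,1)$, the sum $2|g|^2$ is chi-squared with two degrees of freedom, i.e.\ $2|g|^2 \sim \mathrm{Exp}(1/2)$, and therefore $|g|^2$ itself is $\mathrm{Exp}(1)$ with density $e^{-r}$ on $[0,\infty)$. The claim then follows from the well-known MGF of an $\mathrm{Exp}(1)$ random variable, computed as $\int_0^\infty e^{(t-1)r}\, dr = (1-t)^{-1}$ for $t < 1$.

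There is no real obstacle here; the only thing to emphasize in the write-up is the condition $t < 1$, which is exactly what is needed for the integral $\int_0^\infty e^{-(1-t)r^2}r\, dr$ (equivalently $\int_0^\infty e^{(t-1)r}\, dr$) to converge. For $t \geq 1$ the integral diverges and the MGF is infinite, which is consistent with the pole at $t = 1$ in the formula $(1-t)^{-1}$. Either the direct polar-coordinate computation or the exponential-distribution reduction is short enough to present as a single display.
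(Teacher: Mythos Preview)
Your proof is correct; both the direct polar-coordinate computation and the reduction to an $\mathrm{Exp}(1)$ random variable are valid and standard. The paper itself states this result as a fact without proof, so there is nothing to compare against—your write-up (either route) would be a fine addition.
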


We will need sharp bounds on the expected value of $\ln|g + c|^2$ for Gaussian $g$ and fixed $c$. First, we prove an estimate on the exponential integral function.

\begin{fact}\label{fact:ei-identity}
For $x \geq 0$ it holds  that
\[
\Ei(-x) = \int_{-\infty}^{-x}\frac{e^t}{t}dt = \gamma + \ln(x) + \sum_{k=1}^\infty \frac{(-x)^k}{k \cdot k!}\leq \gamma + \ln(x) - x + \frac{x^2}{4}.
\]
\end{fact}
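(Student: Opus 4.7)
The plan is to treat the series identity for $\Ei(-x)$ as the standard consequence of term-by-term integration and to focus the argument on the inequality. The first two terms of the series contribute exactly $-x$ and $\tfrac{x^2}{4}$, so the stated inequality reduces to
\[
\sum_{k\ge 3}\frac{(-x)^k}{k\cdot k!} \le 0 \quad \text{for all } x \ge 0.
\]

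To establish the series identity, I would define $F(x) := \gamma + \ln x + \sum_{k\ge 1}\tfrac{(-x)^k}{k\cdot k!}$ on $(0,\infty)$ and differentiate term-by-term. The series $\sum_{k\ge 1}\tfrac{(-x)^{k-1}(-1)}{(k-1)!}$ telescopes to $(e^{-x}-1)/x$ after a shift, giving $F'(x) = \tfrac{1}{x} + \tfrac{e^{-x}-1}{x} = \tfrac{e^{-x}}{x}$, which matches $\tfrac{d}{dx}\Ei(-x) = \tfrac{e^{-x}}{x}$ obtained from the fundamental theorem of calculus. Hence $F(x)-\Ei(-x)$ is constant on $(0,\infty)$, and the constant is pinned to $0$ by matching limits as $x\to 0^+$, using the standard characterization $\lim_{x\to 0^+}\bigl(\Ei(-x) - \ln x\bigr) = \gamma$ of the Euler--Mascheroni constant.

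For the inequality, let $h(x) := \gamma + \ln x - x + \tfrac{x^2}{4} - \Ei(-x)$ for $x > 0$. By the series identity, $h(x) = -\sum_{k\ge 3}\tfrac{(-x)^k}{k\cdot k!}$, so $\lim_{x\to 0^+}h(x) = 0$ and it suffices to show that $h$ is nondecreasing on $(0,\infty)$. Direct computation yields $h'(x) = \tfrac{1-e^{-x}}{x} + \tfrac{x}{2} - 1$, so setting $g(x) := x\cdot h'(x) = 1 - e^{-x} - x + \tfrac{x^2}{2}$ we have $g(0) = 0$, $g'(x) = e^{-x} - 1 + x$ with $g'(0) = 0$, and $g''(x) = 1 - e^{-x} \ge 0$ for $x \ge 0$. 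Integrating from $0$ twice gives $g'(x) \ge 0$ and then $g(x) \ge 0$ on $[0,\infty)$; hence $h'(x) \ge 0$ on $(0,\infty)$ and $h(x) \ge h(0^+) = 0$, which is exactly the claimed inequality.

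The argument is elementary throughout; the only mild subtlety is the coincident logarithmic singularities of $\Ei(-x)$ and $\ln x$ at the origin, which the series representation absorbs cleanly, so I do not anticipate any real obstacle. An alternative route would be to group the tail $\sum_{k\ge 3}\tfrac{(-x)^k}{k\cdot k!}$ into consecutive pairs and invoke the alternating-series criterion, but this only works directly for $x \lesssim 16/3$, so the monotonicity/successive-derivative route above is preferable for uniform coverage of all $x \ge 0$.
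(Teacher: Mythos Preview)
Your proposal is correct and follows essentially the same route as the paper. Both reduce the inequality to showing the tail $\sum_{k\ge 3}\tfrac{(-x)^k}{k\cdot k!}\le 0$ by differentiating once and arriving at the equivalent statement $e^{-x}\le 1-x+\tfrac{x^2}{2}$ for $x\ge 0$; the paper quotes this Taylor inequality directly, while you prove it by two further differentiations of $g(x)=1-e^{-x}-x+\tfrac{x^2}{2}$, and you additionally supply a self-contained derivation of the series identity that the paper simply cites from Abramowitz--Stegun.
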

\begin{proof}
    The identity is due to Equation 5.1.11 in \cite{abramowitz1948handbook,}. For the inequality, we must show that the function
    \[f(x) = \sum_{k = 3}^\infty \frac{(-x)^k}{k\cdot k!}\]
    is nonpositive for $x\geq 0$. To do this, observe first that $f(0) = 0$, and 
    \begin{align*}
        f'(x) &= \sum_{k = 3}^\infty \frac{(-1)^k\cdot x^{k-1}}{k!}\\
        &= \frac{1}{x}\sum_{k=3}^\infty \frac{(-x)^{k}}{k!}\\
        &= \frac{e^{-x} - \left(1 - x + \frac{x^2}{2}\right)}{x}\\
        &\leq 0.\tag{$e^{-x}\leq 1-x+x^2/2$ for $x\geq 0$}
    \end{align*}
Therefore, $f(x)\leq 0$ for $x\geq 0$.
\end{proof}

\begin{lemma}\label{lem:better-ln-estimate}
    Let $c\in \C$. Then, $\E_{g\sim \C\N(0, 1)}[\ln|g+c|^2]\geq -\gamma + |c|^2 - |c|^4/4$.
\end{lemma}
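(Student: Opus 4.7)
The plan is to derive the closed-form identity
\[\E_{g\sim \C\N(0,1)}[\ln|g+c|^2] = \ln|c|^2 - \Ei(-|c|^2) \qquad (c\neq 0),\]
and then read off the desired bound directly from Fact~\ref{fact:ei-identity}. The first step is to apply Frullani's identity $\ln y = \int_0^\infty t^{-1}(e^{-t}-e^{-ty})\,dt$ pointwise to $y = |g+c|^2 > 0$ (which holds almost surely) and interchange expectation with integration by Fubini. This reduces the problem to computing the Laplace transform of $|g+c|^2$.

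The Laplace transform is a short Gaussian moment-generating-function calculation: writing $g = g_1 + ig_2$ with $g_1,g_2 \sim \N(0,1/2)$ independent and completing the square in each coordinate will give
\[\E e^{-t|g+c|^2} = \frac{1}{1+t}\exp\!\left(-\frac{|c|^2 t}{1+t}\right).\]
Substituting this back into the Frullani expression and changing variables via $s = |c|^2 t/(1+t)$ collapses the $t$-integral to
\[\E\ln|g+c|^2 = C + \int_0^{|c|^2}\frac{1-e^{-s}}{s}\,ds\]
for some constant $C$ independent of $c$. Setting $c=0$ and using Fact~\ref{fact:gamma-values} (which gives $\gamma_{\C,0}^2 = e^{-\gamma}$, equivalently $\E\ln|g|^2 = -\gamma$) pins down $C = -\gamma$.

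To finish, I would recognize the integral in closed form: both $\gamma + \ln x - \Ei(-x)$ and $\int_0^x (1-e^{-s})/s\,ds$ have derivative $(1-e^{-x})/x$, and they agree at $x=0$ by the power series in Fact~\ref{fact:ei-identity}, so they are identically equal. Combined with the previous step, this gives $\E\ln|g+c|^2 = \ln|c|^2 - \Ei(-|c|^2)$, and then the inequality $\Ei(-x)\leq \gamma + \ln x - x + x^2/4$ from Fact~\ref{fact:ei-identity}, applied at $x = |c|^2$, yields exactly $\E\ln|g+c|^2 \geq -\gamma + |c|^2 - |c|^4/4$. The only step that requires a bit of care is justifying the Fubini interchange, which follows from dominated convergence once one notes that $e^{-t} - \E e^{-t|g+c|^2}$ is $O(t)$ near $0$ and $O(1/t)$ at infinity; everything else is bookkeeping around identities already collected in the preliminaries.
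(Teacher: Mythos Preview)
Your proof is correct and follows the same two-step structure as the paper: establish the closed-form identity $\E_{g\sim\C\N(0,1)}\ln|g+c|^2 = \ln|c|^2 - \Ei(-|c|^2)$, then apply the inequality in Fact~\ref{fact:ei-identity}. The only difference is that the paper imports the identity directly from \cite[Eqn.~35, Thm.~1]{moser2020}, whereas you derive it from scratch via Frullani's representation $\ln y = \int_0^\infty t^{-1}(e^{-t}-e^{-ty})\,dt$, the explicit Laplace transform $\E e^{-t|g+c|^2} = (1+t)^{-1}\exp\bigl(-|c|^2 t/(1+t)\bigr)$, and the substitution $s = |c|^2 t/(1+t)$. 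Your calculation is sound (in particular, $\tfrac{dt}{t(1+t)} = \tfrac{ds}{s}$ makes the change of variables clean, and the Fubini step is justified by the $O(t)$/$O(1/t)$ asymptotics you note). The payoff of your route is a fully self-contained argument that does not rely on an external reference; the paper's route is shorter. As a side remark, once you reach $\E\ln|g+c|^2 = -\gamma + \int_0^{|c|^2}\tfrac{1-e^{-s}}{s}\,ds$ you could also finish directly via $1-e^{-s}\geq s - s^2/2$ without passing through $\Ei$, which would bypass Fact~\ref{fact:ei-identity} entirely.
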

\begin{proof}
    Define $x = |c|^2$. By \cite[Eqn. 35, Thm. 1]{moser2020} we have the identity
    \[\E_{g\sim \C\N(0,1)}[\ln|g+c|^2] = - \ln(x) - \Ei(x).\]
    By \cref{fact:ei-identity}, this is at least $-\gamma + x - x^2/4$, as desired.
\end{proof}

\subsection{Permanent}
	For a matrix $A\in \C^{n\times n}$, its permanent is defined as
	\[ \per(A):=\sum_{\sigma\in S_n}\prod_{i=1}^n A_{i,\sigma(i)}. \]

    On the domain of positive semidefinite matrices, the permanent has some nice properties. For example, it is monotone w.r.t. the Loewner order.
 	
	\begin{lemma}[e.g., \cite{AGOS17}]
		\label{lem:permanent-monotone}
		If $A, B\in \C^{n\times n}$ are hermitian and $A\succeq B\succeq 0$, then
		\[ \per(A)\geq \per(B). \]
	\end{lemma}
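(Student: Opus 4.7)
The plan is to prove the inequality by lifting both sides into expectations of Gaussian moments via Wick's formula (\cref{lem:cnd}), then exploiting the circular symmetry of complex Gaussians. Set $C:=A-B\succeq 0$ and fix Gram factorizations $B=VV^\dag$, $C=WW^\dag$ with $V\in\C^{n\times d}$, $W\in\C^{n\times d'}$; denote rows by $v_i$ and $w_i$ respectively. Then $A=[V\mid W][V\mid W]^\dag$ is a Gram factorization of $A$ that extends the one for $B$. Applying Wick's formula to both matrices yields
\[
\per(B)=\E_{g}\prod_{i=1}^n|\langle v_i,g\rangle|^2,\qquad \per(A)=\E_{(g,h)}\prod_{i=1}^n|\langle v_i,g\rangle+\langle w_i,h\rangle|^2,
\]
with $g\sim\C\N(0,I_d)$ and $h\sim\C\N(0,I_{d'})$ independent.

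By Fubini, it suffices to check the pointwise bound $\E_h\prod_i|c_i+Y_i|^2\geq\prod_i|c_i|^2$ for every fixed $g$, where $c_i:=\langle v_i,g\rangle$ and $Y_i:=\langle w_i,h\rangle$; taking expectation over $g$ then gives $\per(A)\geq\per(B)$. Writing $P:=\prod_{i=1}^n(c_i+Y_i)$, one has $\prod_i|c_i+Y_i|^2=|P|^2$, so the trivial variance inequality $\E_h|P|^2\geq|\E_hP|^2$ reduces the whole problem to computing $\E_hP$ and checking that it equals $\prod_i c_i$.

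The crux, which I expect to be the only non-bookkeeping step, is precisely this identity $\E_hP=\prod_i c_i$. Expanding
\[
P=\sum_{S\subseteq[n]}\Bigl(\prod_{i\in S}c_i\Bigr)\Bigl(\prod_{i\notin S}Y_i\Bigr),
\]
the claim reduces to showing $\E\prod_{i\in T}Y_i=0$ for every nonempty $T\subseteq[n]$. This is where circular symmetry of $\C\N$ enters: by the definition of $\C\N(0,I_{d'})$ in \cref{def:gaussian} one has $\E h_kh_l=0$ (only $\E h_k\bar h_l$ can be nonzero), so the pseudo-covariance $\E Y_iY_j=0$ for all $i,j$, and Wick's theorem for circular complex Gaussians then expresses $\E\prod_{i\in T}Y_i$ as a sum over pairings of such vanishing pseudo-covariances. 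Chaining everything gives $\per(A)=\E_g\E_h|P|^2\geq\E_g|\E_hP|^2=\E_g\prod_i|c_i|^2=\per(B)$.

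The only genuinely technical point is the vanishing of the pseudo-moments $\E\prod_{i\in T}Y_i$, which is why this proof uses complex (rather than real) Gaussians in an essential way; over $\R$ the analogous pseudo-covariance $\E Y_iY_j$ need not vanish and the argument would break. As a contingency, I would keep in mind the alternative of reducing to a rank-one update $\per(B+uu^\dag)\geq\per(B)$ by induction on $\operatorname{rank}(C)$ (or a continuity argument along $B+tC$ for $t\in[0,1]$), but the Wick-formula proof above is cleaner and avoids any case analysis.
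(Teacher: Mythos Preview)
Your argument is correct, but it takes a genuinely different route from the paper. The paper's proof is a one-liner via the tensor-power identity $\per(M)=\tfrac{1}{n!}\,1_{S_n}^\dag M^{\otimes n}1_{S_n}$: since $A\succeq B\succeq 0$ implies $A^{\otimes n}\succeq B^{\otimes n}\succeq 0$, the inequality is just $1_{S_n}^\dag A^{\otimes n}1_{S_n}\geq 1_{S_n}^\dag B^{\otimes n}1_{S_n}$. Your proof instead stays in the Gaussian framework of \cref{lem:cnd}: you factor $A=[V\mid W][V\mid W]^\dag$ with $WW^\dag=A-B$, condition on the $V$-part of the Gaussian, and use the variance bound $\E_h|P|^2\geq|\E_hP|^2$ together with the vanishing of all holomorphic moments $\E_h\prod_{i\in T}Y_i$ for nonempty $T$ (a direct consequence of the circular symmetry $h\mapsto e^{i\theta}h$). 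The paper's approach is shorter and, notably, works verbatim over $\R$ as well; your approach, as you observe, genuinely needs the complex Gaussian structure. On the other hand, your proof is self-contained within the Wick-formula toolkit the paper already develops, and the pointwise inequality $\E_h\prod_i|c_i+Y_i|^2\geq\prod_i|c_i|^2$ is a nice byproduct.
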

	\begin{proof}[Proof Sketch]
		The statement of the lemma follows, because $A\succeq B\succeq 0$ implies that $A^{\otimes n}\succeq B^{\otimes n}\succeq 0$. So, if $1_{S_n}$ is the indicator vector of all permutations in $\R^{n\otimes n}$,
		$$ \per(A) = \frac{1}{n!} 1_{S_n}^\dag A^{\otimes n}1_{S_n} \geq \frac{1}{n!} 1_{S_n}^\dag B^{\otimes n}1_{S_n} = \per(B)$$
		as desired.
	\end{proof}

\begin{lemma}\label{lem:cnd}
    For any PSD matrix $V V^\dag$ with $V\in \R^{n\times d}$, we have
    \[\E_{x\sim \N(0, I)}\left[\prod_{i\in [n]}|\langle v_i, x\rangle|^2\right] = c_{n,d}^\R\cdot \E_{x\in \R^d, \|x\|_{2} = 1}\left[\prod_{i\in [n]}|\langle v_i, x\rangle|^2\right].\]
    
    For any PSD matrix $V V^\dag$ with $V\in \C^{n\times d}$, we have
    \[\per(V V^\dag) = \E_{x\sim \C\N(0, I)}\left[\prod_{i\in [n]}|\langle v_i, x\rangle|^2\right] = c_{n,d}^\C\cdot \E_{x\in \C^d, \|x\|_{2} = 1}\left[\prod_{i\in [n]}|\langle v_i, x\rangle|^2\right].\]

    Here, $v_1,\ldots, v_n$ are the rows of $V$. The proportionality constants above are defined by
    \[c_{n,d}^\R = \frac{\Gamma(n+d/2)}{ \Gamma(d/2)\cdot (d/2)^n},\qquad c_{n,d}^\C = \frac{(d+n-1)!}{(d-1)!\cdot d^n}.\]
\end{lemma}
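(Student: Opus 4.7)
My plan is to handle both identities uniformly by (i) peeling off a radial factor using the rotational symmetry of the Gaussian, and then (ii) invoking Wick's formula in the complex case to identify the Gaussian moment with the permanent.

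For the polar decomposition, note that the density of $\F\N(0, I_d)$ depends only on $\|x\|_{\ell_2}$, so the radius $r := \|x\|_{\ell_2}$ and direction $u := x/r$ are independent, with $u$ uniform on the $\ell_2$-unit sphere of $\F^d$. Since $\prod_i |\langle v_i, x\rangle|^2$ is homogeneous of degree $2n$ in $x$, writing $x = ru$ gives
\[
\E_{x \sim \F\N(0, I_d)}\!\left[\prod_i |\langle v_i, x\rangle|^2\right] \;=\; \E[r^{2n}] \cdot \E_{u}\!\left[\prod_i |\langle v_i, u\rangle|^2\right].
\]
The paper's convention $\|x\|_2 = \sqrt{\E_i |x_i|^2}$ means the spherical constraint $\|x\|_2 = 1$ in the lemma corresponds to $\|x\|_{\ell_2} = \sqrt{d}$; substituting $w = \sqrt{d}\,u$ and applying homogeneity once more contributes a factor $d^{-n}$, so the proportionality constant between the Gaussian and the lemma's spherical integral is $\E[r^{2n}]/d^n$.

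For the closed-form constants, in the real case $r^2 \sim \chi^2_d$, giving $\E[r^{2n}] = 2^n \Gamma(n+d/2)/\Gamma(d/2)$ and hence $c_{n,d} = c^{\R}_{n,d}$. In the complex case, the density $\pi^{-d} e^{-\|x\|_{\ell_2}^2}$ makes $2 r^2 \sim \chi^2_{2d}$, giving $\E[r^{2n}] = \Gamma(n+d)/\Gamma(d) = (n+d-1)!/(d-1)!$ and hence $c_{n,d} = c^{\C}_{n,d}$.

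For the additional permanent identity in the complex case, set $y_i := \langle v_i, x\rangle$. Then $y$ is a mean-zero proper complex Gaussian vector (inheriting $\E[y_i y_j] = 0$ from the $\C\N$ structure), with $\E[y_i \overline{y_j}]$ equal to an entry of $VV^\dag$ up to a transpose determined by which argument of $\langle \cdot, \cdot \rangle$ carries the conjugation. Wick's theorem for proper complex Gaussians then yields
\[
\E\!\left[\prod_i y_i \overline{y_i}\right] \;=\; \sum_{\sigma \in S_n} \prod_{i=1}^n \E[y_i \overline{y_{\sigma(i)}}] \;=\; \per(VV^\dag),
\]
where one uses $\per(M) = \per(M^T)$ to absorb any transpose. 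Nothing in the argument is truly delicate; the only pitfall worth flagging is the $\sqrt{d}$ scaling separating the paper's mean-normalized $\|\cdot\|_2$ from the counting $\ell_2$-norm produced by the polar decomposition. Once that and the convention for $\langle v_i, x\rangle$ are tracked consistently, the two constants fall out of standard $\chi^2$ moment formulas.
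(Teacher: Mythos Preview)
Your proposal is correct and follows essentially the same approach as the paper: decouple the radial and angular parts of the Gaussian via rotational invariance and homogeneity, track the $\sqrt{d}$ scaling between $\|\cdot\|_2$ and $\|\cdot\|_{\ell_2}$, and then read off the constant from the $\chi^2$ moment formula; the permanent identity is likewise obtained from Wick's formula. The only cosmetic difference is that the paper normalizes first by $\|x\|_2$ and then rescales to $\|x\|_{\ell_2}$, whereas you do these in the opposite order.
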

\begin{proof}
    The first equality in the second conclusion follows from Isserlis' theorem/Wick's formula (see 3.1.4 in \cite{barvinok2016combinatorics}).

    We prove the other equality in the real case, but the complex case can be proved similarly. Observe that 
    \begin{align*}
        \E_{x\sim \N(0, I)}\left[\prod_{i\in [n]}|\langle v_i, x\rangle|^2\right]& = \E_{x\sim \N(0, I)} \norm{x}_{2}^{2n}\cdot \left[\prod_{i\in [n]}\left|\left\langle v_i, \frac{x}{\norm{x}_{2}}\right\rangle\right|^2\right]\\
        &=\E_{x\sim \N(0, I)} \norm{x}_{2}^{2n}\cdot \E_{x\in \R^n,\norm{x}_{2}=1}\left[\prod_{i\in [n]}\left|\left\langle v_i, x\right\rangle\right|^2\right]\\
        &=d^{-n}\E_{x\sim \N(0, I)} \norm{x}_{\ell_2}^{2n}\cdot \E_{x\in \R^n,\norm{x}_{2}=1}\left[\prod_{i\in [n]}|\left\langle v_i, x\right\rangle|^2\right]
    \end{align*}
    The first identity uses that for $x\sim \N(0, I)$, $\|x\|_2$ is independent from $x/\|x\|_2$. The second identity uses that fact that if $x\sim \N(0,I)$, then $x/\|x\|_2$ is distributed uniformly on a sphere of radius $\norm{x}_{2}$.
    To conclude the proof, notice that $\E_{x\sim \N(0, I)} \norm{x}_{\ell_2}^{2n}$ is the $n^{th}$ moment of a chi-squared random variable with $d$-degrees of freedom, which is $2^n\frac{\Gamma(n+d/2)}{ \Gamma(d/2)}$.
\end{proof}

We will also require the following formula for the permanent of the sum of two matrices.

\begin{lemma}[{\cite[Page 2]{percus2012combinatorial}}]\label{lem:per-sum}
    For any two matrices $A,B\in \C^{n\times n}$,
    \[\per(A + B) = \sum_{S, T\subseteq [n], |S| = |T|}\per(A_{S, T})\cdot \per(B_{\bar{S}, \bar{T}}).\]
    When $A=I$, this simplifies to
    \[\per(I + B) = \sum_{S\subseteq [n]}\per(B_{S, S}).\]
    Above, $A_{S,T}$ is the $|S|\times |T|$ submatrix of $A$ containing rows only in $S$ and columns only in $T$.
\end{lemma}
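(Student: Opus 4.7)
The plan is to derive this identity directly from the definition of the permanent by expanding the product $\prod_i (A+B)_{i,\sigma(i)}$ via the binomial-style expansion, and then reorganizing the resulting sum.

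Starting from
\[
\per(A+B) = \sum_{\sigma\in S_n}\prod_{i=1}^n\bigl(A_{i,\sigma(i)} + B_{i,\sigma(i)}\bigr),
\]
I would expand the product over $i$ as a sum over all subsets $S\subseteq [n]$ indicating the rows $i$ for which the $A_{i,\sigma(i)}$ summand is chosen (so that the $B_{i,\sigma(i)}$ summand is chosen for $i\in \bar S$). Swapping the order of summation over $\sigma$ and $S$ gives
\[
\per(A+B) = \sum_{S\subseteq [n]}\sum_{\sigma\in S_n}\prod_{i\in S} A_{i,\sigma(i)}\prod_{i\in \bar S}B_{i,\sigma(i)}.
\]

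Next I would group permutations $\sigma$ by the image $T := \sigma(S)$. For fixed $S$ and any $T\subseteq[n]$ with $|T|=|S|$, $\sigma$ decomposes uniquely into a bijection $\sigma_1:S\to T$ and a bijection $\sigma_2:\bar S\to \bar T$, and these two pieces vary independently. Summing the two factorized products over $\sigma_1$ and $\sigma_2$ yields exactly $\per(A_{S,T})$ and $\per(B_{\bar S,\bar T})$ respectively, by the definition of the permanent of a square submatrix. This gives the claimed identity
\[
\per(A+B) = \sum_{\substack{S,T\subseteq [n]\\|S|=|T|}}\per(A_{S,T})\cdot \per(B_{\bar S,\bar T}).
\]

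For the special case $A=I$, I would observe that $\per(I_{S,T})$ vanishes whenever $S\neq T$: if $S\neq T$ but $|S|=|T|$, then there is some $i\in S\setminus T$, so the $i$-th row of the submatrix $I_{S,T}$ is identically zero (since the unique nonzero column of row $i$ in $I$ is column $i$, which is not among the selected columns $T$). Hence every permutation contributing to $\per(I_{S,T})$ has a zero factor. When $S=T$, $I_{S,S}$ is itself an identity matrix and $\per(I_{S,S})=1$. Substituting and relabeling $\bar S\leftrightarrow S$ gives $\per(I+B)=\sum_{S\subseteq[n]}\per(B_{S,S})$, as desired.

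There is no real obstacle here; the only mild subtlety is keeping the bookkeeping straight when we factor $\sigma$ into its restrictions to $S$ and $\bar S$ and verify that the two restrictions range independently over all bijections $S\to T$ and $\bar S\to \bar T$.
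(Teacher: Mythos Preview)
Your argument is correct and is the standard direct derivation of this identity. Note that the paper does not actually give its own proof of this lemma: it simply cites \cite[Page~2]{percus2012combinatorial} for the statement, so there is nothing to compare against beyond observing that your expansion-and-regrouping argument is exactly the classical one.
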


\section{Algorithm}\label{sec:alg}

We start by expanding on the basic setup of the algorithms of \cite{AGOS17, YP22}, which we briefly introduced in \cref{sec:alg-technical}. After this, we will show how \cref{lem:alg-upper,thm:rounding} imply \cref{thm:upper}. Later on, in \cref{subsec:upper,subsec:lower} respectively, we prove \cref{lem:alg-upper,thm:rounding}.

Let $A = VV^\dag$ be the PSD matrix whose permanent we wish to compute. Let $v_1,\ldots, v_n\in \CC^n$ be the rows of $V$, so by \cref{lem:cnd},
\[\per(A) = \E_{x\sim \C\N(0, I)}\left[\prod_{i\in [n]}|\langle x, v_i\rangle|^2\right].\]

Recall the log-concave maximization problem $\SDP(V)$ we associated with this problem:
\[\SDP(V) = \max_{X: X \succeq 0, \tr(X) = n} \prod_{i\in [n]}  v_i^\dag X v_i.\]
Let $X^*$ be the optimal solution to $\SDP(V)$. Note that $X^*$ can be found efficiently. It will be convenient to make a simplification to our problem. We will replace the matrix $A$ by $\tilde{A} = D^{-1/2} A D^{-1/2}$, where $D$ is a positive semidefinite diagonal matrix defined as $D_{i, i} = v_i^\dag X^* v_i$. Since $D$ is diagonal,
$$\per(A) = \per(\tilde{A})\cdot \per(D)= \per(\tilde{A})\cdot \SDP(V),$$ so it suffices to approximate $\per(\tilde{A})$ instead of $\per(A)$. Writing $\tilde{A} = \tilde{V}\tilde{V}^\dagger$ for $\tilde{V} = D^{-1/2}V$, we can see that the objective functions of $\SDP(V)$ and $\SDP(\tilde{V})$ are positive scalar multiples of each other, so $\SDP(\tilde{V})$ is also maximized by $X^*$. Note that $\tilde{A}$ enjoys the additional property $\tilde{v}_i^\dag X^*\tilde{v}_i= 1$ for all $i\in [n]$, where $\tilde{v_i} = D^{-1/2} v_i$.
Replacing $A$ by $\tilde{A}$, we will henceforth assume that the maximizer $X^*$ of $\SDP(V)$ satisfies 
\begin{equation}\label{eqn:alg-assumption}
    v_i^\dag X^* v_i = 1 \text{ for all }i\in [n].
\end{equation}
In particular, this implies $\SDP(V) = 1$. Under this assumption, $A$ satisfies an important property.

\begin{claim}\label{eqn:alg-optimality}
    We have $A\preceq I$.
    % The two following statements hold:
    % \begin{enumerate}
    %     \item $A\preceq I$, and
    %     \item $V^\dag V X^* = X^*$.
    % \end{enumerate}
\end{claim}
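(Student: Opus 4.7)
The plan is to derive the bound $A \preceq I$ from the first-order (KKT) optimality conditions of $\SDP(V)$ at $X^*$. Passing to the logarithm of the objective, $L(X) = \sum_{i=1}^n \log(v_i^\dag X v_i)$, turns $\SDP(V)$ into a concave maximization problem, since each summand is a concave function of the Hermitian variable $X$. The feasible region $\{X \succeq 0 : \tr(X) = n\}$ satisfies Slater's condition (witnessed by $X = I$), so KKT is both necessary and sufficient at $X^*$. Introducing a Lagrange multiplier $\lambda \in \R$ for the trace constraint and a dual PSD matrix $\Lambda \succeq 0$ for the conic constraint, with complementary slackness $\langle \Lambda, X^* \rangle = 0$, the stationarity condition reads
\[
\sum_{i=1}^n \frac{v_i v_i^\dag}{v_i^\dag X^* v_i} \;=\; \lambda I - \Lambda.
\]

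Under the hypothesis $v_i^\dag X^* v_i = 1$ for every $i$, the left-hand side collapses to $\sum_i v_i v_i^\dag = V^\dag V$, so the optimality identity becomes $V^\dag V + \Lambda = \lambda I$. To pin down $\lambda$, I would take the Hilbert--Schmidt inner product of both sides with $X^*$: on the left, $\langle V^\dag V, X^* \rangle = \sum_i v_i^\dag X^* v_i = n$ and $\langle \Lambda, X^* \rangle = 0$, while on the right, $\lambda \tr(X^*) = \lambda n$. Hence $\lambda = 1$, which gives $V^\dag V = I - \Lambda \preceq I$.

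It remains to transport this bound from $V^\dag V$ to $A = V V^\dag$. Since $V^\dag V$ and $V V^\dag$ have the same nonzero eigenvalues (both equal the squared singular values of $V$), the condition $V^\dag V \preceq I$ is equivalent to $\|V\|_{\op}^2 \leq 1$, which in turn is equivalent to $V V^\dag \preceq I$. The only delicate point is justifying the application of KKT at $X^*$: Slater's condition handles the constraint qualification, and the assumption $v_i^\dag X^* v_i = 1 > 0$ guarantees that $X^*$ lies in the interior of the effective domain of $L$, so the gradient formula above is valid. I do not foresee any further obstacle.
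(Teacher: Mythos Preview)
Your proof is correct and follows essentially the same approach as the paper: both derive $V^\dag V \preceq I$ from the first-order optimality of $\SDP(V)$ at $X^*$ (the paper via feasible-direction inequalities, you via the equivalent KKT stationarity with complementary slackness), and then transfer this to $A = VV^\dag$ via the shared nonzero spectrum. Your explicit identification of $\lambda = 1$ by pairing with $X^*$ is the KKT counterpart of the paper's computation $\langle V^\dag V, X^* \rangle = n$; the arguments are otherwise identical in substance.
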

\begin{proof}
    Let $f(X) = \prod_{i\in [n]}v_i^\dag X v_i$ be the objective function of $\SDP(V)$. We can compute $$\nabla(\ln f)(X) = \sum_{i\in [n]}\frac{v_i v_i^\dag}{v_i^\dag X v_i}.$$ In particular, by \cref{eqn:alg-assumption}, $\nabla(\ln f)(X^*) = \sum_{i\in [n]} v_i v_i^\dag = V^\dag V$. 

    The optimality conditions for $X^*$ imply that for all symmetric matrices $M$ with $\tr(M) = 0$ and $W_{-} \subseteq \range(X^*)$ it holds that
    \[
    \langle V^\dag V, M \rangle = \langle \nabla (\ln f)(X^*), M \rangle \leq 0.
    \]
    Here, $W_{-}$ denotes the vector space spanned by the negative eigenvectors of $M$. Now, let $Q\succeq 0$ be an arbitrary PSD matrix, and set $M = Q - \frac{\tr(Q)}{n}X^*$. $M$ satisfies both the conditions above, and therefore we have
    \begin{align*}
        0&\geq \langle V^\dag V, M\rangle\\
        &= \langle V^\dag V, Q\rangle - \frac{\tr(Q)}{n}\langle V^\dag V, X^*\rangle\\
        &= \langle V^\dag V, Q\rangle - \frac{\tr(Q)}{n} \sum_{i\in n}v_i^\dag X^* v_i\\
        &= \langle V^\dag V, Q\rangle - \tr(Q).\tag{by \cref{eqn:alg-assumption}}
    \end{align*}

    In other words, $\langle V^\dag V, Q\rangle\leq \tr(Q)$ for all $Q\succeq 0$, implying $V^\dag V\preceq I$. Therefore $A = VV^\dag \preceq I$.
    
\end{proof}
\cref{eqn:alg-optimality} immediately implies $\per(A)\leq 1$. In \cite{AGOS17, YP22}, the authors prove the complimentary inequalities
\begin{equation}\label{eqn:alg-lb}
 \per(A)\geq \frac{n!}{n^n}\cdot r(V)\geq \exp(-\gamma n)\cdot \frac{n!}{n^n}\cdot \SDP(V) = \exp(-\gamma n)\cdot \frac{n!}{n^n} \gtrsim \exp(-(\gamma + 1)n),
\end{equation}
and together, the two inequalities above provide a $e^{-(1+\gamma)n}$ approximation for $\per(A)$.

Recall \cref{lem:alg-upper,thm:rounding}, which (under \cref{eqn:alg-optimality}) improve the above inequalities to
\begin{equation}\label{eqn:alg-inequalities}
    e^{-(\gamma+1)n}\cdot \exp\left(n\cdot \ell\left(\frac{\tr(A)}{n}\right)\right)\leq \per(A)\leq \exp\left(n\cdot r\left(\frac{\tr(A)}{n}\right)\right).
\end{equation}
Here, $\ell(x) = \max_{0\leq\beta\leq 1}\ln(1-\beta) + \frac{\beta x}{(1-\beta)}- \frac{0.273\beta^2}{(1-\beta)^2x}$, and $r(x) = \ln\left(1-\frac{(1-x)^2}{20}\right)$. We are now ready to prove \cref{thm:upper}.

\begin{proof}[Proof of \cref{thm:upper}]
Let $A = VV^\dag \succeq 0$, where $V$ has rows $v_1,\ldots, v_n$. Our algorithm will first solve $\SDP(V)$ and use \cref{eqn:alg-assumption} and \cref{eqn:alg-optimality} to reduce to the case that $0\preceq A\preceq I$ and $v_i^\dag X^* v_i=1$ for all $i$, where $X^*$ is the optimal solution to $\SDP(V)$. We will then output $\exp\left(n\cdot r\left(\frac{\tr(A)}{n}\right)\right)=\left(1-\frac{\eps^2}{20}\right)^n$.

\cref{eqn:alg-inequalities} implies that the approximation factor of this algorithm is at least $e^{-(\gamma + 1 - \alpha)n}$, where $\alpha$ is the minimum value of $r(x) - \ell(x)$ over all $x\in [0,1]$. Write $\ell(x)\geq \ell'(x) := \max(0, \ln(1-\beta^*) + \frac{\beta^* x}{(1-\beta^*)}- \frac{0.273(\beta^*)^2}{(1-\beta^*)^2x}$ for $\beta^* = 0.34$. One can numerically determine that $\alpha\geq \min_{0\leq x\leq 1}r(x) - \ell'(x)\geq 10^{-4}$.
\end{proof}

\subsection{Proof of \cref{lem:alg-upper}}\label{subsec:upper}

We first prove an inequality that we will require. The proof of this inequality is inspired by an identity of Barvinok \cite{Bar20}.

\begin{lemma}\label{lem:per-cond-number-bound}
    For any matrix $0\preceq B\prec I$, $\per(I + B)\leq \det\left(\left(I-B\right)^{-1}\right)$.
\end{lemma}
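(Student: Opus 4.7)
The plan is to expand the left-hand side using the principal-minor formula of \cref{lem:per-sum} together with Wick's formula (\cref{lem:cnd}), and then compare the result to a Gaussian moment generating function. Since $B\succeq 0$, I would write $B=VV^\dag$ with rows $v_1,\dots,v_n$ of $V\in\C^{n\times d}$. Then \cref{lem:per-sum} applied to the identity and $B$ gives $\per(I+B)=\sum_{S\subseteq[n]}\per(B_{S,S})$, and applying \cref{lem:cnd} to each principal submatrix $B_{S,S}=V_SV_S^\dag$ (with $V_S$ the submatrix of $V$ with rows indexed by $S$) turns each term into the Gaussian expectation $\E_{x\sim\C\N(0,I)}\prod_{i\in S}|\langle v_i,x\rangle|^2$. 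Pulling the finite sum inside the expectation and using the elementary identity $\sum_{S}\prod_{i\in S}a_i=\prod_i(1+a_i)$ then yields the clean formula
\[\per(I+B)\;=\;\E_{x\sim\C\N(0,I)}\prod_{i=1}^n\bigl(1+|\langle v_i,x\rangle|^2\bigr).\]

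The next step is to apply the pointwise inequality $1+a\leq e^{a}$ (valid for $a\geq 0$) inside the expectation, turning the product into an exponential of a sum:
\[\per(I+B)\;\leq\;\E_{x\sim\C\N(0,I)}\exp\!\Bigl(\textstyle\sum_i|\langle v_i,x\rangle|^2\Bigr)\;=\;\E_{x\sim\C\N(0,I)}\exp\bigl(x^\dag V^\dag V x\bigr).\]
The right-hand side is the moment generating function of a quadratic form in a standard complex Gaussian. A routine multivariate generalization of \cref{fact:mgf} (diagonalize $V^\dag V$ by a unitary, which preserves the $\C\N(0,I)$ measure, and apply $\E[e^{t|g|^2}]=(1-t)^{-1}$ coordinate-wise) gives $\E\exp(x^\dag M x)=\det(I-M)^{-1}$ whenever $M\prec I$. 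The hypothesis $B=VV^\dag\prec I$ ensures $V^\dag V\prec I$ as well, since $V^\dag V$ and $VV^\dag$ have identical nonzero eigenvalues, so this formula applies with $M=V^\dag V$.

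To finish, I would invoke the Weinstein--Aronszajn / Sylvester determinant identity $\det(I-V^\dag V)=\det(I-VV^\dag)=\det(I-B)$, giving $\per(I+B)\leq\det((I-B)^{-1})$, as required. The whole argument rests on the identity $\per(I+B)=\E_x\prod_i(1+|\langle v_i,x\rangle|^2)$, which is exactly the algebraic shape needed for $1+a\leq e^a$ to reduce matters to a tractable Gaussian integral; once that is in place, nothing else is delicate. The only hypothesis used nontrivially is the strictness $B\prec I$, which is needed both for $(I-B)^{-1}$ to exist and for the complex Gaussian MGF to converge, so there is no real obstacle to the proof once the initial expansion is identified.
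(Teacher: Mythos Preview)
Your proposal is correct and follows essentially the same argument as the paper: expand $\per(I+B)$ via \cref{lem:per-sum} and \cref{lem:cnd} into $\E_{x}\prod_i(1+|\langle v_i,x\rangle|^2)$, apply $1+a\le e^a$, and evaluate the resulting Gaussian integral by diagonalizing $V^\dag V$ and using \cref{fact:mgf}. The only cosmetic difference is that the paper phrases the final step as ``the eigenvalues of $V^\dag V$ match those of $VV^\dag$'' rather than invoking the Sylvester determinant identity by name.
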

\begin{proof}
    Write $B = VV^\dag$ for $V\in \C^{\times n}$. Let $v_1,\ldots, v_n$ be the rows of $V$.
    \begin{align*}
        \per(I + B) &= \sum_{S\subseteq [n]}\per(B_{S, S})\tag{\cref{lem:per-sum}}\\
        &= \sum_{S\subseteq [n]}\E_{g\sim \C\N(0, I)}\left[\prod_{i\in S}|\langle v_i, g\rangle|^2\right]\tag{\cref{lem:cnd}}\\
        &= \E_{g\sim \C\N(0, I)}\left[\prod_{i\in [n]}\left(1 + |\langle v_i, g\rangle|^2\right)\right]\\
        &\leq \E_{g\sim \C\N(0, I)}\left[\prod_{i\in [n]}e^{|\langle v_i, g\rangle|^2}\right]\tag{$1+x\leq e^x$ for all $x$}\\
        &=\E_{g\sim \C\N(0, I)}\left[\exp\left({g^\dag V^\dag V g}\right)\right].
    \end{align*}
    Let $\sigma_1,\ldots, \sigma_n$ be the eigenvalues of $V^\dag V$. Since $g$ is invariant under unitary transformations, we can rotate $g$ into the eigenbasis of $V^\dag V$ to get that
    \begin{align*}
        \per(I+B)&\leq\E_{g\sim \C\N(0, I)}\left[\exp\left(\sum_{i\in [n]}\sigma_i |g_i|^2\right)\right]\\
        &= \prod_{i\in [n]}\E_{g\sim \C\N(0, 1)}\left[e^{\sigma_i |g|^2}\right]\tag{Independence of $g_i$}\\
        &= \prod_{i\in [n]}\frac{1}{1-\sigma_i}.\tag{\cref{fact:mgf}}
    \end{align*}

    Noting that the eigenvalues of $V^\dag V$ match those of $B=VV^\dag$, this is equal to $\det\left((I-B)^{-1}\right)$.
\end{proof}
Now, we are ready to prove \cref{lem:alg-upper}. Let $0\preceq A\preceq I$ be a matrix with $\tr(A)\leq (1-\eps)n$. Let $0\leq \lambda_1\leq\ldots\leq\lambda_n\leq 1$ be the eigenvalues of $A$, and let $v_1,\ldots, v_n$ be the corresponding eigenvectors. Let $t\in (1/2, 1]$ be a parameter we will set later, and let $i_t$ be the smallest index $i$ such that $\lambda_i > t$. For any parameter $t\in (1/2, 1]$, we can write
\[A\preceq tI + \sum_{i\geq i_t}(\lambda_i - t)v_i v_i^\dag = t\cdot \left(I + \sum_{i\geq i_t}\frac{\lambda_i - t}{t}v_i v_i^\dag\right).\]
Write $B = \sum_{i\geq i_t}\frac{\lambda_i - t}{t}v_i v_i^\dag$. Since $t > 1/2$, $B\prec I$, so it satisfies the conditions of \cref{lem:per-cond-number-bound}.
\begin{align*}
    \per(A)&\leq t^n\cdot \per(I+B)\tag{\cref{lem:permanent-monotone}}\\
    &\leq \frac{t^n}{\det(I-B)}.\tag{\cref{lem:per-cond-number-bound}}
\end{align*}
We now pick $t = 1-\eps/5$. For this choice of $t$, we must have $i_t\geq \frac{\eps n}{2}$, since otherwise, $\tr(A)\geq t\cdot (n - i_t)\geq (1-\eps/5)\cdot (1-\eps/2)n > (1-\eps)n$ contradicts the fact that $\tr(A)\leq (1-\eps)n$. We compute
\[\det(I - B) = \prod_{i\geq i_t}\left(1-\frac{\lambda_i - t}{t}\right) = \prod_{i\geq i_t}\left(2-\frac{\lambda_i}{t}\right)\geq \left(2-\frac{1}{t}\right)^{n-i_t}.\]
Plugging in the definition of $t$ and our lower bound on $i_t$, this is at least
\[\left(2-\frac{1}{1-\eps/5}\right)^{(1-\eps/2)n} = \left(\frac{1-2\eps/5}{1-\eps/5}\right)^{(1-\eps/2)n}.\]
We now have our upper bound on $\per(A)$:
\[\per(A)\leq \left(1-\eps/5\right)^n\cdot \left(\frac{1-\eps/5}{1-2\eps/5}\right)^{(1-\eps/2)n} = \left(\frac{(1-\eps/5)\cdot (1-\eps/5)^{1-\eps/2}}{(1-2\eps/5)^{1-\eps/2}}\right)^n.\]
To complete the proof, we use that $\frac{(1-\eps/5)\cdot (1-\eps/5)^{1-\eps/2}}{(1-2\eps/5)^{1-\eps/2}}\leq 1-\frac{\eps^2}{20}$ for all $\eps \in [0,1]$.

\subsection{Proof of \cref{thm:rounding}}\label{subsec:lower}

By \cref{eqn:alg-lb}, it suffices to prove a lower bound on $r(V)$. Let $X^*$ be the optimal solution to $\SDP(V)$. Consider the following randomized rounding scheme to a solution of $r(V)$: sample $g\sim \C\N(0, X^*)$ and $s_i\sim \{z\in \C : |z|=1\}$ independently for all $i\in [n]$. Let $x = \sqrt{1-\beta}g + \sqrt{\frac{\beta n}{\tr(A)}}\sum_{i\in [n]}s_i v_i$. We will use the bound
\[r(V)^{1/n}=\max_{\|x\|_2^2 = 1}\prod_{i\in [n]}|\langle v_i, x\rangle|^{2/n} \geq \frac{\E_{x}[\prod_{i\in n}|\langle v_i, x\rangle|^{2/n}]}{\E_x[\|x\|_2^2]}.\]
First we compute the denominator. 
\begin{align*}
 n\cdot \E[\norm{x}_2^2] &= n\cdot \E[\left\|x\right\|_2^2]\\
 &= \E\left[\left\|\sqrt{1-\beta} g + \sqrt{\frac{\beta n}{\tr(A)}}\sum_{i\in [n]}s_i v_i\right\|_{\ell_2}^2\right]\\
 &= (1-\beta)\E[\|g\|_{\ell_2}^2] + \frac{\beta n}{\tr(A)}\E\left[\sum_{i,j}s_i s_j \langle v_i, v_j\rangle\right] + \sqrt{\frac{\beta n}{\tr(A)}(1-\beta)}\E\left[\left\langle g, \sum_i s_i v_i\right\rangle\right]\\
 &= (1-\beta) \E[\|g\|_{\ell_2}^2] + \frac{\beta n}{\tr(A)}\sum_{i}\|v_i\|_{\ell_2}^2\tag{Independence}\\
 &= (1-\beta) \cdot \tr(X^*) + \frac{\beta n}{\tr(A)}\cdot \sum_{i\in [n]}\|v_i\|_{\ell_2}^2\tag{$g\sim \C\N(0, X^*)$, definition of $\|\cdot\|_{\ell_2}$}\\
 &= n.\tag{$\tr(X^*)=n$, $\sum_{i\in [n]}\|v_i\|_{\ell_2}^2 = \tr(VV^\dag) = \tr(A)$}
\end{align*}

So, $\E[\|x\|_2^2] = 1$. It remains to lower bound the numerator. We start by applying Jensen's inequality to get
\begin{equation}\label{eqn:9}
    \E_{x}\left[\prod_{i\in n}|\langle v_i, x\rangle|^{2/n}\right] \geq \exp\left(\frac{1}{n}\sum_{i\in [n]}\E_{x}[\ln|\langle v_i, x\rangle|^2]\right).
\end{equation}

We will bound each of the terms inside the sum. Fix some $i\in [n]$, and let $y_i = \sqrt{\frac{n}{\tr(A)}}\sum_{j\in [n]}s_j \langle v_i, v_j\rangle$ and $z_i = \langle g, v_i\rangle$, so $\langle v_i, x\rangle = \sqrt{1-\beta}z_i + \sqrt{\beta} y_i$. Notice that $z_i\sim \C\N(0, v_i^\dag X^* v_i) = \C\N(0, 1)$ by assumption. Let us bound

\begin{align*}
    \E[\ln|\langle v_i, x\rangle|^2]&= \E[\ln|\sqrt{1-\beta}z_i + \sqrt{\beta}y_i|^2] \\
    &= \ln(1-\beta) + \E\left[\ln\left|z_i + \sqrt{\frac{\beta}{1-\beta}}y_i\right|^2\right]\\
    &\geq -\gamma + \ln(1-\beta) + \frac{\beta}{1-\beta}\E[|y_i|^2] - \frac{\beta^2}{4(1-\beta)^2}\E[|y_i|^4].\tag{\cref{lem:better-ln-estimate}, $z_i\sim \C\N(0, 1)$ and is independent of $y_i$}\\
\end{align*}
We bound the second and fourth moments of $y_i$ using the below claim, whose proof we defer to \cref{sec:ln-bound}. 

\begin{claim}\label{claim:ln-bound}
    For all $i\in [n]$,
    \[\E[|y_i|^2] = \frac{n}{\tr(A)}v_i^\dag V^\dag V v_i,\]
    % \[b_i = \frac{3n^2}{10(1-\beta)^2\tr(A)^2}\left(2\left(\sum_{j\in [n]}|\langle v_i, v_j\rangle|^2\right)^2 - \sum_{j\in [n]}|\langle v_i, v_j\rangle|^4  \right).\]
    \[\E[|y_i|^4] \leq \frac{1.09 n^2}{\tr(A)^2}\|v_i\|_{\ell_2}^2.\]
\end{claim}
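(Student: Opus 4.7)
Write $c_j \coloneqq \langle v_i, v_j\rangle$ and expand $y_i = \sqrt{\tfrac{n}{\tr(A)}}\sum_j s_j c_j$. For the second moment, I would compute
\[
\E[|y_i|^2] = \tfrac{n}{\tr(A)}\sum_{j,k}\E[s_j\overline{s_k}]\, c_j\overline{c_k}
\]
and use the fact that the $s_j$ are i.i.d.\ uniform on the complex unit circle, which gives $\E[s_j\overline{s_k}] = \delta_{jk}$. This immediately yields $\E[|y_i|^2] = \tfrac{n}{\tr(A)}\sum_j |c_j|^2$, and rewriting $\sum_j |c_j|^2 = v_i^\dag\bigl(\sum_j v_j v_j^\dag\bigr) v_i = v_i^\dag V^\dag V v_i$ (using the convention of \cref{eqn:alg-optimality}) gives the first identity.

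\textbf{Fourth moment.} The key step is to expand
\[
|y_i|^4 = \tfrac{n^2}{\tr(A)^2}\sum_{j_1,k_1,j_2,k_2} s_{j_1}\overline{s_{k_1}}s_{j_2}\overline{s_{k_2}}\, c_{j_1}\overline{c_{k_1}}c_{j_2}\overline{c_{k_2}},
\]
and observe that, because $\E\bigl[\prod_a s_{a}\prod_b \overline{s_{b}}\bigr]$ vanishes unless the multisets $\{a\}$ and $\{b\}$ agree, the only surviving index patterns are: (i) $j_1=j_2=k_1=k_2$, contributing $\sum_j |c_j|^4$; (ii) $j_1=k_1$, $j_2=k_2$, $j_1\neq j_2$, contributing $\sum_{j\neq k}|c_j|^2 |c_k|^2$; and (iii) $j_1=k_2$, $j_2=k_1$, $j_1\neq j_2$, contributing the same sum. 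Summing these and completing the squared sum gives
\[
\E[|y_i|^4] = \tfrac{n^2}{\tr(A)^2}\Bigl[2\bigl(\textstyle\sum_j |c_j|^2\bigr)^{\!2} - \sum_j |c_j|^4\Bigr].
\]

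\textbf{Reducing to the stated bound.} Since we are in the setting where $A=VV^\dag\preceq I$ (\cref{eqn:alg-optimality}), the matrix $V^\dag V$ has the same nonzero spectrum as $A$ and so $V^\dag V\preceq I$, yielding $\sum_j |c_j|^2 = v_i^\dag V^\dag V v_i \leq \|v_i\|_{\ell_2}^2$; in particular $\|v_i\|_{\ell_2}^2 = A_{ii}\leq 1$. Moreover, singling out the $j=i$ term gives $\sum_j |c_j|^4 \geq |c_i|^4 = \|v_i\|_{\ell_2}^8$. Plugging both into the formula above,
\[
\E[|y_i|^4] \leq \tfrac{n^2}{\tr(A)^2}\bigl(2\|v_i\|_{\ell_2}^4 - \|v_i\|_{\ell_2}^8\bigr) = \tfrac{n^2}{\tr(A)^2}\|v_i\|_{\ell_2}^2\cdot\bigl(2t - t^3\bigr),
\]
where $t \coloneqq \|v_i\|_{\ell_2}^2\in[0,1]$. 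A single-variable calculus computation shows that $2t-t^3$ attains its maximum on $[0,1]$ at $t=\sqrt{2/3}$, with maximum value $\tfrac{4\sqrt{6}}{9} < 1.09$, closing out the claimed bound.

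\textbf{Main obstacle.} The only subtle point is the fourth-moment index counting; it is routine but one has to be careful to enumerate exactly the multiset-matching patterns, include the ``diagonal'' $j_1=j_2=k_1=k_2$ correction, and not double count. The final numerical tightening from the crude bound $2$ to the tighter $1.09$ relies critically on retaining the $-\sum_j |c_j|^4$ term and lower-bounding it by $\|v_i\|_{\ell_2}^8$; this is the place where the constraint $\|v_i\|^2 \leq 1$ from $A\preceq I$ does non-trivial work.
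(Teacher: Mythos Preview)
Your proof is correct and essentially identical to the paper's: both expand the second and fourth moments using $\E[s_j\overline{s_k}]=\delta_{jk}$ and the multiset-matching condition for $\E[s_{j_1}s_{j_2}\overline{s_{k_1}s_{k_2}}]$, arrive at $\frac{n^2}{\tr(A)^2}\bigl(2(v_i^\dag V^\dag V v_i)^2-\sum_j|\langle v_i,v_j\rangle|^4\bigr)$, bound the first term via $V^\dag V\preceq I$ and the second via $\sum_j|c_j|^4\ge |c_i|^4=\|v_i\|_{\ell_2}^8$, and finish with the one-variable inequality $2t-t^3\le \tfrac{4\sqrt{6}}{9}<1.09$ on $[0,1]$.
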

Plugging in the bounds from \cref{claim:ln-bound} and summing over all $i$, we get

\begin{align*}
    \frac{1}{n}\sum_{i\in [n]}\E_x[\ln|\langle v_i, x\rangle|^2]&\geq -\gamma + \ln(1-\beta) + \frac{\beta}{(1-\beta)\tr(A)}\sum_{i\in [n]}v_i^\dag V^\dag V v_i - \frac{0.273\beta^2n}{(1-\beta)^2\tr(A)^2}\sum_{i\in [n]}\|v_i\|_{\ell_2}^2\\
    &= -\gamma + \ln(1-\beta) + \frac{\beta}{1-\beta}\cdot \frac{\|A\|_F^2}{\tr(A)} - \frac{0.273\beta^2}{(1-\beta)^2}\cdot \frac{n}{\tr(A)}\\
    &\geq -\gamma + \ln(1-\beta) + \frac{\beta}{1-\beta}\cdot {\frac{\tr(A)}{n}} - \frac{0.273\beta^2}{(1-\beta)^2}\cdot \frac{n}{\tr(A)}\tag{$\|A\|_F^2\geq \frac{\tr(A)^2}{n}$ by Jensen's inequality}
\end{align*}
This completes the proof of \cref{thm:rounding}.

\subsubsection{Proof of \cref{claim:ln-bound}}\label{sec:ln-bound}

\begin{proof}

    We can directly compute 

    \begin{align*}
        \frac{\tr(A)}{n}\E[|y_i|^2] &= \sum_{j, k\in [n]}\E[s_j \overline{s_k}]\cdot \langle v_i, v_j\rangle \overline{\langle v_i, v_k\rangle}\\ 
        &= \sum_{j\in [n]} |\langle v_i, v_j\rangle|^2 = v_i^\dag V^\dag V v_i\tag{$s_j$ is independent from $s_k$ for $j\neq k$}
    \end{align*}
    Similarly,

    \begin{align*}
        \frac{\tr(A)^2}{n^2}\E[|y_i|^4] &= \sum_{j, k, l, m\in [n]}\E[s_j s_k \overline{s_l s_m}]\langle v_i, v_j\rangle \langle v_i, v_k\rangle \overline{\langle v_i, v_l\rangle} \overline{\langle v_i, v_m\rangle} \\
        &= \sum_{j\in [n]}|\langle v_i, v_j\rangle|^4 + 2\sum_{j\neq k}|\langle v_i, v_j\rangle\langle v_i, v_k\rangle|^2\\
        %&= 2\sum_{j, k\in [n]}|\langle v_i, v_j\rangle\langle v_i, v_k\rangle|^2 - \sum_{j\in [n]}|\langle v_i, v_j\rangle|^2\\
        &= 2\left(\sum_{j\in [n]}|\langle v_i, v_j\rangle|\right)^2 - \sum_{j\in [n]}|\langle v_i, v_j\rangle|^4\\
        &= 2(v_i^\dag V^\dag V v_i)^2 - \sum_{j\in [n]}|\langle v_i, v_j\rangle|^4\\
        &\leq 2\|v_i\|_{\ell_2}^4 - \|v_i\|_{\ell_2}^8\tag{$V^\dag V\preceq I$, since $A = VV^\dag \preceq I$}\\
        &\leq 1.09\cdot \|v_i\|_{\ell_2}^2\tag{$x^2 - x^4\leq 1.09 x$ for $x\geq 0$}
    \end{align*}

    The second equality is because $\E[s_j s_k \overline{s_l s_m}] = 0$ unless each index appears an equal number of times in $\{j, k\}$ and $\{l, m\}$. 
\end{proof}

\section{Hardness of Approximation}\label{sec:hardness}

As mentioned in \cref{sec:technical-contributions}, we will first prove \cref{thm:main-thm}. Later on, in \cref{sec:permanent-proofs}, we will use \cref{thm:main-thm} to prove \cref{thm:per-thm} using an approximation-preserving reduction to the permanent problem.

Our first result is a general inapproximability result for the $f$-mean version of $\|A\|_{2\to q}$ that is dependent on an appropriate family of gadgets $\{E_k\}$ as defined below.

\begin{theorem}\label{thm:reduction}
    Let $\F\in \{\R, \C\}$, and let $f:\R_{\geq 0}\to \R$ be a continuous increasing $2$-concave function such that $\lim_{x\to\infty} \frac{f(x)}{x^2}=0$. 
    %that induces a homogeneous mean $[\cdot]_f$. 
    Let $\delta,\gamma > 0$. Assume that for all $k$, there is a matrix $E_k:\F^k\to \F^{d_k}$ satisfying the following:
    \begin{enumerate}
        % \item\label{cond:orth} $E_k^\dag E_k = \frac{d_k}{k}\cdot I_k$,
        \item\label{cond:norm} $\norm{E_k}_{2\to 2} = 1$.
        \item\label{cond:boolean} The entries of $E_k$ have magnitude equal to $\frac{1}{\sqrt{k}}$.
        \item\label{cond:sparse} for all vectors $x\in \F^k$ with $\|x\|_{\infty}\leq \delta \cdot \|x\|_{\ell_2}$, $[E_kx]_f\leq  \gamma\cdot \|x\|_{2}$.
    \end{enumerate}
    
    Then for all $\eps>0$, it is NP-Hard to distinguish between the following two cases given a matrix $A:\F^m\to \F^n$ with $\norm{A}_{2\to 2}\leq 1$.
    \begin{enumerate}
        \item Completeness: $[A]_{2\to f}= 1$, or
        \item Soundness: $[A]_{2\to f}\leq \gamma+\eps$.
    \end{enumerate}
\end{theorem}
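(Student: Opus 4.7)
The plan is to reduce from Smooth Label Cover, following the template of \cite{bhattiprolu2018inapproximability} for $2\to q$ hardness with $q\geq 1$ and adapting the gadget-based soundness analysis to the generalized $f$-mean setting here. Starting from an instance $\Phi$ of $(T,\eta)$-smooth Label Cover on a bipartite graph with label alphabets $[L]$ on the left and $[R]$ on the right and projection constraints $\pi_{u,v}\colon[L]\to[R]$ on each edge $(u,v)$, I would build a matrix $A$ whose columns are indexed by pairs $(u,i)$ with $u$ a left vertex and $i\in[L]$, and whose rows are grouped by right vertex $v$. For each $v$, the block of rows associated to $v$ is, up to normalization, a concatenation of copies of $E_R$ composed with the projections on the incident edges: the entry in row $(v,y)$ and column $(u,i)$ is proportional to $(E_R)_{y,\pi_{u,v}(i)}$ whenever $(u,v)$ is an edge of $\Phi$, and $0$ otherwise. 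The normalization is chosen so that $\|A\|_{2\to 2}\leq 1$; this should come out to a factor depending only on the degrees of $\Phi$ and on $d_R$, and the parameters $T$ and $\eta$ will be set as functions of $\delta$ and $\epsilon$.

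Completeness is short: given a labeling $\sigma$ satisfying every constraint of $\Phi$, take $x$ to be the unit vector supported equally on $\{(u,\sigma(u))\}_u$. Because $\pi_{u,v}(\sigma(u))=\sigma(v)$ on every edge, the $v$-block of $Ax$ reduces up to scaling to the single column $(E_R)_{\cdot,\sigma(v)}$ of the gadget, whose entries all have magnitude $1/\sqrt{R}$ by condition~\ref{cond:boolean}. Aggregating across $v$ yields $[Ax]_f=\|x\|_2=1$. Conversely, the $2$-concavity of $f$ (item 1 of \cref{claim:2-concave}) combined with $\|A\|_{2\to 2}\leq 1$ gives $[A]_{2\to f}\leq\|A\|_{2\to 2}\leq 1$, so the completeness side equals $1$ exactly.

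For soundness I would argue the contrapositive. Suppose a unit vector $x$ achieves $[Ax]_f>\gamma+\epsilon$, and define for each right vertex $v$ the folded vector $z^v\in\F^R$ by $z^v_j=\sum_{u\sim v}\sum_{i\,:\,\pi_{u,v}(i)=j}x_{u,i}$, rescaled so that the $v$-block of $Ax$ equals $E_Rz^v$. Averaging the definition of $[Ax]_f$ over blocks, a non-negligible fraction of right vertices $v$ must satisfy $[E_Rz^v]_f>\gamma\|z^v\|_2$; the contrapositive of condition~\ref{cond:sparse} then gives $\|z^v\|_\infty>\delta\|z^v\|_{\ell_2}$ for each such $v$. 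The $T$-smoothness of $\Phi$ guarantees that a dominant coordinate of $z^v$ is with high probability caused by a single preimage under a single $\pi_{u,v}$ rather than by many colliding preimages, so the randomized decoding that labels each $u$ by $i$ with probability proportional to $|x_{u,i}|^2$ satisfies strictly more than an $\eta$-fraction of the edges of $\Phi$, contradicting soundness.

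Two steps look hardest to me. The first is the block-averaging step: translating a large overall $[Ax]_f$ into a large per-block ratio $[E_Rz^v]_f/\|z^v\|_2$ on many $v$. This is where the hypothesis $\lim_{x\to\infty}f(x)/x^2=0$ seems essential, since without it a handful of anomalously huge entries of $Ax$ could carry the $f$-mean by themselves while remaining consistent with $\|Ax\|_2\leq 1$, and no individual block would have to be bad; the limit condition forces the $f$-mean to be controlled by the typical rather than the extreme entries. The second is the decoding step: the smoothness parameter $T$ of $\Phi$ must be chosen so that with overwhelming probability no incident $\pi_{u,v}$ has more than roughly $1/\delta^2$ preimages mapping to any single element of $[R]$, after which a standard consistency argument on the randomized labeling above yields the desired $>\eta$ guarantee and closes the reduction.
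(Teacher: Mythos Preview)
Your high-level plan is correct and follows the same Label-Cover-plus-gadget template as the paper, but the paper organizes the argument differently in a way that makes the soundness analysis much cleaner. Instead of reducing directly from Smooth Label Cover and decoding by hand, the paper invokes as a black box a ``block-sparse vector'' hardness statement (\cref{thm:block-l2-l4}, implicit in \cite{briet2015tight}): given a projection $P$ onto a subspace $W\subseteq\F^{n\times k}$, it is NP-hard to distinguish (i) $W$ contains a vector whose every $k$-block is a standard basis vector, from (ii) for every $x\in W$ with $\|x\|_{\ell_2}^2=n$, at most $\eps' n$ blocks are simultaneously spiky ($\|x_i\|_\infty\geq\delta$) and of bounded norm ($\|x_i\|_{\ell_2}\leq 1/\alpha$). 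The reduction is then simply $A=(I_n\otimes E_k)P$, and all the Label-Cover-specific smoothness and decoding work that you sketch as your ``second hard step'' is absorbed into \cref{thm:block-l2-l4}.

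Your ``first hard step'' --- passing from a large global $[Ax]_f$ to many individually bad blocks --- is where the paper does its real work, and your sketch is too optimistic here. The claim ``a non-negligible fraction of $v$ satisfy $[E_Rz^v]_f>\gamma\|z^v\|_2$'' does not follow just by averaging, because the blocks have wildly varying $\|z^v\|_2$. The paper handles this with a four-way partition of the block indices: $V_0$ (spiky and bounded norm --- few by the soundness of \cref{thm:block-l2-l4}), $V_1$ (tiny norm --- contribute at most $f(\alpha)$), $V_2$ (moderate norm and \emph{relatively} smooth, so condition~\ref{cond:sparse} applies and contributes at most $f(\gamma\cdot\text{something})$), and $V_3$ (huge norm --- few by Markov, and their contribution is controlled precisely by $\lim_{x\to\infty}f(x)/x^2=0$). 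The four bounds are then combined via Jensen on $t\mapsto f(\sqrt{t})$. This partition is the concrete mechanism that replaces your averaging step; your direct-from-Label-Cover route would need the same partition argument, together with a separate proof of \cref{thm:block-l2-l4}, so the paper's modular route is strictly less work.
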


The proof of the above theorem is in \cref{sec:reduction} and closely follows the arguments used in \cite{briet2015tight, bhattiprolu2018inapproximability}. In order to instantiate it, we will have to construct a family of gadgets $\{E_k\}$ that have $\|E_k\|_{2\to 2} = 1$, but at the same time have small $2\to f$-norm when restricted to ``smooth'' vectors.

\begin{definition}
        For $k \geq 1$, let us define
        $E^{(\R)}_k \in \R^{2^k \times k}$ as the matrix whose rows consist of the members of $\frac{1}{\sqrt{k}}\cdot \{-1,+1\}^k$ ordered arbitrarily. Similarly, we define $E^{(\C)}_k \in \C^{4^k \times k}$ as the matrix whose rows consist of the members of $\frac{1}{\sqrt{k}}\cdot \{-1,+1, -i, +i\}^k$ ordered arbitrarily.
\end{definition}
Observe that these matrices are normalized so that $\|E_k^{(\F)}\|_{2\to 2} = 1$. The following Lemma shows that condition 3 of \cref{thm:reduction} is satisfied with $\gamma\approx \gamma_{\F, p}$.

\begin{lemma}
\label{lem:A2}
Let $\F \in \{\R, \C\}$. Let $f$ be an absolutely continuous $2$-concave increasing function. Let $x\in \F^k$, and $E= E^{(\F)}_k$.
    For all $0 < \delta < 1$, if $\|x\|_{\infty} \leq \delta \|x\|_{\ell_2}$ then 
    \[
    \left[\frac{Ex}{\|x\|_2}\right]_f^f  \leq 
    [g]_f^f + C \cdot \left(-\int_0^{C\delta}  \min(0, f(u)) + \delta\cdot \left(\max(0, f(2\sqrt{\log(1/\delta)})) +2 f'(1)\right)\right),
    \]
     where $g\sim \F \N(0, 1)$ and $C > 0$ is a universal constant.
    In particular if $f = f_p$ for some $-1<p<2$, we have

    \[[Ex]_{f_p}\leq \|x\|_2\cdot (\gamma_{\F,p} + \eps_\delta),\]
    where $\eps_\delta\to 0$ as $\delta\to 0$.
    
\end{lemma}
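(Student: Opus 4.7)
The plan is to reduce the claim, after normalizing so that $\|x\|_2 = 1$ (equivalently $\|x\|_{\ell_2} = \sqrt{k}$ and $\|x\|_\infty \le \delta\sqrt{k}$), to a quantitative comparison of expectations over a Rademacher-type sum with its Gaussian limit. Observe that a uniformly random coordinate of $Ex$ is distributed exactly as
\[
Y \;:=\; \frac{1}{\sqrt{k}}\sum_{j=1}^{k} \epsilon_j\, x_j,
\]
where $\epsilon_1,\dots,\epsilon_k$ are i.i.d.\ uniform on $\{\pm 1\}$ in the real case and on $\{\pm 1,\pm i\}$ in the complex case. This $Y$ has mean $0$, variance $1$, and each summand has magnitude at most $\delta$. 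Thus $[Ex/\|x\|_2]_f^f = \E\, f(|Y|)$ and $[g]_f^f = \E\, f(|g|)$ for $g \sim \F\N(0,1)$, and the goal becomes an upper bound of the form $\E f(|Y|) \le \E f(|g|) + \text{error}$. I would split $[0,\infty)$ into three regions at $\delta$-dependent scales: the near-zero region $I_0 = [0, C\delta)$, the bulk $I_1 = [C\delta, M]$ with $M := 2\sqrt{\log(1/\delta)}$, and the tail $I_2 = (M, \infty)$, and bound each contribution separately.

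On the bulk $I_1$, the Berry--Esseen-type comparison of \cref{lem:A1} applies directly with test function $f$ restricted to $I_1$, yielding an error of order $\delta$ times a Lipschitz-type constant for $f$ on $I_1$. Using \cref{claim:2-concave}, $2$-concavity gives that $f'(u)/u$ is non-increasing, so $f'(u) \le f'(1)\cdot u$ on $[1, M]$ and $f'(u) \le f'(1)/(C\delta)$ on $[C\delta, 1]$; the bulk error is thus absorbed into $\delta\cdot(f'(1) + \max(0, f(M)))$ after accounting for the boundary values of $f$ at $C\delta$ and $M$. On the tail $I_2$, Hoeffding's inequality gives $\Pr(|Y| > u) \le 2 e^{-u^2/2}$ for $u \ge 1$, matching the Gaussian tail up to constants; integrating by parts with the same bound on $f'$ shows that both $\E[f(|Y|)\,\mathbf{1}_{I_2}]$ and $\E[f(|g|)\,\mathbf{1}_{I_2}]$ are of order $\delta^2\cdot(\max(0, f(M)) + f'(1))$, which is dominated by the target error.

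The most delicate region is $I_0$, where $f$ may have an integrable singularity (as in $f = \log$). Since we only need an upper bound on $\E f(|Y|)-\E f(|g|)$, it suffices to control $-\E[\min(0, f(|Y|))\,\mathbf{1}_{I_0}]$. Here I would invoke an anti-concentration inequality for Rademacher sums with bounded coefficients (Erd\H{o}s--Littlewood--Offord, or more robustly a Hal\'asz-type small-ball bound) to show that $\Pr(|Y| \le u) \le C(u + \delta)$ uniformly in $u \ge 0$ given $\max_j |x_j|/\sqrt{k} \le \delta$. Plugging this into
\[
-\E[\min(0, f(|Y|))\,\mathbf{1}_{I_0}] \;=\; -\int_0^{C\delta}\min(0, f(u))\, d\Pr(|Y| \le u)
\]
and integrating by parts produces the first term $-C\int_0^{C\delta}\min(0, f(u))\,du$ of the stated bound, with boundary contributions subsumed into the $\delta f'(1)$ term. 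This anti-concentration step is the main obstacle in the proof: the complex case requires treating the two-dimensional density carefully, e.g.\ by decomposing $\epsilon_j\in\{\pm 1,\pm i\}$ into its independent real and imaginary Rademacher components or by convolving with an $O(\delta)$-scale Gaussian mollifier.

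For the in-particular statement with $f = f_p$ and $-1 < p < 2$, I would substitute and check that each of the three error contributions vanishes as $\delta \to 0$: the integral $-\int_0^{C\delta}\min(0, f_p(u))\,du$ is $0$ for $p > 0$, equals $C\delta(1-\log(C\delta)) \to 0$ for $p = 0$, and equals $(C\delta)^{p+1}/(p+1) \to 0$ for $-1 < p < 0$ (finite precisely because $p > -1$); the tail contribution $\delta\cdot \max(0, f_p(2\sqrt{\log(1/\delta)}))$ decays since $(\log(1/\delta))^{p/2}$ grows slower than $1/\delta$ for all $p < 2$; and $\delta\cdot f_p'(1) = |p|\delta = o_\delta(1)$. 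Applying $f_p^{-1}$, which is continuous at $\gamma_{\F,p}^p > 0$, converts the $f$-mean inequality into $[Ex]_{f_p} \le \|x\|_2\,(\gamma_{\F,p} + \eps_\delta)$ with $\eps_\delta \to 0$.
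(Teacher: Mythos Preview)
Your overall architecture matches the paper's: reduce to $\E f(|Y|)$ for the random sign sum $Y$, write the expectation via the distribution function, and split the range of $|Y|$ into a near-zero piece, a bulk handled by Berry--Esseen, and a far tail handled by Hoeffding. The paper organizes the split slightly differently (first at the point $a$ with $f(a)=0$, then at $b\approx C\delta$ and $c\approx 2\sqrt{\log(1/\delta)}$), but the substance of the bulk and tail arguments is the same as yours.

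The one genuine issue is your treatment of the near-zero region. You want $\Pr(|Y|\le u)\le C(u+\delta)$ under the sole hypotheses $\sum_j |x_j|^2/k=1$ and $\max_j|x_j|/\sqrt{k}\le\delta$, and you propose Erd\H{o}s--Littlewood--Offord or Hal\'asz. Those tools do not apply here: Littlewood--Offord requires a \emph{lower} bound on the coefficients, and Hal\'asz-type bounds require arithmetic non-degeneracy, neither of which is available. The paper's observation is that no separate anti-concentration input is needed at all: Berry--Esseen (already in hand) gives $|\Pr(|Y|\le u)-\Pr(|g|\le u)|\le C\delta$, and the trivial Gaussian small-ball bound $\Pr(|g|\le u)\le Cu$ then yields exactly $\Pr(|Y|\le u)\le C(u+\delta)$. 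After integrating $\int_0^{C\delta} f'(u)\,u\,du$ by parts one recovers the term $-\int_0^{C\delta}\min(0,f(u))\,du$. This also disposes of the complex case uniformly via the multivariate Berry--Esseen theorem, so the extra two-dimensional work you anticipated is unnecessary.

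A minor point: your reference to ``the Berry--Esseen-type comparison of \cref{lem:A1}'' is circular as written, since in the paper \cref{lem:A1} \emph{is} the quantitative comparison lemma from which \cref{lem:A2} is an immediate corollary; what you actually need on the bulk is the classical (multivariate) Berry--Esseen theorem for the CDF, which the paper cites as \cref{thm:berry-esseen}. Your ``in particular'' paragraph for $f=f_p$ is correct and matches the paper's verification.
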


We prove \cref{lem:A2} in \cref{sec:appendix-A2}. The proof requires a Berry-Esseen type result for test functions of the form $f(\|.\|_2)$  applied to a sum of independent random vectors, which we prove in \cref{sec:appendix-berry-esseen}.

With these results in hand, we can now prove \cref{thm:main-thm}.

\begin{proof}[Proof of \cref{thm:main-thm}]
    We pick $\delta$ to be such that \cref{lem:A2} implies $\|Ex\|_q\leq \|x\|_2\cdot (\gamma_{\F,q} + \eps/2)$ for all $x$ satisfying $\|x\|_\infty\leq \delta\|x\|_{\ell_2}$. 

    We apply \cref{thm:reduction} to the increasing $2$-concave function $f = f_p$, gadget family $\{E^{(\F)}_k\}$, and parameters $\delta$, $\gamma=\gamma_{\F,q} + \eps/2$, and $\eps/2$. By \cref{lem:A2}, the three conditions are satisfied, implying that it is NP-Hard to distinguish the case that $\|A\|_{2\to q}=1$ and $\|A\|_{2\to q}\leq \gamma_{\F,q} + \eps$.
\end{proof}

\subsection{Proof of \cref{thm:reduction}}\label{sec:reduction}
    We will closely follow the arguments used in \cite{briet2015tight, bhattiprolu2018inapproximability}. The starting point of our reduction will be the following result implicit in \cite{briet2015tight}, which informally says that it is NP-hard to find a sparse vector in a subspace, according to a certain block-wise notion of sparsity.

\begin{theorem}[\cite{briet2015tight}]\label{thm:block-l2-l4}
        For all $\eps,\delta, \alpha>0$ and $\F\in \{\R, \C\}$, there is a $k=\poly(1/\eps, 1/\delta, 1/\alpha)$ such that given a subspace $W\subseteq \F^{n\times k}$ in the form of a projection matrix $P\in \F^{(n\times k)\times (n\times k)}$, it is $\PNP$-Hard to distinguish between the following:

        \begin{itemize}
            \item There is a vector $x\in W$ such that for all $i\in [n]$, the vector $x_i\in \F^k$ is in $\{e_1,\ldots, e_k\}$.
            \item For all vectors $x\in W$ with $\|x\|_{\ell_2}^2 = n$, the set
            \[S = \{i\in [n] : \|x_i\|_{\ell_2}\leq 1/\alpha, \|x_i\|_\infty\geq \delta\}\]
            has size at most $\eps n$.
        \end{itemize}
\end{theorem}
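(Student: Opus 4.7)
The plan is to reduce from Smooth Label Cover. Recall that Smooth Label Cover is a two-prover constraint satisfaction problem on a bipartite graph $(U, V, E)$ with label set $[k]$ and projection constraints $\pi_e : [k] \to [k]$ on each edge, satisfying: (a) it is $\PNP$-hard, with arbitrarily small soundness $s$, to distinguish a perfectly satisfiable instance from one in which at most an $s$ fraction of constraints is satisfied; (b) for every vertex and most pairs of distinct labels at it, the projections from its incident edges send the pair to distinct images (smoothness). I would use the label set $[k]$ itself as the block index set, so that each of the $n$ blocks of a vector in $W$ lives in $\F^k$.

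The subspace $W \subseteq \F^{n \times k}$ is indexed by the vertices of the label cover instance (identifying $[n]$ with the vertex set), with each block $x_v \in \F^k$ intended to encode the label of $v$. The defining constraints of $W$ are linear identifications across edges: for every edge $e = (u, v)$, $W$ enforces $x_v = \Pi_e x_u$, where $\Pi_e \in \{0,1\}^{k \times k}$ is the matrix with $(\Pi_e)_{j,i} = 1$ iff $\pi_e(i) = j$. For completeness, given a satisfying labeling $L$, set $x_v = e_{L(v)}$; since $\Pi_e e_{L(u)} = e_{\pi_e(L(u))} = e_{L(v)}$, the constraint $x_v = \Pi_e x_u$ holds, so $x \in W$ and every block is a standard basis vector, exactly as required in the \textbf{Yes} case.

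For soundness, suppose $x \in W$ has $\|x\|_{\ell_2}^2 = n$ and more than $\eps n$ vertices $v \in S$ with $\|x_v\|_{\ell_2} \leq 1/\alpha$ and $\|x_v\|_\infty \geq \delta$. I would define a randomized labeling by drawing $L(v)$ from the distribution on $[k]$ with probability proportional to $|(x_v)_i|^2$ on $v \in S$ (arbitrary off $S$). Since $x_v = \Pi_e x_u$ gives $(x_v)_j = \sum_{i \in \pi_e^{-1}(j)} (x_u)_i$, any heavy coordinate $j$ of $x_v$ forces correlated mass of $x_u$ on the preimage $\pi_e^{-1}(j)$. The bound $\|x_u\|_{\ell_2} \leq 1/\alpha$ prevents $x_u$ from being diffuse, and smoothness ensures that the heavy coordinates of $x_u$ rarely share an image under $\pi_e$; together these imply that with probability bounded below in terms of $\alpha$ and $\delta$, the labels drawn satisfy $\pi_e(L(u)) = L(v)$ on a random edge within $S$. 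Choosing the label cover soundness $s = s(\eps, \delta, \alpha)$ small enough — which forces $k$ to grow polynomially in $1/\eps, 1/\delta, 1/\alpha$ — contradicts the $\PNP$-hardness guarantee.

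The main obstacle is the soundness analysis: the hypothesis $\|x_v\|_\infty \geq \delta$ is much weaker than saying $x_v$ is close to a single basis vector, so extracting a labeling and proving it satisfies many constraints is delicate. The bound $\|x_v\|_{\ell_2} \leq 1/\alpha$ is needed to ensure the randomized labeling is not swamped by near-uniform blocks, and smoothness is the critical structural property that prevents the projection $\pi_e$ from collapsing heavy coordinates of $x_u$ onto a single image of $x_v$ in a way that would destroy the correspondence between the local heavy structure and a satisfying labeling. Tuning the parameters of the underlying label cover instance to meet the quantitative requirements in $\eps, \delta, \alpha$ is where the technical work concentrates.
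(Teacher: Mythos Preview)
The paper does not prove this statement: it is quoted as a result implicit in \cite{briet2015tight} and used as a black box to launch the reduction in \cref{thm:reduction}. Your plan --- reduce from Smooth Label Cover, encode labels as blocks in $\F^k$, and in soundness decode a labeling by sampling from the squared coordinates, using smoothness to control collisions under the projections --- is exactly the approach of \cite{briet2015tight} (building on \cite{GRSW16}; see also \cite{bhattiprolu2018inapproximability}), so you are essentially reconstructing the cited argument rather than proposing an alternative.

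One technical point to watch in your specific construction: imposing $x_v = \Pi_e x_u$ as a hard \emph{equality} for every edge is stiffer than what the sources do. When a right-vertex $v$ has several neighbors $u_1,u_2,\dots$, these equalities force $\Pi_{e_1}x_{u_1}=\Pi_{e_2}x_{u_2}=\cdots$, which in a well-connected instance collapses $W$ to something close to dimension $k$ and makes the soundness decoding awkward (you must reason about vectors satisfying a large system of exact linear identities rather than about vectors with correlated heavy coordinates across an edge). The standard move is to index blocks by only one side of the bipartition, or by edges, and to define $W$ as the \emph{range} of a lifting/folding map rather than the \emph{kernel} of edge-equality constraints; this is what lets the randomized-decoding-plus-smoothness analysis go through with the stated parameter dependence. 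Your soundness sketch has the right ingredients, but carrying it out with the equality-constraint subspace as written would need additional care.
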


    Let $0<\eps', \alpha\leq 1$ be constant parameters depending on $\delta$, $\gamma$, and $\eps$ that we will specify later. We prove hardness of the $[A]_{2\to f}$ problem by a reduction from the NP-Hard problem described in \cref{thm:block-l2-l4} with parameters $\eps', \delta$, and $\alpha$. \cref{thm:block-l2-l4} implies that there is some $k=\poly(1/\eps', 1/\delta, 1/\alpha)$ such that given a projection matrix $P\in \F^{(n\times k)\times (n\times k)}$ for a subspace $W\subseteq \F^{n\times k}$, it is NP-Hard to distinguish between the following:
        \begin{itemize}
            \item There is a vector $x\in W$ such that for all $i\in [n]$, the vector $x_i\in \F^k$ is in $\{e_1,\ldots, e_k\}$.
            \item For all vectors $x\in W$ with $\|x\|_{\ell_2}^2 = n$, the set
            \[S = \{i\in [n] : \|x_i\|_{\ell_2}\leq 1/\alpha, \|x_i\|_\infty \geq \delta\}\]
            has size at most $\eps n$.
        \end{itemize}
        
        Our reduction will map the projection matrix $P\in \F^{(n\times k)\times (n\times k)}$ to the matrix $A=(I_n\otimes E_k)\cdot P$. Note that $A\in \F^{(n\times d_k)\times (n\times k)}$. To analyze the reduction, we must prove completeness and soundness.

    \subsubsection{Completeness}
    
    If there is a vector $x\in W$ such that for all $i\in [n]$, $x_i\in \{e_1,\ldots, e_k\}$, we need to show $[A]_{2\to f}= 1$. 

    Indeed, we can consider the vector $z:=Ax = (E_k\otimes I_n)\cdot Px = (E_k\otimes I_n)x$. We have for all $i\in [n]$, $z_i = E_kx_i$. Since $x_i$ is a standard basis vector, $z_i$ must be equal to some column of $E_k$. So by Assumption \ref{cond:boolean}, all entries of $z$ have magnitude $1/\sqrt{k}$, implying $[z]_f = f^{-1}(f(\frac{1}{\sqrt{k}})) = \frac{1}{\sqrt{k}}$. Therefore $[A]_{2\to f}\geq \frac{[z]_f}{\|x\|_2} = 1$.

    On the other hand, $[A]_{2\to f}\leq \|A\|_{2\to 2} \leq \|P\|_{2\to 2}\cdot \|E_k\|_{2\to 2}\leq 1$ by \cref{claim:2-concave}.

    \subsubsection{Soundness}
    Assuming for all $x\in W$ with $\|x\|_{\ell_2}^2 \leq n$, the set
            \[S = \{i\in [n] : \|x_i\|_{\ell_2} \leq 1/\alpha, \|x_i\|_\infty\geq \delta\}\]
            has size at most $\eps' n$, we need to show $[A]_{2\to f}\leq \gamma + \eps$.

        Let $y\in \F^{n\times k}$ be an arbitrary vector with $\|y\|_2 = 1$, and set $x = \frac{1}{k}\cdot Py$ and $z = Ay = k\cdot (E_k\otimes I_n)x$. Note that because $P$ is a projection matrix, $\|x\|_{\ell_2}^2 = nk\cdot \|x\|_{2}^2 \leq n\|y\|_2^2 = n$, and $\|z\|_2\leq \|y\|_2 = 1$. By virtue of the normalization on $x$, we have $\|x_i\|_{\ell_2} = \|z_i\|_2$ for each block $i\in [n]$.
        
        We must show $[z]_f\leq \gamma + \eps$. We will upper bound the contribution of different indices $i\in [n]$ to $[z]_f$ separately. To do this, define the following partition of $[n]$:

        \[V_0 := S = \{i\in [n] : \|x_i\|_{\ell_2} \leq 1/\alpha, \|x_i\|_\infty\geq \delta\},\]
        \[V_1 := \{i\in [n] : \|x_i\|_{\ell_2} \leq \alpha, \|x_i\|_\infty\leq \delta\alpha\},\]
        \[V_2 := \{i\in [n] : \|x_i\|_{\ell_2} \geq \alpha, \|x_i\|_\infty\leq \delta\alpha\},\]
        \[V_3 := \{i\in [n] : \|x_i\|_{\ell_2} > 1/\alpha\}.\]

        For all $u\in \{0,1,2,3\}$, define $z^{(u)}\in \F^{V_u\times d_k}$ as the collection of $z_i\in \F^{d_k}$ for all $i\in V_u$. Note that 
        \begin{equation}\label{eq:terms}
            [z]_f^f = \sum_{u\in \{0,1,2,3\}} \frac{|V_u|}{|V|} \cdot [z^{(u)}]_f^f.
        \end{equation}

        % We bound each of the four terms in the above sum separately. 
        We will prove bounds on $[z^{(u)}]_f^f$ for $u \in \{0, 1, 2, 3\}$.

        For $u=0$, \cref{claim:2-concave} applied to $z_i$ implies
        \begin{equation}
            [z^{(0)}]_f^f = \E_{i\sim V_0}[z_i]_f^f\leq \E_{i\sim V_0}f(\|z_i\|_2) = \E_{i\sim V_0}f(\|x_i\|_{\ell_2})\leq f(1/\alpha).\label{eq:z0}
        \end{equation}
        
        For $u=1$, a similar application of \cref{claim:2-concave} implies
        \begin{equation}
            [z^{(1)}]_f^f = \E_{i\sim V_1}[z_i]_f^f\leq \E_{i\sim V_1}f(\|z_i\|_2) = \E_{i\sim V_1}f(\|x_i\|_{\ell_2})\leq f(\alpha).\label{eq:z1}
        \end{equation}

        For $u=2$, we have
        \begin{equation}
            [z^{(2)}]_f^f = \E_{i\sim V_2}[z_i]_f^f
            \underset{\text{Assumption \ref{cond:sparse}}}{\leq} \E_{i\sim V_2}\left[f(\gamma\cdot \|z_i\|_2)\right]
            \underset{\cref{claim:2-concave}}{\leq} f(\gamma\cdot \|z^{(2)}\|_2)
            \underset{\|z\|_2 \leq 1}{\leq} f\left(\gamma\cdot \sqrt{\frac{|V|}{|V_2|}}\right).\label{eq:z2}
        \end{equation}
        % \begin{align*}
        %     [z^{(2)}]_f^f &= \E_{i\sim V_2}[z_i]_f^f\\
        %     &\leq \E_{i\sim V_2}\left[f(\gamma\cdot \|z_i\|_2)\right]\tag{Assumption \ref{cond:sparse}}\\
        %     &\leq f(\gamma\cdot \|z^{(2)}\|_2) \tag{\cref{claim:jensen-sq}}\\
        %     &\leq f\left(\gamma\cdot \sqrt{\frac{|V|}{|V_2|}}\right)\tag{$\|z\|_2 \leq 1$}.
        % \end{align*}
        Finally, for $u=3$,
        \begin{align}
            [z^{(3)}]_f^f &= \E_{i\sim V_3}[z_i]_f^f\nonumber\\
            &\underset{\cref{claim:2-concave}}{\leq} \E_{i\sim V_3}f(\|z_i\|_2)\nonumber\\
            &= \E_{i\sim V_3}\left[\|z_i\|_2^2\cdot \frac{f(\|z_i\|_2)}{\|z_i\|_2^2}\right]\nonumber\\
            &\leq \sup_{w\geq 1/\alpha}\frac{f(w)}{w^2}\cdot \E_{i\sim V_3}\|z_i\|_2^2\nonumber\\
            &= \sup_{w\geq 1/\alpha}\frac{f(w)}{w^2}\cdot \|z^{(3)}\|_2^2\nonumber\\
            &\underset{\|z\|_2^2 = 1}{\leq} \sup_{w\geq 1/\alpha}\frac{f(w)}{w^2}\cdot \frac{|V|}{|V_3|}.\label{eq:z3}
        \end{align}

Now we are equipped to bound \cref{eq:terms}.

\begin{align*}
    [z]_f^f &\leq \sum_{u\in \{0,1,2\}}\frac{|V_u|}{|V\setminus V_3|}[z^{(u)}]_f^f + \sup_{w\geq 1/\alpha}\frac{f(w)}{w^2}\tag{\cref{eq:terms,eq:z3}}\\
    &\leq f\left(\sqrt{\sum_{u\in \{0,1,2\}}\frac{|V_u|}{|V\setminus V_3|}[z^{(u)}]_f^2}\right)+\sup_{w\geq 1/\alpha}\frac{f(w)}{w^2}\tag{Jensen's inequality for $x\to f(\sqrt{x})$}\\
    &\leq f\left(\sqrt{\sum_{u\in \{0,1,2\}}\frac{|V_u|}{(1-\alpha^2)|V|}[z^{(u)}]_f^2}\right)+\sup_{w\geq 1/\alpha}\frac{f(w)}{w^2}\tag{$|V_3|\leq \alpha^2\cdot |V_0|$ by def of $V_3$}\\
    &\leq f\left(\sqrt{ \frac{\eps'/\alpha^2 + \alpha^2 + \gamma^2}{(1-\alpha^2)}}\right) + \sup_{w\geq 1/\alpha}\frac{f(w)}{w^2}\tag{\cref{eq:z0,eq:z1,eq:z2} and $V_0\leq \eps'|V|$}\\
    &\leq f\left(\frac{\gamma + \sqrt{\eps'}/\alpha + \alpha}{\sqrt{1-\alpha^2}}\right)+ \sup_{w\geq 1/\alpha}\frac{f(w)}{w^2}\tag{$\sqrt{a+b}\leq \sqrt{a} + \sqrt{b}$, $f$ is monotone}\\
    &= f\left(\frac{\gamma + 2\alpha}{\sqrt{1-\alpha^2}}\right) + \sup_{w\geq 1/\alpha}\frac{f(w)}{w^2}.\tag{Setting $\eps' = \alpha^4$}
\end{align*}
%In fourth inequality we use  and  the soundness assumption $|V_0| \leq \eps'  |V|$ and 
%            $|V_3| \leq  \alpha^2 |V|$ (by definition of $V_3$) which together imply
 %           $\frac{|V_0|}{|V \setminus V_3|} \leq \frac{\epsilon'}{1 - \alpha^2}.$

Using the assumption that $f$ is continuous and $\lim_{x\to \infty}f(x)/x^2 = 0$, we get that the limit of the right hand side as $\alpha\to 0$ is exactly $f(\gamma)$. Therefore, there exists some $\alpha>0$ independent of $n$ such that $[z]_f\leq \gamma + \eps$. We choose $\alpha$ in our invocation of \cref{thm:block-l2-l4} accordingly, completing the proof that $[A]_{2\to f}\leq \gamma + \eps$.

\subsection{Proof of \cref{thm:per-thm}}
    \label{sec:permanent-proofs}
    
 In this section we prove \cref{thm:per-thm}.  We use the following lemma which proves that the approximability of the permanent of highly rank-deficient $n\times n$ PSD matrices is essentially the same as the approximability of of the $2\to 0$ norm.
    
    \begin{lemma}\label{lem:permanent-equivalence}
        Let $V\in \C^{n\times d}$. Then,
        \[c_{n,d}^{\C}\cdot \binom{n+d-1}{d}^{-1}\cdot \|V\|_{2\to 0}^{2n}\leq \per(VV^\dag)\leq c_{n,d}^{\C}\cdot \|V\|_{2\to 0}^{2n},\]
        where $c_{n,d}^{\C}$ is defined in \cref{lem:cnd}.
    \end{lemma}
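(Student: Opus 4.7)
I would prove the two inequalities separately. Let $q(x):=\prod_{i=1}^n \langle v_i,x\rangle$, a homogeneous holomorphic polynomial of degree $n$ on $\C^d$, so that $\prod_i|\langle v_i,x\rangle|^2 = |q(x)|^2$ and by definition $\|V\|_{2\to 0}^{2n}=\max_{\|x\|_2=1}|q(x)|^2$. The plan is to use \cref{lem:cnd} to rewrite $\per(VV^\dag)$ as an integral of $|q|^2$ and then compare integral to maximum via a Cauchy--Schwarz/multinomial calculation on the monomial expansion of $q$.

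The upper bound is essentially immediate. The complex case of \cref{lem:cnd} gives
\[\per(VV^\dag) = c_{n,d}^\C\cdot \E_{\|x\|_2=1}|q(x)|^2 \leq c_{n,d}^\C\cdot \max_{\|x\|_2=1}|q(x)|^2 = c_{n,d}^\C\,\|V\|_{2\to 0}^{2n}.\]

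For the lower bound I would instead use the Gaussian form $\per(VV^\dag) = \E_{z\sim\C\N(0,I)}|q(z)|^2$ from \cref{lem:cnd}. Expanding $q(x)=\sum_{|\alpha|=n}c_\alpha\,x^\alpha$ and invoking the standard complex-Gaussian moment identity $\E z^\alpha \overline{z^\beta}=\alpha!\,\delta_{\alpha,\beta}$ yields
\[\per(VV^\dag) = \sum_{|\alpha|=n}|c_\alpha|^2\,\alpha!.\]
Cauchy--Schwarz with weights $\alpha!$, combined with the multinomial identity $\sum_{|\alpha|=n}|x^\alpha|^2/\alpha! = \|x\|_{\ell_2}^{2n}/n!$ (obtained by substituting $y_j=|x_j|^2$ into $(y_1+\cdots+y_d)^n/n! = \sum_{|\alpha|=n}y^\alpha/\alpha!$), produces the pointwise bound
\[|q(x)|^2 \leq \Bigl(\sum_{|\alpha|=n}|c_\alpha|^2\alpha!\Bigr)\Bigl(\sum_{|\alpha|=n}\tfrac{|x^\alpha|^2}{\alpha!}\Bigr) = \per(VV^\dag)\cdot \frac{\|x\|_{\ell_2}^{2n}}{n!}.\]
Taking the max over $\|x\|_2=1$ (equivalently $\|x\|_{\ell_2}^2=d$) and rearranging gives $\per(VV^\dag)\geq (n!/d^n)\,\|V\|_{2\to 0}^{2n}$, and substituting the closed form $c_{n,d}^\C=(n+d-1)!/((d-1)!\,d^n)$ simplifies the prefactor to $c_{n,d}^\C/\binom{n+d-1}{n}$. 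In the regime $d\leq n$ relevant for the reduction to \cref{thm:per-thm}, one has $\binom{n+d-1}{n}\leq \binom{n+d-1}{d}$, so this implies the stated bound; in the regime $d>n$ one restricts to the at-most-$n$-dimensional subspace $W=\mathrm{span}(v_1,\ldots,v_n)$ and repeats the argument with $d$ replaced by $\dim W$, which likewise suffices.

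There is no serious technical obstacle: the core of the argument is a single application of Cauchy--Schwarz together with the multinomial identity. The only items to track carefully are (i) distinguishing the two normalizations of the Euclidean norm (the power mean $\|\cdot\|_2$ versus the counting $\|\cdot\|_{\ell_2}$, related by a factor of $\sqrt d$) and (ii) the algebraic simplification $n!/d^n = c_{n,d}^\C/\binom{n+d-1}{n}$ that converts the analytic constant into the combinatorial one in the statement. The conceptual step is recognizing $\prod_i|\langle v_i,x\rangle|^2 = |q(x)|^2$ for a holomorphic polynomial $q$ of degree $n$, which places us in a Bargmann--Fock setting in which Cauchy--Schwarz directly bounds the pointwise max on the sphere by the Gaussian $L^2$-norm times the reproducing-kernel factor $\|x\|_{\ell_2}^{2n}/n!$.
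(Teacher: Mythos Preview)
Your upper bound matches the paper's exactly. For the lower bound the paper takes a shorter, matrix-theoretic route: it picks the maximizer $z$ with $\|z\|_2=1$, uses $zz^\dag\preceq d\cdot I$ together with Loewner monotonicity of the permanent (\cref{lem:permanent-monotone}) to get $\per(Vzz^\dag V^\dag)\leq d^n\per(VV^\dag)$, and then evaluates the rank-one permanent on the left as $n!\prod_i|\langle v_i,z\rangle|^2$. Your Bargmann--Fock argument (monomial expansion, complex-Gaussian orthogonality $\E z^\alpha\overline{z^\beta}=\alpha!\,\delta_{\alpha\beta}$, Cauchy--Schwarz, multinomial identity) is correct and lands on the identical inequality $\per(VV^\dag)\geq (n!/d^n)\,\|V\|_{2\to 0}^{2n}$; it is longer but has the pleasant feature that it is not specific to products of linear forms --- it bounds $\max_{\|x\|_2=1}|q(x)|^2$ for \emph{any} homogeneous holomorphic $q$ of degree $n$ by $(d^n/n!)$ times its Gaussian $L^2$-norm.

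One small caveat on your final case split: the dimension-reduction step for $d>n$ does not actually buy anything, because the change of normalization in $\|\cdot\|_2$ when passing from $\C^d$ to $W$ rescales $\|V'\|_{2\to 0}^{2n}$ by exactly $(\dim W/d)^n$, cancelling the gain, so you recover $n!/d^n$ again. Since $n!/d^n=c_{n,d}^\C/\binom{n+d-1}{n}$ rather than $c_{n,d}^\C/\binom{n+d-1}{d}$, the discrepancy with the binomial in the statement persists for $d>n$ --- but the paper's own computation has the same feature, and it is immaterial for the application in \cref{sec:permanent-proofs}, where $d\ll n$.
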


    \begin{proof}
        Let $v_1,\ldots, v_n$ be the rows of $V$. For the upper bound, we can write 
        \begin{align*}
            \per(V V^\dag) &= \E_{x\sim \C N(0,I)} \prod_{i\in [n]}|\langle x, v_i\rangle|^2\\
            &= c_{n,d}\cdot \E_{\|x\|_2=1} \prod_{i\in [n]}|\langle x, v_i\rangle|^2\tag{\cref{lem:cnd}}\\
            &\leq c_{n,d}\cdot \max_{\|x\|_2 = 1} \prod_{i\in [n]}|\langle x, v_i\rangle|^2\\
            &= c_{n,d}\cdot \|V\|_{2\to 0}^{2n}.
        \end{align*}

        Next we prove the lower bound. Let $z\in \C^d$ be a vector with $\|z\|_2 = 1$ vector maximizing $\|Vz\|_{0}=\prod_{i\in [n]}|\langle z, v_i\rangle|^{1/n}$.
        We have $zz^\dag \preceq \|z\|_{\ell_2}^2\cdot I = d\cdot I$, so $V z z^\dag V^\dag\preceq d\cdot V V^\dag$. By \cref{lem:permanent-monotone}, this implies $\per(V z z^\dag V^\dag)\leq d^n\cdot \per(VV^\dag)$. Since $V z z^\dag V^\dag$ is rank 1, we can compute its permanent as $\per(V z z^\dag V^\dag) = n!\cdot \prod_{i\in [n]}|\langle z, v_i\rangle|^2$.

        \begin{align*}
            \|Vz\|_{0}^{2n}&=\max_{\|x\|_2=1}\prod_{i\in [n]}|\langle x, v_i\rangle|^2\\
            &= \frac{1}{n!}\cdot \per(V zz^\dag V^\dag)\\
            &\leq \frac{d^n}{n!}\cdot \per(V V^\dag)\\
            &= \binom{n+d-1}{d}\cdot c_{n,d}^{-1}\cdot \per(V V^\dag).
        \end{align*}
    \end{proof}

    \begin{proof}[Proof of \cref{thm:per-thm}]
        We start from \cref{thm:main-thm} for the case $\F = \C$ and $q=0$, to get that it is NP-hard to approximate $\|A\|_{2\to 0}$ within a factor of $e^{-\gamma/2 + \eps/4}$ (recall that $\gamma_{\C, 0} = e^{-\gamma/2}$ by \cref{fact:gamma-values}). 

        We reduce the problem of approximating $\|A\|_{2\to 0}$ for $A\in \C^{n\times d}$ to approximating the permanent of the positive semidefinite matrix $B=A^{(k)}(A^{(k)})^\dag$, where $A^{(k)}\in \C^{nk\times d}$ is as in \cref{claim:tensor}. By \cref{lem:permanent-equivalence}, $\per(B)$ is proportional to $\|A^{(k)}\|_{2\to 0}^{2nk}$ up to a multiplicative error of 
        
        \[\binom{nk+d-1}{d}\leq e^{\eps kn/2},\]

        which holds for $k = O(\frac{d}{n\eps^2})$ and $\eps>0$ small enough. Note that the reduction is efficient because $k$ is polynomial in the size of $A$.
        
        By \cref{claim:tensor}, we have $\|A^{(k)}\|_{2\to 0}^{2nk}= \|A\|_{2\to 0}^{2nk}$ which is hard to approximate within a factor of $e^{-kn(\gamma-\eps/2)}$.
        Therefore, it is NP-hard to approximate $\per(B)$ within a factor of $e^{-kn(\gamma - \eps)}$, where $B\in \C^{kn\times kn}$.
    \end{proof}

\printbibliography
\appendix

\section{Proof of \cref{lem:A2}}\label{sec:appendix-A2}

In this section we prove \cref{lem:A2}. We use the following corollary which we will prove in \cref{sec:appendix-berry-esseen}.

\begin{corollary}
\label{cor:ub}
    Let  $f:\R_{> 0}\to \R$ be an absolutely continuous $2$-concave increasing function with $0\in \overline{\range(f)}$. Let $\F \in \{\R, \C\}$, and let $z\in \F^n$ be a vector with $\|z\|_{\ell_2}= 1$ and $\|z\|_\infty\leq \delta$. For each $i$, let $\sigma_i$ be an independently and uniformly sampled member of $\{-1, +1\}$ if $\F = \R$ and be an independently and uniformly sampled member of $\{-1, +1, -i, +i\}$ otherwise. Then there exists a universal $C > 0$ such that $Z = \sum_{i \in [n]} \sigma_i z_i$ satisfies
    \[[Z]_f^f\leq [g]_f^f + C\cdot \left(- \int_0^{C\delta}  \min(0, f(u)) + \delta\cdot \left(f(2\sqrt{\log(1/\delta)}) + f'(1)\right)\right),\]
    where $g\sim \F \N(0, 1)$. %when $\F = \R$ and $g\sim \C \N(0, 1)$ when $\F = \C$.
\end{corollary}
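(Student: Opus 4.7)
The strategy is to obtain \cref{cor:ub} as an integrable-test-function consequence of the Berry-Esseen-type estimate \cref{lem:A1} proved in \cref{sec:appendix-berry-esseen}, applied with $h(x)=f(|x|)$. Two ingredients come from that appendix: a uniform CDF gap $\sup_t |P(|Z|>t)-P(|g|>t)| = O(\delta)$, and a pointwise density bound for $|Z|$ near $0$ via a local CLT; both use $\|z\|_\infty \leq \delta$ and $\|z\|_{\ell_2}=1$ (which ensures at least $\delta^{-2}$ non-trivial summands). With these in hand, the rest is a three-region splitting of the integration range at $t=C\delta$ and $t=T:=2\sqrt{\log(1/\delta)}$.

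I would first use absolute continuity of $f$ to write, for any $a>0$ at which both expressions below are finite,
\[
\E f(|X|) = f(a) + \int_a^\infty f'(t)\, P(|X|>t)\, dt - \int_0^{a} f'(t)\, P(|X|\leq t)\, dt,
\]
so that $\E f(|Z|)-\E f(|g|)$ splits into small-, medium-, and tail-range integrals. On the medium range $(C\delta, T)$, the Berry-Esseen CDF bound yields a contribution of size $O(\delta)\int_{C\delta}^T f'(t)\,dt = O(\delta)(f(T)-f(C\delta))$: the $\delta f(T)$ piece is exactly the $\delta f(2\sqrt{\log(1/\delta)})$ term in the statement, and $-\delta f(C\delta)$ is absorbed into $-\int_0^{C\delta}\min(0,f(u))\,du$ by monotonicity (we have $-C\delta f(C\delta)\leq -\int_0^{C\delta}f(u)\,du$ whenever $f(C\delta)<0$, and vanishes otherwise). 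On the tail $(T,\infty)$, sub-Gaussianity of $Z$ (from Hoeffding's inequality, since $\sigma_i$ are bounded and $z$ is unit $\ell_2$) gives $P(|Z|>t)\leq 2e^{-ct^2}$; combined with the $2$-concavity bound $f'(t)\leq t\, f'(1)$ for $t\geq 1$, this integrates to $O(\delta\, f'(1))$ once $cT^2 \geq \log(1/\delta)$, which is accommodated by our choice of $T$. The Gaussian tail of $g$ is of the same form.

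The delicate small range $(0,C\delta)$ must be handled separately for $Z$ and $g$ via pointwise density bounds, since $f'$ may diverge as $t\to 0^+$ (for example when $f(u)=\log u$). For $g$, the density of $|g|$ is explicit. For $Z$, the local CLT supplied by \cref{lem:A1} provides $p_{|Z|}(u)=O(1)$ on $[0,C\delta]$ in the real case and $p_{|Z|}(u)=O(u)$ in the complex case, so
\[
\int_0^{C\delta} f(u)\, p_{|Z|}(u)\, du \;\geq\; C'\cdot \int_0^{C\delta}\min(0,f(u))\,du,
\]
matching the first error term in the statement, while the positive-part contribution $\max(0,f(C\delta))\cdot P(|Z|<C\delta)=O(\delta \max(0, f(C\delta)))$ is absorbed into $O(\delta f(T))$ by monotonicity. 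The main obstacle is exactly this pointwise density estimate: it is strictly stronger than CDF-level Berry-Esseen and occupies the bulk of \cref{sec:appendix-berry-esseen}. Once it is available, the three-region integrals collect into the stated inequality after absorbing the numerical factors into the universal constant $C$.
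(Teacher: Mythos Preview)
There is a genuine gap, and it stems from a misreading of what \cref{lem:A1} actually says. In the paper, \cref{cor:ub} is literally a specialization of \cref{lem:A1}: set $k=1$ (if $\F=\R$) or $k=2$ (if $\F=\C$, identifying $\C\cong\R^2$), take $X_i=\sigma_i z_i$, and check $\E X_i=0$, $\mathrm{Cov}(\sum_i X_i)=I_k/k$, $\|X_i\|_{\ell_2}=|z_i|\leq\delta$, and $\sum_i|z_i|^2=1$. That is the entire proof. \cref{lem:A1} does not supply ``ingredients'' (a CDF gap plus a local CLT) to be combined; it already \emph{is} the integrated inequality you want.

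More importantly, your handling of the small range $(0,C\delta)$ is wrong. You invoke a ``pointwise density bound $p_{|Z|}(u)=O(1)$ via a local CLT'' and assert it ``occupies the bulk of \cref{sec:appendix-berry-esseen}''. No such result is proved there, and in fact none can exist in this form: in the real case $Z=\sum_i \sigma_i z_i$ is a discrete random variable and has no density at all. The paper's actual argument (the proof of \cref{lem:A1}, specifically Claim~\ref{claim:int1}) avoids any small-ball control on $Z$ entirely. Because $f$ is increasing, $\int_0^b f'(u)\,\Pr[X\leq u]\,du\geq 0$ and can simply be discarded; the only small-range estimate needed is for the Gaussian, namely $\Pr[\|g\|_{\ell_2}\leq u]\leq eu$ (\cref{fact:chi2-bound}), which after an integration by parts produces the $-\int_0^{C\delta}\min(0,f(u))\,du$ term. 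So the ``main obstacle'' you identify is neither present nor needed; the monotonicity of $f$ is the missing idea that replaces it.
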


Now we are ready to prove the main result of this section.
\begin{proof}[Proof of \cref{lem:A2}]
    % Fix some $0<p\leq 1$. 
    % Let us assume $\|x\|_\infty \leq \delta \|x\|_2$ for $\delta$ to be fixed later.
    % For $j \in [n]$ let us define $z_j \coloneqq \frac{x_j}{\|x\|_2}$.
    Let $z = x / \|x\|_2$ and $y = Ez = Ex/ \|x\|_2$. Let $m$ be the number of rows of $E$. Note that for  $i \in [m]$, we have
    \[
    y_i = \sum_{j \in [n]} E_{i,j}\cdot z_j.% = \frac{(Ex)_i}{\|x\|_2}.
    \]
    % For $a, b: [m] \to \C$ define
    % \[
    % \langle a, b \rangle = \E_{i \sim A} a_i \bar{b_i}.
    % \]
    Let $Z_j = E_{i,j} \cdot z_j$ be the random variable where $i$ chosen uniformly at random from $[m]$. % Let $\beta_j = |Z_j|$. 
    We invoke \cref{cor:ub} on $z/\sqrt{k}$. We verify its conditions:
    First,  $\norm{\frac{z}{\sqrt{k}}}_{\ell_2} = \norm{z}_2=1$. Second, by definition of $E_k^{(\F)}$ we can write $Z_j =  \sigma_j \cdot \frac{z_j}{\sqrt{k}}$, where the random variables $\sigma_j\in \{+1,-1\}$ when $\F=\R$ and $\sigma_j\in \{+1,-1,+i,-i\}$ chosen  independently and uniformly at random 
    Lastly,  
    $$\frac{|z_j|}{\sqrt{k}} = \frac{|x_j|}{\sqrt{k}\norm{x}_2} = \frac{|x_j|}{\norm{x}_{\ell_2}} \leq \frac{\norm{x}_\infty}{\norm{x}_{\ell_2}}\leq \delta.$$
    % Since $E \in \{E_$
    %Note that as $|E_{ij}| = \frac{1}{\sqrt{n}}$ we have $|Z_j| = \frac{|z_j|}{\sqrt{n}}$ almost surely. 
    % Second, 
    % \[
    % \sum_{j\in [n]} |Z_j|^2 = \frac{1}{n} \sum_{j \in [n]}|z_j|^2 = 1. 
    % \]
    % Furthermore,
    % \begin{align*}
    %         \sum_{j \in [n]} \E[|Z_j|^2] = \sum_{j \in [n]} \E[Z_j \bar{Z_j}] =  
    %         \frac{1}{m}\sum_{j\in [n]} \sum_{i\in [m]} E_{ij}z_j \overline{E_{ij}z_j}
    %         = \sum_{j\in [n]}  z_j\langle E^j,E^j\rangle \bar{z}_j
    %         =\frac1{n}\sum_j |z_j|^2 = 1,
    %         %\E_{i \sim [m]} \sum_{j \in [n]} Z_i \bar{Z_i} = \E_{i \sim [m]} \sum_{j \in [n]} E_{ij} z_j \overline{E_{ij} z_j} %\sum_{i \in B} \frac{|\widehat{f}(i)|^2}{\|\widehat{f}\|_{\ell_2}^2} = 1
    % \end{align*}
    % where $E^j$ denotes the $j$th column of $E$ and in the penultimate equality we used that $E^\dagger E = \frac{m}{n} I$.
    % Moreover,  we have
    % \[
    % \sum_{j \in [n]} |Z_j|^3 \leq (\sum_{j \in [n]}  |Z_j|^2) \cdot \max_{j \in [n]} |Z_j| \leq \delta.
    % \]
    % where the penultimate inequality follows from the Cauchy-Schwarz inequality.
    Now by invoking \cref{lem:A1} on the random variables $Z_1, \ldots, Z_k$, we get
    \begin{align*}
    \left[\frac{Ex}{\|x\|_2}\right]_f^f = [y]_f^f = \E_{i \sim [m]} f(y_i) = \E \big[ f\big(\sum_{j \in [n]} Z_j\big) \big]   \leq 
    [g]_f^f + C\cdot \left(- \int_0^{C\delta}  \min(0, f(u)) + \delta\cdot \left(f(2\sqrt{\log(1/\delta)}) + f'(1)\right)\right),
    \end{align*}
    as desired.

    Now assume $f = f_p$ for some $-1<p<2$. By \cref{lem:homogeneous}, $\|\cdot\|_p$ is homogeneous, and we get
    \begin{align*}
        [Ex]_f &= \|x\|_2\cdot \left[\frac{Ex}{\|x\|_2}\right]_f\\
        &\leq \|x\|_2\cdot f^{-1}\left(\gamma_{\F, p}^p +  C\cdot \left(- \int_0^{C\delta}  \min(0, f(u)) + \delta\cdot \left(f(2\sqrt{\log(1/\delta)}) + f'(1)\right)\right)\right)\\
        % &=: \|x\|_2\cdot f^{-1}\left([g^{\F}]_f^f + \eps_{\delta}\right)
    \end{align*}

    The second term is $0$ if $p>0$, otherwise it is equal to $\frac{\delta^{p+1}}{p+1}$. The third term is $2\delta$ if $p<0$, otherwise it is bounded by $4\delta(\log(1/\delta)+1)$ for $\delta$ small enough. Therefore, the limit of the right hand side as $\delta\to 0$ is $\|x\|_2\cdot \gamma_{F,p}$.
\end{proof}

\section{Proof of \cref{cor:ub}}\label{sec:appendix-berry-esseen}

\cref{cor:ub} follows directly as a special case of the following Lemma, for $k=1$ if $\F=\R$ and for $k=2$ if $\F=\C$.

\begin{restatable}{lemma}{main}
\label{lem:A1}
    Let  $f:\R_{> 0}\to \R$ be an absolutely continuous $2$-concave increasing function with $0\in \overline{\range(f)}$. Let $k \geq 1$ and let $X_1, \ldots, X_n$ are bounded independent random variables in $\R^k$ with $ \E[X_i] = 0$, $\text{Cov}(\sum_i X_i) = I_k/k$, and $\|X_i\|_{\ell_2} \leq \delta_i$ such that $\sum_i \delta_i^2\leq 1$ and $\delta_i\leq \delta$ for some $0 < \delta < 1$, then there exists $\eta_k > 0$ such that
    %\[\left[\sum X_i\right]_f^f\leq 
    \[
%\E \big[f(\|\sum_{i \in [n]} X_i\|_{\ell_2})\big] 
\left[\big\|\sum_{i \in [n]} X_i\big\|_{\ell_2}\right]_f^f  
    \leq 
    \left[\|g\|_{\ell_2}\right]_f^f - e\int_0^{C_k\delta/e}  \min(0, f(u))du + \delta\cdot \left(C_k\cdot \max(0, f(2\sqrt{\log(1/\delta)})) +2 f'(1)\right),\]

    where $g\sim N(0, I_k/k)$, and $C_k$ is the constant in \cref{thm:berry-esseen}.
\end{restatable}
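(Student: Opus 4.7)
The plan is to reduce the $f$-mean inequality to a pointwise comparison of one-dimensional CDFs and then estimate a single integral arising from the layer-cake representation. Let $S = \sum_i X_i$, $F_S(t) = \Pr[\|S\|_{\ell_2}\leq t]$, and $F_g(t) = \Pr[\|g\|_{\ell_2}\leq t]$. First I would do the third-moment bookkeeping: since $\E\|X_i\|_{\ell_2}^3 \leq \delta_i\cdot \E\|X_i\|_{\ell_2}^2$, the hypothesis $\delta_i \leq \delta$ combined with $\sum_i \E\|X_i\|_{\ell_2}^2 = \tr(I_k/k) = 1$ gives $\sum_i \E\|X_i\|_{\ell_2}^3 \leq \delta$. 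Applying the multidimensional Berry-Esseen theorem (\cref{thm:berry-esseen}) to the convex sets $\{x : \|x\|_{\ell_2}\leq t\}$ then produces the uniform CDF bound $|F_S(t) - F_g(t)| \leq C_k\delta$ for all $t\geq 0$.

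Next, applying the fundamental theorem of calculus to $f(X) - f(\eta) = \int_\eta^X f'(s)\,ds$ for small $\eta>0$ and for $X = \|S\|_{\ell_2}$ and $X = \|g\|_{\ell_2}$, then taking expectations and differencing, yields the layer-cake identity
\begin{equation*}
\E f(\|S\|_{\ell_2}) - \E f(\|g\|_{\ell_2}) \;=\; \int_0^\infty f'(t)\bigl[F_g(t) - F_S(t)\bigr]\,dt.
\end{equation*}
The condition $0\in\overline{\range(f)}$ together with $F_g(t)=O(t^k)$ as $t\to 0$ ensures the integrand is integrable at the origin and the boundary term at $0$ vanishes. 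Upper bounding this integral is then the bulk of the proof, and the natural approach is to split the integration at two thresholds $\alpha = C_k\delta/e$ and $\beta = 2\sqrt{\log(1/\delta)}$, bounding $F_g(t) - F_S(t)$ differently on each of $[0,\alpha]$, $[\alpha,\beta]$, and $[\beta,\infty)$.

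On the small region $[0,\alpha]$ the uniform bound $C_k\delta$ is too weak because $f'$ can be singular; instead I use $F_g(t) - F_S(t)\leq F_g(t)$, bound $F_g(t)\leq p_g^\star\cdot t$ via the supremum $p_g^\star$ of the density of $\|g\|_{\ell_2}$ near the origin, and apply integration by parts to reduce the resulting integral to a multiple of $-\int_0^\alpha f(t)\,p_g(t)\,dt$. The specific threshold $\alpha = C_k\delta/e$ absorbs the boundary contribution $f(\alpha)F_g(\alpha)$ as well as a residual term bleeding over from the middle region, yielding exactly the $-e\int_0^{C_k\delta/e}\min(0,f(u))\,du$ summand. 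On $[\alpha,\beta]$ I would directly apply $F_g(t) - F_S(t)\leq C_k\delta$ to obtain a contribution of at most $C_k\delta\cdot f(\beta) \leq C_k\delta\cdot \max(0,f(2\sqrt{\log(1/\delta)}))$, plus the residual $C_k\delta \cdot |f(\alpha)|$ folded into the small-region estimate. On the tail $[\beta,\infty)$, I would combine the Hoeffding-type subgaussian bound $1 - F_S(t)\lesssim e^{-t^2/(2k)}$ (guaranteed by $\|X_i\|_{\ell_2}\leq \delta_i$ with $\sum\delta_i^2\leq 1$) with the $2$-concavity inequality $f'(t)\leq f'(1)\cdot t$ for $t\geq 1$ from \cref{claim:2-concave} to show the tail integral is at most $2\delta\cdot f'(1)$.

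The main obstacle is orchestrating the small-region and middle-region estimates simultaneously: the factor $1/e$ in $\alpha = C_k\delta/e$ is precisely what allows the middle-region residual $C_k\delta\cdot |f(\alpha)|$ to be dominated by the small-region integral $-e\int_0^{\alpha}\min(0,f(u))\,du$ after the integration by parts, so that the negative-singularity contribution of $f$ is captured exactly once with the sharp leading constant $e$. A secondary delicate point is rigorously justifying the layer-cake identity under the minimal hypothesis $0\in\overline{\range(f)}$, which is exactly what is needed to keep $f$ from diverging at the origin faster than $F_g(t)$ vanishes, so that the boundary term at $0$ safely equals zero in both the $S$ and the $g$ expansions.
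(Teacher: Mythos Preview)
Your proposal is correct and essentially matches the paper's proof: both use the layer-cake representation, the Berry--Esseen CDF gap $|F_S-F_g|\le C_k\delta$, the Gaussian small-ball bound $F_g(t)\le e\,t$ combined with integration by parts near the origin, and Hoeffding together with $f'(t)\le f'(1)\,t$ on the tail. The only structural difference is that the paper explicitly anchors the split at the sign-change point $a$ of $f$, taking $b=\min(a,C_k\delta/e)$ and $c=\max(a,2\sqrt{\log(1/\delta)})$ rather than your fixed $\alpha,\beta$, which is precisely what makes the $\min(0,\cdot)$ and $\max(0,\cdot)$ fall out without the ``residual folding'' you anticipate; note also that the Hoeffding exponent should be $e^{-t^2/2}$ (from $\sum_i\delta_i^2\le 1$), not $e^{-t^2/(2k)}$.
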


Before proving \cref{lem:A1}, we state some probabilistic tools we will require in the proof.

\begin{theorem}[Multivariate Berry-Esseen \cite{bentkus2005lyapunov}]
\label{thm:berry-esseen}
    Let $k \geq 1$ and let $X_1, \ldots, X_n$ be independent random variables in $\R^k$ satisfying $\E[X_i] = 0$ for $1 \leq i \leq n$. Define $X = X_1 + \cdots + X_n$ and suppose  $\E[XX^T] = I_k$. Further let $g \sim \N(0, I_k)$. Then there exists $C_k > 0$ such that for all convex sets $U \subseteq \R^k$ it holds that  
    \[
    |\Pr[g \in U] - \Pr[X \in U]| \leq C_k \cdot \E_{i \in [n]}[\|X_i\|_{\ell_2}^3].
    \]
    In particular,
     for all $u \geq 0$ it holds that  
    % \[
    % |\Pr[g \in U] - \Pr[X \in U]| \leq C_k \cdot \E_{i \in [n]}[(\sqrt{k} \|X_i\|_2)^3].
    % \]
    \[
    |\Pr[\|g\|_{\ell_2} \leq u] - \Pr[\|X\|_{\ell_2} \leq u]| \leq C_k \cdot \E_{i \in [n]}[ \|X_i\|_{\ell_2}^3].
    \]
    % TODO: normalization
\end{theorem}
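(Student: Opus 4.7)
The plan is to prove this via a smoothing-plus-swap argument in the spirit of Bentkus's original proof. Fix a convex set $U\subseteq\R^k$ and a smoothing scale $\varepsilon>0$. First I would define a smooth surrogate for $\mathbf{1}_U$ by Gaussian convolution, $h_\varepsilon(x)=\E[\mathbf{1}_U(x+\varepsilon Z)]$ with $Z\sim\N(0,I_k)$ independent of everything else. Differentiating under the integral gives the derivative bounds $\|\nabla^j h_\varepsilon\|_\infty = O(\varepsilon^{-j})$ for $j=1,2,3$. The triangle inequality then reduces the theorem to bounding the two smoothing errors $|\Pr[X\in U]-\E h_\varepsilon(X)|$ and $|\Pr[g\in U]-\E h_\varepsilon(g)|$, together with the moment-matching discrepancy $|\E h_\varepsilon(X)-\E h_\varepsilon(g)|$.

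For the moment-matching term, introduce independent Gaussians $g_i'\sim\N(0,\Sigma_i)$ matching the covariance of $X_i$, so that $g_1'+\cdots+g_n'$ has the same distribution as $g$, and form hybrid sums $Y^{(i)}=X_1+\cdots+X_i+g_{i+1}'+\cdots+g_n'$. Telescoping,
\[
\E h_\varepsilon(X) - \E h_\varepsilon(g) = \sum_{i=1}^n \E\bigl[h_\varepsilon(Y^{(i-1)}+X_i) - h_\varepsilon(Y^{(i-1)}+g_i')\bigr].
\]
Expanding each summand to third order around $Y^{(i-1)}$, the linear and quadratic terms cancel in expectation because $X_i$ and $g_i'$ share mean $0$ and covariance $\Sigma_i$ and are independent of $Y^{(i-1)}$. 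The cubic remainder is bounded by $\|\nabla^3 h_\varepsilon\|_\infty\cdot(\E\|X_i\|_{\ell_2}^3+\E\|g_i'\|_{\ell_2}^3)=O(\varepsilon^{-3}\,\E\|X_i\|_{\ell_2}^3)$, where the Gaussian third moment of $g_i'$ is controlled by that of $X_i$ up to a constant. Summing yields $|\E h_\varepsilon(X)-\E h_\varepsilon(g)| = O\!\bigl(\varepsilon^{-3}\sum_{i=1}^n\E\|X_i\|_{\ell_2}^3\bigr)$.

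For the smoothing errors, $h_\varepsilon-\mathbf{1}_U$ is concentrated on an $O(\varepsilon\sqrt{\log k})$-tubular neighborhood of $\partial U$ up to negligible Gaussian tails, so it suffices to upper bound $\Pr[g\in(\partial U)_{O(\varepsilon\sqrt{\log k})}]$ and the analogous quantity for $X$. The Gaussian bound is supplied by Ball's reverse isoperimetric inequality, which states that the Gaussian surface area of any convex body in $\R^k$ is $O(k^{1/4})$, giving $\Pr[g\in(\partial U)_{\varepsilon}] = O(k^{1/4}\varepsilon)$. The analogous bound for $X$ is transferred either by induction on $n$ or by a bootstrap using the theorem applied to the half-space problems defining $(\partial U)_\varepsilon$. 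Balancing the moment-matching error $\varepsilon^{-3}\sum_i\E\|X_i\|_{\ell_2}^3$ against the smoothing error $k^{1/4}\varepsilon$ by choosing $\varepsilon$ as an appropriate fourth root yields the stated bound with $C_k=O(k^{1/4})$ (up to absorbable log factors).

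The main obstacle is the self-referential step: controlling $\Pr[X\in(\partial U)_\varepsilon]$ effectively requires the very Berry-Esseen bound being proved, since $(\partial U)_\varepsilon$ is itself cut out by convex inequalities. Bentkus resolves this via an induction on the Lyapunov coefficient $\sum_i\E\|X_i\|_{\ell_2}^3$ combined with a custom-designed smoothing kernel that sharpens the geometric constants so the induction closes at the optimal $k^{1/4}$ rate. Carrying out this induction cleanly---and in particular choosing the smoothing kernel so that Ball's inequality enters with no loss---is the technically delicate heart of the argument, and indeed where the original paper's main technical work lies.
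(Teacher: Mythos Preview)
The paper does not prove this statement at all: Theorem~\ref{thm:berry-esseen} is quoted as a known result from \cite{bentkus2005lyapunov} and used as a black box in the proof of Lemma~\ref{lem:A1}. There is therefore no ``paper's own proof'' to compare your proposal against. The only thing the paper adds is the ``In particular'' clause, which is immediate since the ball $\{x:\|x\|_{\ell_2}\le u\}$ is convex.

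As for your sketch itself, it is a reasonable high-level outline of the Lindeberg/Bentkus strategy, and you correctly identify the genuinely hard step: controlling $\Pr[X\in(\partial U)_\varepsilon]$ is self-referential and requires either an induction on the Lyapunov ratio or a carefully engineered smoothing kernel, which is exactly where Bentkus's original argument does its real work. Your balancing of $\varepsilon^{-3}\sum_i\E\|X_i\|_{\ell_2}^3$ against $k^{1/4}\varepsilon$ would give an exponent of $1/4$ in the Lyapunov sum rather than the linear dependence in the statement, so the sketch as written does not quite close; Bentkus's actual argument uses a more refined smoothing-and-induction scheme to avoid this loss. But again, none of this is relevant to the present paper, which simply invokes the theorem.
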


\begin{theorem}[{Multivariate Hoeffding \cite[Theorem 3]{Pin}}]\label{thm:mult-hoeff}
    Let $k \geq 1$ and let $X_1, \ldots, X_n$ be independent random variables in $\R^k$ satisfying $\E[X_i] = 0$ for $1 \leq i \leq n$. Further suppose $\|X_i\|_{\ell_2} \leq \delta_i$ almost surely for $\delta_1, \ldots, \delta_n > 0$. Define $X = X_1 + \cdots + X_n$. % and suppose  $\E[XX^T] = I_k$. 
    % Further let $g \sim \N(0, I_k)$. Then there exists $C_k > 0$ such that for all convex sets $U \subseteq \R^k$ it holds that  
    \[
    |\Pr[\|X\|_{\ell_2} \geq u]| \leq 2 \cdot e^{-u^2/2 \sum_{i \in [n]} \delta_i^2}.
    \] % https://link.springer.com/chapter/10.1007/978-1-4612-0367-4_9
\end{theorem}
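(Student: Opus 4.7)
The plan is to prove this via the classical Chernoff/MGF route, reducing to a one-step Hilbert-space comparison inequality of Pinelis type. Let $\sigma^2 := \sum_{i\in[n]}\delta_i^2$ and $S_m := X_1+\cdots+X_m$.

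\textbf{Step 1 (Chernoff reduction).} For any $\lambda>0$, Markov's inequality gives
\[
\Pr[\|X\|_{\ell_2}\geq u] \;\leq\; e^{-\lambda u}\,\E\bigl[e^{\lambda\|X\|_{\ell_2}}\bigr].
\]
So it is enough to establish the MGF bound $\E[e^{\lambda\|X\|_{\ell_2}}]\leq 2\exp(\lambda^2\sigma^2/2)$; optimizing with $\lambda=u/\sigma^2$ then yields exactly the claimed $2e^{-u^2/(2\sigma^2)}$ bound.

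\textbf{Step 2 (Induction with $\cosh$).} Since $e^{\lambda r}\leq 2\cosh(\lambda r)$ for $r\geq 0$, it suffices to show
\[
\E[\cosh(\lambda\|S_n\|_{\ell_2})] \;\leq\; \prod_{i=1}^{n}\cosh(\lambda\delta_i) \;\leq\; \exp\!\Bigl(\tfrac{\lambda^2}{2}\sigma^2\Bigr),
\]
where the second inequality is the elementary $\cosh t\leq e^{t^2/2}$. I would prove the first by induction on $n$: conditioning on $S_{n-1}$ and using independence of $X_n$, the inductive step reduces to the one-step comparison
\[
\E_Y\bigl[\cosh(\lambda\|x+Y\|_{\ell_2})\bigr] \;\leq\; \cosh(\lambda\|x\|_{\ell_2})\cdot\cosh(\lambda\delta) \qquad(\star)
\]
valid for any fixed $x\in\R^k$ and any $\R^k$-valued $Y$ with $\E Y=0$ and $\|Y\|_{\ell_2}\leq\delta$ almost surely. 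Applying $(\star)$ with $x=S_{n-1}$, $Y=X_n$, $\delta=\delta_n$ and taking outer expectation completes the induction.

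\textbf{Main obstacle.} The heart of the argument, and the only nontrivial step, is $(\star)$. The naive triangle-inequality bound $\|x+Y\|_{\ell_2}\leq\|x\|_{\ell_2}+\|Y\|_{\ell_2}$ would throw away the mean-zero structure of $Y$ and in general lose a dimension factor. The correct approach exploits that $\R^k$ is a Hilbert space, via the identity
\[
\|x+Y\|_{\ell_2}^2 \;=\; \|x\|_{\ell_2}^2 + 2\langle x,Y\rangle + \|Y\|_{\ell_2}^2.
\]
Writing $Z:=\langle Y,x/\|x\|_{\ell_2}\rangle$ for the parallel component, $Z$ is a scalar random variable with $\E Z=0$ and $|Z|\leq\delta$. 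One then verifies by direct differentiation that $r\mapsto \cosh(\lambda\sqrt{a+r^2})$ is convex on $\R$ for every $a\geq 0$, which allows reducing $(\star)$ (after conditioning on the perpendicular component $Y-Z\cdot x/\|x\|_{\ell_2}$ and using Jensen together with symmetrization $Y\leadsto Y-Y'$) to the classical scalar Hoeffding lemma $\E[\cosh(\lambda Z)]\leq \cosh(\lambda\delta)$ for bounded mean-zero $Z$. This reduction, which is precisely the content of \cite[Thm.~3]{Pin}, is where the Hilbert-space ($2$-smoothness) structure enters in an essential way, and is the only place where one cannot proceed purely by elementary manipulations.
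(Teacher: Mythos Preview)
The paper does not supply a proof of this statement; it is quoted directly as \cite[Theorem~3]{Pin} and invoked only as a black box in the proof of \cref{claim:int2}. So there is no in-paper argument to compare against.

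That said, your outline is the standard Pinelis route, and Steps~1--2 (Chernoff reduction, the $\cosh$-product induction, and the identification of the one-step inequality $(\star)$ as the crux) are correct. The gap is in your sketched derivation of $(\star)$ itself. Writing $Y = Z\cdot x/\|x\|_{\ell_2} + W$ with $Z=\langle Y,x/\|x\|_{\ell_2}\rangle$, the parallel and perpendicular parts $Z$ and $W$ are in general dependent, so ``conditioning on $W$'' does not leave $Z$ mean-zero and the subsequent Jensen/convexity step is unjustified. The symmetrization $Y\leadsto Y-Y'$ you mention does restore a usable structure, but it replaces the bound $\|Y\|_{\ell_2}\le\delta$ by $\|Y-Y'\|_{\ell_2}\le 2\delta$, which would weaken the exponent by a factor of~$4$. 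Pinelis's proof of $(\star)$ does not go through this parallel/perpendicular decomposition; it exploits $2$-smoothness of the norm more directly. (Also, minor point: \cite[Theorem~3]{Pin} \emph{is} the tail bound you are proving, not the $\cosh$ lemma $(\star)$, so citing it for $(\star)$ is circular.) In short: right strategy, right key lemma identified, but the final reduction needs a different argument than the one you wrote.
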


\begin{fact}\label{fact:limufu}
    Let $f:\R_{>0}\to \R$ be an absolutely continuous function. If $\int_0^b f(u)$ is bounded then $\lim_{u\to 0} uf(u)=0$.
\end{fact}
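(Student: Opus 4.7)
The hypothesis ``$\int_0^b f(u)$ is bounded'' should be read as: the improper integral $\int_0^b f(u)\,du$ converges to a finite value. My plan is to sandwich $u\,f(u)$ between two tail integrals that both tend to $0$, exploiting monotonicity of $f$, which holds in every invocation of this fact in the paper since in \cref{lem:A1} the function $f$ is required to be increasing and $2$-concave.

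\noindent\textbf{Steps.} First I would reduce to the case where $f$ is unbounded near $0$: if $|f|\leq M$ on some $(0, c]$, then $|u\,f(u)| \leq Mu \to 0$ trivially. Otherwise, since $f$ is increasing and unbounded near $0$, necessarily $f(u) \to -\infty$, so $f(u) < 0$ for sufficiently small $u$. For such $u$, monotonicity yields the two sandwich inequalities
\[
\int_u^{2u} f(t)\,dt \;\geq\; u\,f(u) \qquad\text{and}\qquad \int_{u/2}^{u} f(t)\,dt \;\leq\; \frac{u}{2}\,f(u).
\]
Both tail integrals vanish as $u \to 0^+$ by Cauchy convergence of $\int_0^b f$, yielding $\limsup_{u\to 0^+} u\,f(u) \leq 0$ from the first inequality and $\liminf_{u\to 0^+} u\,f(u) \geq 0$ from the second. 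Combining gives $\lim_{u\to 0^+} u f(u) = 0$.

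\noindent\textbf{Main subtlety.} The statement as literally phrased (for arbitrary absolutely continuous $f$) admits counterexamples: one can build an AC function supported on thin intervals around $u = 1/n$ with negative spikes of height $\sim n$, whose per-interval integrals form a convergent series, while $u\,f(u) \not\to 0$ along the spike locations. So the proof must use the implicit monotonicity of $f$ coming from the \cref{lem:A1} context. A direct integration-by-parts rewriting $u\,f(u) = b\,f(b) - \int_u^b f(t)\,dt - \int_u^b t\,f'(t)\,dt$ runs into circularity, because convergence of $\int_0^b t\,f'(t)\,dt$ is essentially equivalent to the desired conclusion; this reinforces that the sandwich-via-monotonicity approach is the appropriate route.
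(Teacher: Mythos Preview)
The paper states \cref{fact:limufu} without proof, so there is no paper argument to compare against directly. Your sandwich argument via monotonicity is correct: for increasing $f$ the inequalities $\int_u^{2u} f(t)\,dt \geq u\,f(u)$ and $\int_{u/2}^{u} f(t)\,dt \leq \tfrac{u}{2}\,f(u)$ hold, and both tails vanish by the Cauchy criterion for the convergent improper integral, pinning $u\,f(u)$ to $0$.

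You are also right that the statement as written is false for general absolutely continuous $f$: your spike construction (triangular bumps near $u=1/n$ of height $\sim n$ and width $\sim n^{-3}$) gives $\int_0^b f$ convergent while $u\,f(u)=1$ along the spike centers. So the fact tacitly relies on the monotonicity hypothesis present in \cref{lem:A1}, and your proposal correctly identifies and uses this. In short, your proof is sound and fills a gap the paper leaves implicit; the only thing to add is that the reduction to the ``$f$ unbounded near $0$'' case already handles the possibility $\lim_{u\to 0} f(u)$ finite, so no separate treatment of the boundary case is needed.
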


\begin{claim}
\label{claim:expectation-integral-equality}
    Let $X$ be a random variable over $\R_{> 0}$, and let $f:{\R_{> 0}} \to \R$ be an absolutely continuous function with $f(a)=0$ for some $a\in \R_{>0}$. Then, if $\E[f(X)]$ exists,
    \[
    \E[f(X)] = -\int_0^a f'(u)\Pr[X\leq u]du+ \int_a^{\infty}f'(u)\Pr[X\geq u]du+ \lim_{u\to 0}f(u)\Pr[X\leq u] - \lim_{u\to \infty}f(u)\Pr[X\geq u].
    \]
\end{claim}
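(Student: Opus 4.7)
The plan is to establish this identity by integration by parts applied to the Lebesgue--Stieltjes integral $\E[f(X)] = \int_0^\infty f(u)\,dF(u)$, where $F(u) = \Pr[X\leq u]$, splitting the range at $u=a$ where $f$ vanishes by hypothesis. Since $f$ is absolutely continuous on $\R_{>0}$, the usual integration-by-parts formula applies on any compact sub-interval $[\epsilon, a]$ or $[a, M]$ with $0<\epsilon<a<M<\infty$; this is the engine of the argument.

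On $[\epsilon, a]$ I would use $F$ itself as the antiderivative of the density and exploit $f(a)=0$, which yields
\[\int_\epsilon^a f(u)\,dF(u) = -f(\epsilon)\Pr[X\leq \epsilon] - \int_\epsilon^a f'(u)\Pr[X\leq u]\,du.\]
On $[a, M]$ I would instead use $-\Pr[X\geq u]$ (which differs from $F$ only by the constant $-1$ and is therefore also an antiderivative of the density) to avoid a divergent boundary term at infinity, and use $f(a)=0$ again:
\[\int_a^M f(u)\,dF(u) = -f(M)\Pr[X\geq M] + \int_a^M f'(u)\Pr[X\geq u]\,du.\]
Sending $\epsilon \to 0^+$ and $M\to\infty$ and summing the two resulting identities gives the four terms in the claim (up to the sign on the first boundary term, which is harmless because in every application of this claim both boundary limits turn out to be zero).

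The main obstacle will be justifying that the four limits on the right-hand side exist simultaneously. I would handle this by splitting $f'$ into its positive and negative parts and applying monotone convergence to the resulting sign-constant pieces of the $f'$-integrals; finiteness of $\E[f(X)]$ (which is hypothesised) then forces these integrals to converge, which in turn forces the two boundary terms to converge to finite limits. In both invocations of this claim in the paper (inside the proofs of \cref{lem:A1} and \cref{lem:A2}), $f=f_p$ for some $-1<p<2$ and $X$ is the norm of a bounded or sub-Gaussian sum, so \cref{fact:limufu} additionally yields $\lim_{u\to 0^+}f(u)\Pr[X\leq u] = 0$ and the analogous tail limit also vanishes, reducing the identity to its cleaner boundary-term-free form $\E[f(X)] = -\int_0^a f'(u)\Pr[X\leq u]du + \int_a^\infty f'(u)\Pr[X\geq u]du$.
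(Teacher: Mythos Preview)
Your approach is essentially identical to the paper's: both split the integral at $a$, integrate by parts using $F(u)=\Pr[X\le u]$ on $(0,a]$ and $-(1-F(u))=-\Pr[X\ge u]$ on $[a,\infty)$, and use $f(a)=0$ to kill the interior boundary terms. You are somewhat more careful than the paper in working on compact subintervals $[\epsilon,a]$ and $[a,M]$ before passing to the limit, and in discussing why the limits exist. Your parenthetical remark about the sign of the boundary term at $0$ is in fact correct: the integration by parts on $[\epsilon,a]$ yields $-f(\epsilon)F(\epsilon)$, not $+f(\epsilon)F(\epsilon)$, so the stated identity (and the paper's own derivation) carries a sign slip in the $\lim_{u\to 0}f(u)\Pr[X\le u]$ term; as you observe, this is immaterial in the two places the claim is invoked because that limit is shown to vanish (via \cref{fact:limufu} and \cref{fact:chi2-bound} for the Gaussian, and is handled by the hypothesis that $\E[f(X)]$ exists for the general $X$).
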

\begin{proof}
Let $\mu:\R_{>0}\to \R$ denote the PDF of the random variable $X$.
    \begin{align*}
    % \left[\sum X_i\right]_f^f
    \E[f(X)] &= \int_0^\infty f(u) d\mu(u)\\
    &=\int_{0}^af(u)d\mu(u) + \int_a^\infty f(u)d\mu(u)\\
    &= f(u)\Pr[X\leq u]\Big|_0^a - \int_{I_1}f'(u)\Pr[X\leq u] - f(u)\Pr[X\geq u]\Big|_a^\infty + \int_{I_2}f'(u)\Pr[X\geq u]\\
    &= \lim_{u\to 0}f(u)\Pr[X\leq u] - \lim_{u\to \infty}f(u)\Pr[X\geq u]- \int_{I_1}f'(u)\Pr[X\leq u]+ \int_{I_2}f'(u)\Pr[X\geq u]\tag{$f(a)=0$.}\\
\end{align*}
\end{proof}
% \begin{fact}[\cite{dasgupta2003elementary}]
%     Let $t \sim \chi^2(k)$, where $\chi^2(k)$ is the chi-squared distribution with $k$ degrees of 
% \end{fact}
\begin{fact}\label{fact:chi2-bound}
    Let $k \geq 1$ and let $g\sim N(0, I_k/k)$. Then for all $0\leq u\leq 1$, $\Pr[\|g\|_{\ell_2}\leq u]\leq eu$.
\end{fact}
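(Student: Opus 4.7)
The plan is to bound $\Pr[\|g\|_{\ell_2}\leq u]$ by a Chernoff argument applied to the chi-squared distribution of $k\|g\|_{\ell_2}^2$, and then verify a short one-variable inequality. Concretely, since $g\sim N(0, I_k/k)$, the variable $k\|g\|_{\ell_2}^2$ is $\chi^2_k$-distributed and hence has a closed-form moment generating function $\E[e^{-a k\|g\|_{\ell_2}^2}] = (1+2a)^{-k/2}$ for $a\geq 0$. Applying Markov's inequality to $e^{-ak\|g\|_{\ell_2}^2}$,
\[
\Pr[\|g\|_{\ell_2}\leq u] = \Pr[e^{-ak\|g\|_{\ell_2}^2}\geq e^{-aku^2}] \leq e^{aku^2}(1+2a)^{-k/2}.
\]
Optimizing over $a>0$ (the minimizer is $a=(1-u^2)/(2u^2)$, which makes $1+2a=1/u^2$) collapses the RHS to the clean Chernoff-optimal bound $\Pr[\|g\|_{\ell_2}\leq u]\leq u^k e^{k(1-u^2)/2}$.

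It then suffices to show $u^{k-1}e^{k(1-u^2)/2}\leq e$ for $u\in(0,1]$, equivalently that the function $H(u):=(k-1)\log u + k/2 - ku^2/2$ satisfies $H(u)\leq 1$ on $(0,1]$. Differentiating, I would note that for $k\geq 2$ the unique critical point on $(0,1]$ is $u^*=\sqrt{(k-1)/k}$, which is a maximum by strict concavity (one checks $H''<0$). Plugging in gives $H(u^*) = \frac{k-1}{2}\log(1-1/k) + \frac{1}{2}\leq \frac{1}{2}$, using only the trivial bound $\log(1-1/k)\leq 0$. The $k=1$ case is immediate since $H(u) = \frac{1}{2} - \frac{u^2}{2}\leq \frac{1}{2}$ directly. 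Combining both steps yields the claim.

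No step is genuinely difficult; the entire proof is under half a page. The only delicate point is tuning the Chernoff parameter so that the bound takes the specific form $u^k e^{k(1-u^2)/2}$, after which the simple observation $\log(1-1/k)\leq 0$ does all the work. A naive alternative (such as bounding the CDF by the maximum of the density times $u$) would lose a $\Theta(\sqrt{k})$ factor, because the density of $\|g\|_{\ell_2}$ peaks near $r^*=\sqrt{(k-1)/k}$ with value $\Theta(\sqrt{k/\pi})$ and would therefore fail to produce a uniform $O(1)\cdot u$ bound as demanded by the Fact.
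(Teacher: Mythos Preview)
Your proof is correct and is essentially the same as the paper's. The paper invokes \cite[Lemma~2.2]{dasgupta2003elementary} to obtain the identical intermediate bound $\Pr[\|g\|_{\ell_2}\leq u]\leq (u^2 e^{1-u^2})^{k/2}=u^k e^{k(1-u^2)/2}$ (that lemma is itself proved by precisely the Chernoff/MGF computation you wrote out), and then asserts this is at most $eu$; you additionally supply the short verification of that last step.
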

\begin{proof}
    $k\cdot \|g\|_{\ell_2}^2$ is distributed as a Chi-squared random variable with $k$ degrees of freedom. In \cite[Lemma~2.2]{dasgupta2003elementary}, the authors show that
    \[
    \Pr[\|g\|_{\ell_2}\leq u] \leq (u^2 e^{1-u^2})^{k/2}\leq e\cdot u,
    \]
    as desired.
    % So, the p.d.f. of $\|g\|_{\ell_2}^2$ is
    % \[p(x) = \frac{k}{2^{k/2}\Gamma(k/2)}\cdot \left(kx\right)^{k/2-1}\cdot e^{-kx}\]
\end{proof}

With these facts in hand, we are ready to prove \cref{lem:A1}.

\begin{proof}[Proof of \cref{lem:A1}]

Define the random variable $X =  \|\sum_i X_i\|_{\ell_2}$. 
We split the domain of $f$ into $I_1 = f^{-1}([-\infty, 0])$ and $I_2 = f^{-1}([0,\infty])$, where $I_1 = [0,a]$ and $I_2 = [a,\infty]$. Since $0\in \overline{\range(f)}$, we have $f(a) = 0$. We use \cref{claim:expectation-integral-equality} to write the left hand side as
\begin{align*}
    % \left[\sum X_i\right]_f^f
    \E[f(X)] %&= \E\left[f\left(X\right)\right]\\
    % &= \int_{I_1}f(u)\Pr\left[X = u\right]du + \int_{I_2}f(u)\Pr\left[X = u\right]du\\
    % &= f(u)\Pr[X\leq u]\Big|_0^a - \int_{I_1}f'(u)\Pr[X\leq u] - f(u)\Pr[X\geq u]\Big|_a^\infty + \int_{I_2}f'(u)\Pr[X\geq u]\\
    &= - \int_{0}^a f'(u)\Pr[X\leq u]+ \int_a^\infty f'(u)\Pr[X\geq u] + \lim_{u\to 0}f(u)\Pr[X\leq u] - \lim_{u\to \infty}f(u)\Pr[X\geq u]\\
    &\leq - \int_0^a f'(u)\Pr[X\leq u]+ \int_a^\infty f'(u)\Pr[X\geq u].\tag{$f(0)\leq f(a) = 0$, $f(\infty)\geq f(a) = 0$.}
\end{align*}

We bound each of the above integrals separately.

\begin{claim}\label{claim:int1}
We have
    \[\int_{0}^a f'(u)\Pr[X\leq u] \geq  \int_0^a f'(u)\Pr[\|g\|_{\ell_2}\leq u] + e\int_0^{C_k\delta/e}  \min(0, f(u)).\]
\end{claim}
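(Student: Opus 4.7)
The plan is to split the integration range $[0,a]$ at the threshold $u_0 := C_k\delta/e$ and handle each piece by a complementary argument, relying on Berry-Esseen on the tail and a small-ball bound on the body. The threshold is chosen precisely so that the boundary contributions from the two pieces cancel exactly.

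On the tail $[u_0, a]$, I would apply the multivariate Berry-Esseen theorem (\cref{thm:berry-esseen}) after rescaling the $X_i$ by $\sqrt{k}$ so the total covariance becomes $I_k$. Using $\sum_i \|X_i\|_{\ell_2}^3 \leq \delta \sum_i \|X_i\|_{\ell_2}^2 = \delta$ and absorbing the resulting factor of $k^{3/2}$ into $C_k$ (since $k \leq 2$ is fixed), this yields $|\Pr[\|g\|_{\ell_2}\leq u] - \Pr[X\leq u]| \leq C_k\delta$. Combined with $f'\geq 0$ and $f(a)=0$, this gives
\[
\int_{u_0}^a f'(u)\bigl(\Pr[X\leq u]-\Pr[\|g\|_{\ell_2}\leq u]\bigr)\,du \;\geq\; -C_k\delta\cdot(f(a)-f(u_0)) \;=\; C_k\delta\cdot f(u_0).
\]

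On the body $[0, u_0]$, I would use the trivial bound $\Pr[X\leq u]\geq 0$ together with \cref{fact:chi2-bound}'s bound $\Pr[\|g\|_{\ell_2}\leq u]\leq eu$ (valid since $u_0 \leq 1$ for $\delta$ small enough), yielding $\Pr[X\leq u]-\Pr[\|g\|_{\ell_2}\leq u]\geq -eu$. Integration by parts gives $\int_0^{u_0} uf'(u)\,du = u_0f(u_0) - \int_0^{u_0} f(u)\,du$, where the boundary term at $0$ vanishes by \cref{fact:limufu} (whose hypothesis is satisfied because $f$ is $2$-concave and so at worst logarithmically singular near $0$, keeping $\int_0^{u_0} f(u)\,du$ finite). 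So the body contributes at least $-eu_0 f(u_0) + e\int_0^{u_0} f(u)\,du = -C_k\delta\cdot f(u_0) + e\int_0^{u_0} f(u)\,du$.

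Summing the two contributions, the boundary terms $\pm C_k\delta\cdot f(u_0)$ cancel exactly --- precisely the cancellation that pins down the choice $u_0 = C_k\delta/e$ --- leaving a remainder of $e\int_0^{u_0} f(u)\,du = e\int_0^{u_0} \min(0, f(u))\,du$, using $f\leq 0$ on $[0,a]$ (and $\min(0,f)=0$ on $[a,u_0]$ in the edge case $u_0 > a$). Rearranging the resulting inequality $\int_0^a f'(u)\Pr[X\leq u]\,du - \int_0^a f'(u)\Pr[\|g\|_{\ell_2}\leq u]\,du \geq e\int_0^{u_0}\min(0,f(u))\,du$ yields the claim. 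The main obstacle is engineering this cancellation cleanly: one must locate the split so that the Berry-Esseen error (a $\delta$-sized constant) matches the integrated linear Gaussian small-ball tail, and then verify $\lim_{u\to 0} uf(u)=0$ in order to legitimately run integration by parts against a potentially singular $f$.
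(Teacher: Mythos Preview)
Your proposal is correct and follows essentially the same approach as the paper: split at a threshold $b=\min(a,C_k\delta/e)$, drop the nonnegative body contribution of $\Pr[X\le u]$, apply Berry--Esseen on $[b,a]$, bound the Gaussian small-ball on $[0,b]$ by \cref{fact:chi2-bound}, integrate by parts using \cref{fact:limufu}, and observe that the choice of $b$ makes the boundary terms $-ebf(b)$ and $C_k\delta f(b)$ cancel. The only cosmetic differences are that the paper phrases the body step as ``drop $\int_0^b f'(u)\Pr[X\le u]$ then subtract and add the Gaussian integral'' rather than directly bounding the difference $\Pr[X\le u]-\Pr[\|g\|_{\ell_2}\le u]\ge -eu$, and it writes the threshold as $\min(a,C_k\delta/e)$ from the outset rather than treating $u_0>a$ as an edge case afterward.
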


\begin{claim}\label{claim:int2}
We have
    \[
 \int_{a}^\infty f'(u)\Pr[X\geq u] \leq \int_a^\infty f'(u)\Pr[|g|\geq u] + \delta\cdot \left(C_k\cdot \max(0, f(2\sqrt{\log(1/\delta)})) +2 f'(1)\right).
    \]
\end{claim}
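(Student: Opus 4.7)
\textbf{Proof proposal for \cref{claim:int2}.} The plan is to split the domain of integration at the threshold $T = 2\sqrt{\log(1/\delta)}$. On the interval $[a,T]$ we compare $\Pr[X\geq u]$ to $\Pr[\|g\|_{\ell_2}\geq u]$ using the multivariate Berry--Esseen theorem (\cref{thm:berry-esseen}), while on $[T,\infty)$ we control $\Pr[X \geq u]$ directly by the sub-Gaussian tail from Hoeffding (\cref{thm:mult-hoeff}) together with the derivative bound $f'(u)\leq f'(1)\cdot u$ coming from $2$-concavity (\cref{claim:2-concave}(2)). The key numerical coincidence is that our choice of $T$ makes the Gaussian tail $e^{-T^2/2}$ exactly equal to $\delta^2$, which is what absorbs into the final $O(\delta\cdot f'(1))$ term.

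The first step is Berry--Esseen. Rescaling $X_i\mapsto \sqrt{k}X_i$ so that the covariance of the sum becomes $I_k$ and then translating the resulting bound back (absorbing powers of $k$ into $C_k$) gives
\[
|\Pr[X\geq u] - \Pr[\|g\|_{\ell_2}\geq u]| \;\leq\; C_k\cdot \sum_i \E\|X_i\|_{\ell_2}^3.
\]
Using $\|X_i\|_{\ell_2}\leq \delta_i\leq \delta$ and $\sum_i \E\|X_i\|_{\ell_2}^2 = \operatorname{tr}(I_k/k) = 1$, the right hand side is at most $C_k\delta$. Integrating against $f'(u)$ over $[a,\min(T,a)\vee a\ldots T]$ and using $f(a)=0$ yields
\[
\int_a^{T} f'(u)\bigl(\Pr[X\geq u]-\Pr[\|g\|_{\ell_2}\geq u]\bigr)\,du \;\leq\; C_k\delta\bigl(f(T)-f(a)\bigr) \;\leq\; C_k\delta\cdot \max(0,f(T)),
\]
where the $\max$ takes care of the case $T\leq a$ (in which case the integration interval is empty and the left hand side is $\leq 0$ automatically).

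The second step handles the tail $[T,\infty)$. By \cref{thm:mult-hoeff} with $\sum_i\delta_i^2\leq 1$, we have $\Pr[X\geq u]\leq 2e^{-u^2/2}$ for every $u$. Since we will only use this for $u\geq T\geq 1$ (which holds once $\delta$ is sufficiently small; the remaining bounded range of $\delta$ can be absorbed into the constants by enlarging $C_k$), $2$-concavity gives $f'(u)\leq f'(1)\cdot u$. Hence
\[
\int_T^\infty f'(u)\Pr[X\geq u]\,du \;\leq\; 2f'(1)\int_T^\infty u\, e^{-u^2/2}\,du \;=\; 2f'(1)\,e^{-T^2/2} \;=\; 2f'(1)\delta^2 \;\leq\; 2f'(1)\delta.
\]
Adding the bound on $\int_T^\infty f'(u)\Pr[\|g\|_{\ell_2}\geq u]\,du\geq 0$ back into the RHS (to recover the full range $[a,\infty)$ for the Gaussian integral) and combining with the first step completes the proof.

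The main obstacle I expect is bookkeeping: handling the case $T<a$ (where the Berry--Esseen piece is vacuous) and the case where $a$ itself can equal $+\infty$ (e.g.\ when $f=f_p$ with $p<0$ and $f<0$ everywhere), together with the precise rescaling in Berry--Esseen for covariance $I_k/k$ rather than $I_k$. None of these is deep, but each requires a careful absorption of constants into $C_k$. Subject to that, the structure of the argument is exactly the split-and-combine outlined above, and the use of $\max(0,f(T))$ in the statement is precisely what makes the combined bound hold uniformly in the sign of $f(T)$.
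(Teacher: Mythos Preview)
Your proposal is correct and follows essentially the same approach as the paper: split at a threshold near $2\sqrt{\log(1/\delta)}$, use Berry--Esseen on the near part and Hoeffding plus the $2$-concavity bound $f'(u)\le f'(1)u$ on the far part. The only cosmetic difference is that the paper sets the cut point to $c=\max\bigl(a,\,2\sqrt{\log(1/\delta)}\bigr)$ from the outset, which makes the edge case $T<a$ disappear automatically (the near interval is then $[a,c]=[a,a]$ and $f(c)=0$), whereas you handle this case separately via the $\max(0,f(T))$ observation; both routes are fine.
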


We will complete the proof of \cref{lem:A1} using these two claims, and prove the claims later.

\begin{align*}
    \E[f(X)] &\leq - \int_0^a f'(u)\Pr[X\leq u]+ \int_a^\infty f'(u)\Pr[X\geq u]\\
    &\leq -\int_0^a f'(u)\Pr[\|g\|_{\ell_2}\leq u] - e\int_0^{C_k\delta/e}  \min(0, f(u)) \\
    &\qquad+\int_a^\infty f'(u)\Pr[|g|\geq u] + \delta\cdot \left(C_k\cdot f(2\sqrt{\log(1/\delta)}) +2 f'(1)\right)\\
    &=\E[f(\|g\|_{\ell_2})] - \int_0^{ C_k\delta}  \min(0, f(u)) + \delta\cdot \left(C_k\cdot \max(0, f(2\sqrt{\log(1/\delta)})) +2 f'(1)\right).
\end{align*}

Here, the last equality is by applying \cref{claim:expectation-integral-equality} on the random variable $\|g\|_{\ell_2}$:
\begin{align*}
    \E[\|g\|_{\ell_2}]= &-\int_0^a f'(u)\Pr[\|g\|_{\ell_2}\leq u] + \int_a^\infty f'(u)\Pr[|g|\geq u]\\
     &+ \lim_{u\to 0}f(u)\Pr[\|g\|_{\ell_2}\leq u] - \lim_{u\to \infty}f(u)\Pr[\|g\|_{\ell_2}\geq u]
\end{align*}

Observe that by \cref{fact:limufu}, the third term above is zero, and by \cref{claim:2-concave}, \[\lim_{u\to \infty}f(u)\Pr[\|g\|_{\ell_2}\geq u]\leq \lim_{u\to \infty}(f'(1)\cdot u^2+f(1))\cdot \Pr[\|g\|_{\ell_2}\geq u] = 0.\]

This completes the justification of the last inequality. It remains to prove \cref{claim:int1} and \cref{claim:int2}. We begin by writing an inequality which will be used in both proofs. By \cref{thm:berry-esseen}, for any $u\geq 0$,

\begin{align}
    |\Pr[X\leq u]-\Pr[\|g\|_{\ell_2}\leq u]|&\leq C_k\cdot \E_{i \in [n]} [\|X_i\|_{\ell_2}^3]\nonumber\\
    &\leq C_k\cdot \sum_{i \in [n]} [\|X_i\|_{\ell_2}^2] \cdot \max_{i \in [n]}\|X_i\|_{\ell_2}\nonumber \\
    &\leq C_k\cdot \delta.\label{eq:berry-esseen-error}
\end{align}

\begin{proof}[Proof of \cref{claim:int1}]
    For a parameter $b$ with $0\leq b\leq a$, we write
\begin{align*}
    \int_{I_1}f'(u)\Pr[X\leq u]&\geq \int_b^a f'(u)\Pr[X\leq u]\tag{$f'(u)\geq 0$}\\
    &\geq \int_b^a f'(u) (\Pr[\|g\|_{\ell_2}\leq u] - C_k \delta)\tag{\cref{eq:berry-esseen-error}}\\
    &=\int_0^a f'(u)\Pr[\|g\|_{\ell_2}\leq u]-\int_0^b f'(u)\Pr[\|g\|_{\ell_2}\leq u] - C_k\delta (f(a)-f(b)) \\
    &\geq \int_0^a f'(u)\Pr[\|g\|_{\ell_2}\leq u] - e\int_0^b f'(u) u +C_k\delta f(b) \tag{using $f(a)=0$, and \cref{fact:chi2-bound}}\\
    &= \int_0^a f'(u)\Pr[\|g\|_{\ell_2}\leq u] - euf(u) \Big|_0^b + e\int_0^b f(u) + C_k  \delta f(b) \\
    &= \int_0^a f'(u)\Pr[\|g\|_{\ell_2}\leq u] - ebf(b) + e\int_0^b f(u) + C_k  \delta f(b) \tag{\cref{fact:limufu}}\\
    &= \int_0^a f'(u)\Pr[\|g\|_{\ell_2}\leq u] + e\int_0^{\min(a, C_k\delta/e)}  f(u)\tag{Setting $b=\min(a, C_k\delta/e)$}
    \label{eq:I1integral} \\
    &= \int_0^a f'(u)\Pr[\|g\|_{\ell_2}\leq u] + e\int_0^{C_k\delta/e}  \min(f(u), 0).
\end{align*}

The penultimate equality is because if $\min(a, C_k\delta/e)=a$, then $f(b) = 0$, and if $\min(a, C_k\delta/e)=C_k\delta/e$, then $bf(b) = C_k\delta f(b)/e$.

\end{proof}

\begin{proof}[Proof of \cref{claim:int2}]
    Applying \cref{thm:mult-hoeff} on $X$, 
    \begin{equation}\label{eq:hoeff}
        \Pr[X\geq u]\leq 2\cdot e^{-u^2/2\sum \delta_i^2} \leq 2\cdot e^{-u^2/2}.
    \end{equation}

    For $\max(a, 1) <c<\infty$ we write
\begin{align}
    \int_{I_2}f'(u)\Pr[X\geq u] &= \int_{a}^c f'(u)\Pr[X\geq u] + \int_{c}^\infty f'(u)\Pr[X\geq u]\nonumber \\
    &\leq \int_a^c f'(u)(\Pr[|g|\geq u]+C_k\delta) + \int_c^\infty 2f'(u)e^{-u^2/2}\nonumber \tag{\cref{eq:berry-esseen-error}, \cref{eq:hoeff}}\\
    &\leq \int_a^c f'(u)\Pr[|g|\geq u] +  C_k\delta\cdot (f(c)-f(a)) +2 f'(1) \cdot \int_c^\infty  ue^{-u^2/2} \tag{using $f$ is 2-concave, \cref{claim:2-concave}, and $c\geq 1$}\nonumber \\
    &\leq \int_a^\infty f'(u)\Pr[|g|\geq u] + C_k\delta\cdot f(c) +2 f'(1) \cdot \int_c^\infty ue^{-u^2/2}\tag{$f$ is increasing, $f(a)=0$}\nonumber \\
    &= \int_a^\infty f'(u)\Pr[|g|\geq u] + C_k\delta\cdot f(c) +2 f'(1) \cdot e^{-c^2/2}%\\
    %&=  \int_a^c f'(u)\Pr[|g|\geq u]+2\delta f(c) + \frac{2'f(c)}{c} 
\end{align}

Setting $c = \max(a, 2\sqrt{\log(1/\delta)})$, we get
\begin{equation}\label{eq:I2integral}
    \int_{I_2}f'(u)\Pr[X\geq u] \leq \int_a^\infty f'(u)\Pr[|g|\geq u] + \delta\cdot \left(C_k\cdot \max(0, f(2\sqrt{\log(1/\delta)})) +2 f'(1)\right).
\end{equation}

The inequality above is by bounding $e^{-c^2/2}\leq \delta$, and if $\max(a, 2\sqrt{\log(1/\delta)}) = a$, then $f(c)=0$.
\end{proof}

\end{proof}

\section{Algorithm for approximating $2 \to q$ norm}\label{appendix:alg}

Let $q< 2$. Given a matrix $A\in \F^{n\times d}$ with rows $\{a_i\}_{i\in [n]}$, we write an expression for $\|A\|_{2\to q}$:
\[\|A\|_{2\to q} = f_q^{-1}\left(\max_{x\in \F^d, \|x\|_2 = 1}\left[\sum_{i\in [n]}f_q\left(|\langle a_i, x\rangle|\right)\right]\right) = f_q^{-1}\left(\max_{x\in \F^d, \|x\|_2 = 1}\left[\sum_{i\in [n]}f_q\left(\sqrt{a_i^\dag x x^\dag a_i}\right)\right]\right).\]

Notice that $xx^\dag$ is a rank-$1$ PSD matrix with $\tr(xx^\dag) = x^\dag x = \|x\|_{\ell_2} = d\cdot \|x\|_2 = d$. We can relax the rank $1$ constraint to obtain a relaxation of $\|A\|_{2\to q}$ which we denote by $\SDP_{2\to q}(A)$:

\[\SDP_{2\to q}(A) := f_q^{-1}\left(\max_{X\succeq 0, \tr(X) = d}\left[\sum_{i\in [n]}{f_q\left(\sqrt{a_i^\dag X a_i}\right)}\right]\right),\]
where the maximum is taken over all matrices in $\F^{d\times d}$. Note that the objective function being maximized is concave for all $q\leq 2$ because the function $X\to a_i^\dag X a_i$ is linear and $f_q$ is $2$-concave.
\begin{lemma}\label{lem:alg-tight}
    For any matrix $A\in \F^{n\times d}$ and any $-1<q\leq 2$,
    \[\|A\|_{2\to q}\leq \SDP_{2\to q}(A)\leq \gamma_{\F, q}^{-1}\cdot \|A\|_{2\to q}.\]
\end{lemma}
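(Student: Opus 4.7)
The plan is to handle the two inequalities separately. The left inequality $\|A\|_{2\to q} \leq \SDP_{2\to q}(A)$ follows by observing that the SDP is a relaxation of the $2\to q$ norm problem: given any unit vector $x \in \F^d$ with $\|x\|_2 = 1$ (i.e., $\|x\|_{\ell_2}^2 = d$), the rank-one matrix $X = xx^\dag$ is PSD with $\tr(X) = d$ and hence feasible; and at this $X$, $\sqrt{a_i^\dag X a_i} = |\langle a_i, x\rangle|$, so the SDP objective evaluates to $\|Ax\|_q$. Maximizing over $x$ yields the bound.

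For the right inequality I will use Gaussian rounding. Let $X^*$ be a maximizer of $\SDP_{2\to q}(A)$ and sample $x \sim \F\N(0, X^*)$. Two facts drive the analysis: (i) each marginal $\langle a_i, x\rangle$ is distributed as $\sqrt{a_i^\dag X^* a_i}\cdot g_i$ for $g_i \sim \F\N(0,1)$, so by $1$-homogeneity of $[\cdot]_{f_q}$ (\cref{lem:homogeneous}) we have $\E f_q(|\langle a_i, x\rangle|) = f_q\bigl(\gamma_{\F,q}\sqrt{a_i^\dag X^* a_i}\bigr)$; and (ii) $\E \|x\|_2^2 = \tr(X^*)/d = 1$. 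Aggregating (i) over $i \in [n]$ and applying $f_q^{-1}$ gives $\E \|Ax\|_q^q = \gamma_{\F,q}^q \cdot \SDP_{2\to q}(A)^q$ for $q \ne 0$.

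Passing from expectation bounds to a bound on $\|A\|_{2\to q}$ uses that $h(x) := \|Ax\|_q^q/\|x\|_2^q$ is $0$-homogeneous and bounded by $\|A\|_{2\to q}^q$, with the inequality flipped for $q < 0$ since raising to a negative power reverses order. This yields $\E \|Ax\|_q^q \leq \|A\|_{2\to q}^q \cdot \E \|x\|_2^q$ for $q > 0$ and the reversed inequality for $q < 0$. I then apply Jensen to $y \mapsto y^{q/2}$, which is concave for $0 < q \leq 2$ and convex for $q < 0$, to obtain $\E \|x\|_2^q \leq 1$ for $q > 0$ and $\E \|x\|_2^q \geq 1$ for $q < 0$. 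In both regimes the direction of the Jensen bound cancels the flip from homogeneity, so rearranging and taking a $q$-th root gives the uniform conclusion $\|A\|_{2\to q} \geq \gamma_{\F,q}\cdot \SDP_{2\to q}(A)$. The borderline case $q=0$ is handled either by continuity or directly by a log-argument: $\E \log|\langle a_i, x\rangle| = \log\sqrt{a_i^\dag X^* a_i} + \log \gamma_{\F,0}$ via \cref{fact:gamma-values}, and $\E \log \|x\|_2 \leq \tfrac{1}{2}\log \E \|x\|_2^2 = 0$ by Jensen on the concave $\log$.

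The main obstacle is purely bookkeeping: the direction of the homogeneity inequality and of Jensen's inequality each depend on the sign of $q$, and one must verify that they conspire to produce a single uniform statement across $-1 < q \leq 2$. No ingredient beyond standard SDP rounding and Jensen's inequality is needed, and no route through the more delicate machinery of $f$-means (as in the hardness proof) is required.
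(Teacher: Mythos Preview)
Your argument is correct and uses the same Gaussian rounding idea as the paper, but the bookkeeping is organized differently. The paper works uniformly with \emph{second} moments: it bounds $\|A\|_{2\to q}^2 \geq \E\|Ax\|_q^2 / \E\|x\|_2^2$, computes the denominator $\E\|x\|_2^2 = 1$ exactly, and then applies Jensen once to the convex function $y \mapsto f_q^{-1}(y)^2$ to lower-bound the numerator by $\gamma_{\F,q}^2\cdot\SDP_{2\to q}(A)^2$. This handles all $-1<q\leq 2$ (including $q=0$) in one stroke with no case split. You instead work with $q$-th moments, which lets you compute $\E\|Ax\|_q^q$ \emph{exactly} (no Jensen needed there) but forces a case analysis on $\mathrm{sign}(q)$ when bounding $\E\|x\|_2^q$ and when taking the $q$-th root, plus a separate treatment of $q=0$. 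Both routes are equally valid; the paper's is tidier, yours is perhaps more transparent about where the $\gamma_{\F,q}$ factor originates.
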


\begin{proof}
    The first inequality is because $\SDP_{2\to q}(A)$ is a relaxation of $\|A\|_{2\to q}$. For the second inequality, we describe a rounding procedure for the relaxation.

    Let $X^*\in \F^{d\times d}$ be the optimal solution to $\SDP_{2\to q}(A)$. We sample a vector $x\sim \F\N(0,X^*)$. Clearly,

    \begin{equation}
        \label{eqn:alg-max}\|A\|_{2\to q}^2 = \max_{x\in \F^n, x\neq 0}\frac{\|Ax\|_q^2}{\|x\|_2^2}\geq \frac{\E \|Ax\|_q^2}{\E \|x\|_2^2}.
    \end{equation}

    First we calculate the denominator of the right hand side: $\E \|x\|_2^2 = \frac{1}{d}\cdot \tr(X) = 1$. For the numerator, we bound

    \begin{align*}
        \E \|Ax\|_q^2 &= \E \left(f_q^{-1}\left(\sum_{i\in [n]}f_q(|\langle a_i, x\rangle|)\right)^2\right)\\
        &\geq f^{-1}_q\left(\sum_{i\in [n]}\E f_q(|\langle a_i, x\rangle |)\right)^2\tag{Jensen's inequality, and $x\to f^{-1}_q(x)^2$ is convex}\\
        &= f^{-1}_q\left(\sum_{i\in [n]}\E_{g\sim \F\N(0, 1)}f_q\left(\sqrt{a_i^\dag X^* a_i}\cdot |g|\right)\right)^2\tag{$\langle a_i, x\rangle$ is a Gaussian with variance $a_i^\dag X^* a_i$}\\
        &= f_q^{-1}\left(\E_{g\sim \F\N(0, 1)} f_q(|g|)\right)^2\cdot f^{-1}_q\left(\sum_{i\in [n]}f_q\left(\sqrt{a_i^\dag X^* a_i}\right)\right)^2\tag{Homogeneity}\\
        &= \gamma_{\F, q}^2\cdot \SDP_{2\to q}(A)^2.
    \end{align*}

    Together with \cref{eqn:alg-max}, this implies $\|A\|_{2\to q}\geq \gamma_{\F, q}\cdot \SDP_{2\to q}(A)$, completing the proof.
    
\end{proof}

\end{document}